\documentclass[paper=a4, fontsize=11pt]{scrartcl}
\usepackage[T1]{fontenc}
\usepackage{fourier}
\usepackage[english]{babel} 
\usepackage{amssymb}
\usepackage{amsmath, amsfonts, amsthm} 
\usepackage{mathrsfs}
\usepackage{graphicx}
\usepackage{url}
\usepackage[title, titletoc, page]{appendix} 
\usepackage{listings} 
\usepackage[table,xcdraw]{xcolor}
\usepackage{comment}
\usepackage{caption}
\usepackage{subcaption}
\usepackage[ruled,vlined]{algorithm2e}
\usepackage{todonotes}
\usepackage{sectsty}
\allsectionsfont{\centering \normalfont\scshape}
\usepackage{float}
\usepackage{placeins}
\usepackage{tikz}
\usetikzlibrary{decorations.pathreplacing,calc}
\usepackage{cite} 
\usepackage{bm}
\usepackage{thmtools}

\usepackage{fancyhdr}
\numberwithin{equation}{section}		
\numberwithin{figure}{section}			
\numberwithin{table}{section}				

\newtheorem{thm}{Theorem}[section]
\newtheorem{defi}[thm]{Definition}
\newtheorem{lem}[thm]{Lemma}

\newtheorem{cor}[thm]{Corollary}
\newtheorem{remark}{Remark}[section]

\newtheorem{prop}[thm]{Proposition}

\newcommand{\vect}[1]{\boldsymbol{\mathbf{#1}}}

\newcommand{\leqs}{\leqslant}

\newcommand{\bluebox}[1]{{\setlength{\fboxsep}{1pt}\setlength{\fboxrule}{0.5pt}\fcolorbox{blue}{white}{\ensuremath{#1}}}}

\title{Generalized Hankel/Toeplitz matrix for array signal processing}

\author{Wenchong Huang\footnote{
  School of Mathematical Sciences, Zhejiang University, Hangzhou, 310027, China. (vongvuncung@zju.edu.cn, kpli@zju.edu.cn, pingliu@zju.edu.cn)} \and Kunpeng Li \footnotemark[1]
\and Ping Liu \footnotemark[1]  \thanks{
Institute of
Fundamental and Transdisciplinary Research, Zhejiang University, Hangzhou, 310027, China.}
}

\begin{document}
\maketitle
\begin{abstract}

In this paper, we introduce generalized Hankel/Toeplitz matrices (GHM/GTM) and the associated generalized Vandermonde decomposition for nonuniform array signal processing and multi-dimensional super-resolution. The proposed framework was discovered from the study of resolution limit theory and extends the classical Hankel/Toeplitz structure by allowing substantially more flexible sampling geometries while preserving the underlying low-rank Vandermonde factorization. Through devising an optimal algorithm based on this GHM framework, we derive the state-of-the-art upper bound estimate for the computational resolution limit (CRL) of source-number detection in general $d$-dimensional super-resolution problems. For segmented sampling sets, whose geometry is closely related to sparse and distributed arrays, we establish deterministic lower bounds for the minimum singular values of the associated generalized Vandermonde matrices and derive corresponding stability and number-detection guarantees for multi-clump source configurations. To address the computational bottleneck of conventional multi-level Hankel constructions in high dimensions, we further introduce randomized GHM constructions whose matrix dimensions scale with the effective degrees of freedom rather than with the full tensor-product grid, together with deterministic recovery guarantees conditional on the realized Vandermonde factors. We also extend the framework to source localization by developing GHM-based MUSIC algorithms for nonuniform measurements, with stability characterized through the conditioning of the generalized Vandermonde factors. Numerical experiments on synthetic data demonstrate that the proposed GHM-based methods achieve competitive resolution and recovery accuracy while substantially reducing matrix size and computational cost, especially in high-dimensional settings.

\end{abstract}

    

\section{Introduction}
Resolving point sources from band-limited measurements is a fundamental problem arising in imaging, line spectral estimation, array processing, radar, sonar, seismology, and astronomy
\cite{donoho1992superresolution,candes2014towards,tang2013compressed,kay1988modern}.
In these problems, the unknown object is commonly modeled as a discrete measure
\[
    \mu=\sum_{j=1}^{n}a_j\delta_{\mathbf y_j},
\]
and the goal is to recover the number of sources, their locations, and their amplitudes from noisy band-limited measurements.
Over the past several decades, significant progress has been made in the mathematical theory and computational methods of super-resolution
\cite{liu2021mathematical}.
Nevertheless, much of the existing research in applied mathematics, both on resolution-limit theory and on super-resolution algorithms, has focused on uniformly sampled Fourier measurements; see Section~\ref{sec:srlimitandalgorithm}.
In comparison, super-resolution from incomplete or nonuniform Fourier measurements, which naturally arises from sparse, nonuniform, and distributed array configurations, has received considerably less attention.
This paper is concerned with the recovery of the number and locations of point sources from such incomplete and nonuniform Fourier measurements, with particular emphasis on multi-dimensional and high-resolution regimes; the connection with nonuniform array signal processing is discussed in Section~\ref{sec:worknonuniformarray}.
At the center of our study is a generalized Hankel/Toeplitz framework, which arises naturally from our investigation of the computational resolution limits of multi-dimensional super-resolution and provides the structural foundation for the theoretical analysis and algorithms developed in this work.

\subsection{resolution limit and super-resolution algorithms}\label{sec:srlimitandalgorithm}

Rayleigh's classical resolution criterion compares source separation with the width of the diffraction pattern \cite{rayleigh1879xxxi}. Under the Fourier normalization in \eqref{equ:modelsetting1}, the corresponding Rayleigh length is of order \(\pi/\Omega\), where \(\Omega\) is the cutoff frequency. Although the Rayleigh criterion plays an important role in imaging and signal processing, it is a heuristic, model-dependent criterion rather than a theorem about computational recovery. In particular, it does not quantify what can be recovered after algorithmic processing
\cite{papoulis1979improvement,den1997resolution,liu2021theory}.
Therefore, a rigorous mathematical definition of resolution should take into account not only the cutoff frequency, but also the noise level, the signal strength and the sparsity of the sources.

This observation motivates the concept of the \emph{computational resolution limit} (CRL). Roughly speaking, the CRL is the minimum separation distance between point sources such that a given recovery task is possible under a prescribed noise level \cite{liu2021theory,liu2022mathematical,liu2021mathematical}. In super-resolution, two related but different tasks should be distinguished. The first is \emph{number detection}, namely recovering the exact number of sources. The second is \emph{location recovery}, or support recovery, namely stably reconstructing the source locations. These two tasks have different mathematical nature and usually require different separation distances. In the one-dimensional theory, the CRL for number detection is of order
\(   \frac{1}{\Omega}
    \left(\frac{\sigma}{m_{\min}}\right)^{\frac{1}{2n-2}}\),
whereas the CRL for support recovery is of order
\(\frac{1}{\Omega}
    \left(\frac{\sigma}{m_{\min}}\right)^{\frac{1}{2n-1}}\). Here \(\sigma\) denotes the noise level and \(m_{\min}\) denotes the minimum source amplitude. These estimates were later extended to multi-dimensional spaces \cite{liu2021mathematicalhighd, liu2024improved}.


The mathematical theory of super-resolution has been developed from several different perspectives. Donoho first studied the resolution problem from the viewpoint of optimal recovery and showed that the instability of super-resolution is governed by noise amplification and sparsity
\cite{donoho1992superresolution}. Subsequent works obtained sharper minimax estimates by analyzing the conditioning of the associated measurement matrices
\cite{demanet2015recoverability,batenkov2020conditioning,li2021stable}.
In the off-the-grid setting, sharp phase-transition phenomena and minimax rates have been derived using extremal functions, Prony-type systems and quantitative singularity theory
\cite{Moitra:2015:SEF:2746539.2746561,akinshin2015accuracy,batenkov2019super}.
These works reveal that the super-resolution factor, the signal-to-noise ratio and the number of clustered sources jointly determine the difficulty of the inverse problem.

There is also a large body of work on super-resolution algorithms. One important class consists of sparsity-promoting convex optimization methods, including total-variation minimization, BLASSO and atomic norm minimization
\cite{candes2014towards,candes2013super,azais2015spike,duval2015exact,tang2013compressed,tang2014near,chi2020harnessing}.
These methods can provably recover off-the-grid sources under suitable minimum separation or non-degeneracy conditions. However, for general signed or complex sources, the required separation is often several Rayleigh lengths, which limits their applicability in the strongly clustered super-resolution regime. A convex algorithm was proposed to resolve tightly clustered sources \cite{yang2024separation}, but the stability guarantee remains unavailable.


Another important class consists of parametric and subspace methods, including Prony's method, MUSIC, ESPRIT and the Matrix Pencil method
\cite{Prony-1795,schmidt1986multiple,stoica1989music,roy1989esprit,hua1990matrix,hua1991svd}.
These methods often exhibit favourable numerical performance in the sub-Rayleigh regime. Nevertheless, their stability properties in the non-asymptotic regime are still not fully understood. Moreover, most of these algorithms require a priori information about the model order, namely the number of sources, and their performance may depend sensitively on this number
\cite{hansen2018superfast,liao2016music,li2021stable,li2020super}.
This makes the number detection problem not merely a preliminary step, but a central theoretical and computational problem in super-resolution.


\subsection{Non-uniform and distributed arrays}\label{sec:worknonuniformarray}

The standard array model in line spectral estimation and direction-of-arrival estimation is the uniform linear array (ULA). In this setting, the sensor locations are equally spaced, and the corresponding sensing matrix has the classical Vandermonde structure with consecutive row indices. This consecutive-index structure is one of the main reasons why classical subspace methods, such as MUSIC, ESPRIT and the Matrix Pencil method, admit efficient implementations and clean algebraic interpretations
\cite{schmidt1986multiple,roy1989esprit,hua1990matrix}.

However, ULAs are not always desirable or feasible in practical sensing systems. Increasing the aperture of a fully populated ULA usually requires a proportional increase in the number of sensors, which leads to higher hardware cost, stronger mutual coupling, and larger computational burden. This motivates the use of sparse and non-uniform arrays, which aim to achieve comparable aperture or resolution with significantly fewer physical sensors
\cite{pal2010nested,vaidyanathan2011sparsesamplers}.
Typical examples include nested arrays \cite{pal2010nested}, coprime arrays \cite{vaidyanathan2011sparsesamplers}, minimum-redundancy arrays \cite{moffet1968minimum}, and their variants. A central mechanism behind many of these designs is the \emph{coarray}, namely a virtual array generated by sums or differences of physical sensor locations. The coarray determines the effective aperture, the spatial degrees of freedom, and the number of sources that can be resolved by the array. Under suitable assumptions, sparse arrays can identify up to \(O(P^2)\) uncorrelated sources using only \(P\) physical sensors
\cite{pal2010nested,vaidyanathan2011sparsesamplers,wang2017,koochakzadeh2016cramerrao,liu2017cramerrao}.

Another important class of non-uniform measurements is provided by distributed arrays
\cite{heimiller1983distributed,1982Distributed}.
A typical distributed array consists of several locally uniform subarrays separated by relatively large gaps. The measurements are chosen in the following form:
\begin{equation}\label{eq:distributed-array-measurements}
\begin{array}{llcl}
\vect{Y}(D_1),&\vect{Y}(D_1+d),&\cdots,&\vect{Y}(D_1+m_1d),\\
\vect{Y}(D_2),&\vect{Y}(D_2+d),&\cdots,&\vect{Y}(D_2+m_2d),\\
& & \vdots &\\
\vect{Y}(D_k),&\vect{Y}(D_k+d),&\cdots,&\vect{Y}(D_k+m_kd).
\end{array}
\end{equation}
Here \(d\) is the inner spacing of each local uniform subarray, \(D_i\) is the initial position of the \(i\)-th subarray, and \(m_i+1\) is the number of measurements in that subarray. Thus, the measurements are locally uniform inside each block, but the full measurement set is globally non-uniform.

To recover source locations from distributed-array measurements, a collection of methods based on the \emph{virtual interpolated array} (VIA) has been developed
\cite{1992Performance,1992Direction,gavish1993viaesprit,Weiss1993Performance,Gershman1997A,Weiss1995Direction,li_preestimation-based_2009}.
The main idea is to choose a list of virtual uniform measurement positions
\[
    \widehat{\eta}_i=(i-1)d,\qquad i=1,\ldots,h_V,
\]
and then compute the virtual measurements from the actual non-uniform measurements. After this interpolation step, one can apply algorithms originally developed for uniform arrays to the virtual measurements. In this way, the distributed-array problem is transformed into a ULA-type problem. This strategy is useful and has been widely studied, but the interpolation step may introduce modelling errors, noise amplification and additional computational cost.

Multiple-invariance ESPRIT combines several array invariances within one subspace formulation \cite{swindlehurst_multiple_1992}. Related multiscale methods resolve spatial ambiguity by combining a high-resolution ambiguous estimate with a coarse unambiguous estimate \cite{WongKT1998Direction,Vasylyshyn2004Closed}. A typical example is the double-resolution ESPRIT-type method, where the invariance property between widely spaced subarrays gives a high-resolution estimate with periodic ambiguity, while the invariance property inside a short-spacing subarray gives a coarse but unambiguous estimate. The coarse estimate is then used to determine the correct ambiguity branch.

Many of the distributed-array methods reviewed above form covariance or correlation matrices and therefore require multiple snapshots or assumptions that permit covariance estimation. This limits their direct use in single-measurement or coherent-source settings. With one measurement vector, standard data-matrix formulations of MUSIC and ESPRIT often use Hankel or Toeplitz liftings, whose usual construction requires uniform measurement positions. Thus, the reviewed constructions commonly rely either on covariance or correlation data or on uniform positions for these liftings.

The present paper takes a different viewpoint within the band-query model of \eqref{equ:modelsetting1}. We prescribe non-uniform Fourier frequencies directly and organize the queried values into matrices whose Vandermonde factors have rows selected from separated consecutive blocks or random indices. This enables us to develop efficient super-resolution algorithms for single-measurement or coherent-source settings.




\subsection{Model setting and our contribution}
Let us introduce the mathematical model for the multi-dimensional super-resolution problem. Fix $d,n\in\mathbb N$, a cutoff frequency $\Omega>0$, and a noise level $\sigma>0$. We model the collection of point sources as a discrete measure $\mu = \sum_{j=1}^{n} a_j \delta_{\mathbf{y}_j}$, where $\mathbf{y}_j \in\mathbb{R}^d$ ($j=1, \ldots, n$) represents the location of the $j$-th point source and $a_j \in\mathbb{C}$ denotes its complex amplitude. Throughout the paper, every atomic representation is assumed to be reduced: its nodes are pairwise distinct and all its amplitudes are nonzero. Such a representation is unique up to ordering, so $n=|\operatorname{supp}\mu|$ and the minimum amplitude $m_{\min} = \min_{1 \le j \le n} |a_j|$ are intrinsic to $\mu$. The same convention applies to every discrete measure introduced below, including admissible measures. We will use support recovery as a substitute for location reconstruction throughout the paper.

The measurement consists of the noisy Fourier data of $\mu$ within a bounded frequency band:
\begin{equation}\label{equ:modelsetting1}
\mathbf{Y}(\boldsymbol{\omega}) = \mathcal{F}[\mu](\boldsymbol{\omega}) + \mathbf{W}(\boldsymbol{\omega}) = \sum_{j=1}^{n} a_j e^{i \mathbf{y}_j \cdot \boldsymbol{\omega}} + \mathbf{W}(\boldsymbol{\omega}), \quad \boldsymbol{\omega} \in [-\Omega, \Omega]^d,
\end{equation}
where $\mathbf W:[-\Omega,\Omega]^d\to\mathbb C$ is the noise bounded by $|\mathbf{W}(\boldsymbol{\omega})| < \sigma, \ \boldsymbol{\omega} \in [-\Omega, \Omega]^d$. The inverse problem is then to reover the source number, locations and amplitudes from sampled measurments $\vect Y(\omega_j)$'s. 




Based on this model, we have several contributions in this paper summarised as follows. 

\begin{enumerate}
    \item \textbf{A generalized Hankel/Toeplitz framework for nonuniform Fourier sampling.}
    We introduce \emph{generalized Hankel/Toeplitz matrices} and \emph{generalized Vandermonde decomposition} for array signal processing and super-resolution problems. This framework organizes nonuniform Fourier sampling constructions in a unified way, and reduces the matrix dimensions to scale with the effective degrees of freedom, i.e., the number of sources and the spatial dimensionality.

\item \textbf{A state-of-the-art multi-dimensional computational resolution limit for source-number detection.}
Exploiting the flexibility of the GHM framework, we design a deterministic source-number detection scheme and derive the currently sharpest explicit upper bound for the computational resolution limit (CRL) in general $d$-dimensional super-resolution problems.

  \item \textbf{Deterministic minimum-singular-value and thresholding guarantees for segmented sampling sets.}
For segmented frequency sets, whose index geometry is closely related to that of sparse and distributed arrays, we establish deterministic lower bounds for the minimum singular values of the associated generalized Vandermonde matrices (Theorem~\ref{thm:segmented-vandermonde}) and translate these bounds into explicit singular-value-thresholding guarantees for source-number detection (Theorem~\ref{thm:segmented_threshold}). Despite the incomplete and nonuniform nature of the measurements, our resolution analysis shows that the resulting GHM-based algorithms attain the optimal resolution order for sources distributed in several well-separated clusters. In particular, the resolution exponent is governed by the size of the largest local cluster rather than by the total number of sources, demonstrating that the proposed segmented constructions preserve the essential super-resolution capability of fully sampled schemes while using substantially fewer measurements.

    \item \textbf{Randomized GHM constructions that break the curse of dimensionality.}
We develop randomized GHM constructions whose matrix dimensions are governed by the intrinsic problem size, essentially the number of sources and the spatial dimension, rather than by the exponentially growing full tensor grid. This substantially reduces both storage and SVD complexity in high-dimensional super-resolution while preserving the generalized Vandermonde structure required for source-number detection and localization.


    \item \textbf{GHM-based MUSIC for location recovery.}
    Based on the GHM factorization, we extend MUSIC to nonuniform measurements using the row-frequency steering map and a partial SVD, and show that its stability under noise is controlled by the minimum singular values of the Vandermonde factors. We also discuss a coarse-to-local peak-search strategy and its computational cost, and evaluate segmented and random-array constructions numerically.

\end{enumerate}

    Taken together, these results provide a unified theoretical and computational framework connecting sampling geometry, Vandermonde conditioning, computational resolution, source-number detection, and subspace localization. They further reveal that the fundamental structure behind classical Hankel/Toeplitz-based methods is the underlying Vandermonde decomposition, rather than the specific Hankel/Toeplitz arrangement. By decoupling the sampling geometry from this rigid matrix structure, the proposed framework enables more flexible deterministic and randomized sensing designs, reduces the computational burden in high dimensions, and offers a general pathway for extending classical structured-matrix algorithms to sparse, segmented, nonuniform, and distributed measurement settings.

\subsection{Organization}

The remainder of the paper is organized as follows. Section \ref{sec:generalizedhankeltoeplitz} introduces the generalized Hankel/Toeplitz matrices and the generalized Vandermonde decomposition. Section~\ref{sec:number-detection} studies number detection in multi-dimensional super-resolution problems. It first defines the CRL and establishes an explicit upper bound under a sufficient separation condition (Subsection~\ref{subsec:crl-number}), and then develops GHM-based number detection algorithms (Subsection~\ref{subsec:uniform array}). Section~\ref{sec:nonuniform-arrays} treats nonuniform arrays and faster number detection algorithms. Subsection~\ref{subsec:equally_distributed} introduces segmented sampling sets, establishes minimum-singular-value lower bounds and thresholding guarantees for the associated Vandermonde matrices, and derives fast number detection algorithms; Subsection~\ref{subsec:rand-numdetect} presents randomized GHM constructions; numerical experiments are reported at the end of the section. Section~\ref{sec:location-recovery} turns to location recovery. It presents a unified MUSIC framework for classical and generalized Hankel matrices and discusses its implementation and cost (Subsection~\ref{subsec:music-classical}), establishes its noise stability (Subsection~\ref{subsec:music-stability}), compares random arrays with full tensor-grid arrays (Subsection~\ref{subsec:random-array-music}), reports numerical experiments including dimension scaling up to $d=8$ (Subsection~\ref{subsec:music-numerics}). Section \ref{sec:conclusionandfuture} concludes the paper and proposes several future works. The appendices estimate the minimum singular value of generalized Vandermonde matrices. 

\section{Generalized Hankel/Toeplitz matrix and Vandermonde decomposition}\label{sec:generalizedhankeltoeplitz}

Hankel and Toeplitz matrices constitute the algebraic backbone of many
classical subspace-based methods in array signal processing, spectral
estimation, and harmonic retrieval, including MUSIC, ESPRIT, matrix-pencil,
and Prony-type methods. Their effectiveness originates from the fact that,
for exponential measurements, the corresponding structured matrices admit
low-rank Vandermonde factorizations. This
section first recalls several structural limitations of the classical
constructions, and then introduces a generalized Hankel/Toeplitz matrix
together with the associated generalized Vandermonde decomposition. This framework
decouples the sampling geometry from the sum--difference constraint of the
classical Hankel/Toeplitz structure and serves as the unified matrix framework of this paper.


\subsection{Structural Limitations of Hankel/Toeplitz Matrices and Vandermonde Decompositions}
\label{subsec:bottlenecks}

Although the Hankel/Toeplitz matrix is prevalent in the field of array signal processing, they have several structural limitations. The first inefficiency is the strong redundancy of a
Hankel/Toeplitz embedding. A dense $M_1\times M_2$ Hankel matrix, for instance,
contains $M_1M_2$ matrix entries but is determined by only $M_1+M_2-1$
distinct  samples.  Thus, increasing the matrix dimension does not create a comparable number of independent measurements.



The second difficulty concerns the amount of data and the computational work
required by the classical Vandermonde structure. In the model
\eqref{equ:modelsetting1}, the Rayleigh resolution limit is of order
$1/\Omega$, so a finer resolution requires a wider band
$[-\Omega,\Omega]^d$, that is, a larger cutoff frequency
$\Omega$. A classical Vandermonde matrix, however, is built from
consecutive sampling indices. Enlarging the band therefore forces a
larger and denser uniform sampling grid, where the grid spacing is constrained by the Nyquist rate. In $d$ dimensions, a multi-level
Hankel/Toeplitz construction with $L$ indices per coordinate is an
$L^d\times L^d$ matrix and already requires about $(2L)^d$ distinct Fourier
samples. The computational cost is equally severe. A dense SVD of such a matrix costs $\mathcal O(L^{3d})$ operations and
$\mathcal O(L^{2d})$ memory; this exponential growth in $d$ is known as the
curse of dimensionality. 

The third limitation is that the uniform structure leads to a substantial loss of flexibility in array
design. For example, measurements collected from irregular, sparse, or separately optimized nonuniform arrays cannot in general
be  embedded directly into a standard Hankel or Toeplitz matrix.  Consequently,
potentially useful measurements may be discarded or only indirectly utilized.

\subsection{Generalization}

To address the above limitations, we introduce a generalized Hankel/Toeplitz matrix (GHM/GTM) framework together with the associated generalized Vandermonde decomposition. The key idea is to retain the low-rank Vandermonde structure that underlies classical Hankel/Toeplitz-based methods, while relaxing the rigid requirement of consecutive and uniformly spaced sampling indices. In the generalized framework, the row and column index sets can be selected much more flexibly and, in particular, may be nonuniform, sparse, segmented, or randomized. This additional freedom enables a substantially more efficient use of the available Fourier measurements and allows the matrix dimensions to be adapted to the intrinsic degrees of freedom of the source configuration rather than to the size of a full tensor-product sampling grid. From this perspective, the essential structure is no longer the
Hankel/Toeplitz pattern itself, but the generalized Vandermonde decomposition
that it induces.



\begin{defi}[Generalized Hankel and Toeplitz matrices]\label{def:generalized-hankel-toeplitz}
Let $\mathcal A=\{\boldsymbol\alpha_1,\ldots,\boldsymbol\alpha_{M_1}\}$ and $\mathcal B=\{\boldsymbol\beta_1,\ldots,\boldsymbol\beta_{M_2}\}$ be finite subsets of $\mathbb R^d$ with fixed orderings, and let $f:\mathcal D\to\mathbb C$ for some $\mathcal D\subset\mathbb R^d$. If $\mathcal A+\mathcal B\subset\mathcal D$, the generalized Hankel matrix (GHM) generated by $f$ is
\[
\mathbf{GH}_f(\mathcal A,\mathcal B)
:=\bigl[f(\boldsymbol\alpha_p+\boldsymbol\beta_q)\bigr]_{1\le p\le M_1,\,1\le q\le M_2}.
\]
If $\mathcal A-\mathcal B\subset\mathcal D$, the generalized Toeplitz matrix (GTM) generated by $f$ is
\[
\mathbf{GT}_f(\mathcal A,\mathcal B)
:=\bigl[f(\boldsymbol\alpha_p-\boldsymbol\beta_q)\bigr]_{1\le p\le M_1,\,1\le q\le M_2}.
\]
Here $\mathcal A\pm\mathcal B:=\{\boldsymbol\alpha\pm\boldsymbol\beta:\boldsymbol\alpha\in\mathcal A,\ \boldsymbol\beta\in\mathcal B\}$, and both matrices belong to $\mathbb C^{M_1\times M_2}$.
\end{defi}

In the observation model \eqref{equ:modelsetting1}, for the noiseless Fourier signal $f(\boldsymbol\omega)=\sum_{j=1}^{n} a_j e^{i\boldsymbol\omega\cdot\mathbf y_j}$,  the generalized Hankel/Toeplitz matrix admits the following generalized Vandermonde decomposition, which plays a key role in tackling signal processing problems in this paper.  

\begin{defi}[Generalized Vandermonde matrix]\label{generalized vandermonde}
Let $\Gamma=\{\boldsymbol\gamma_1,\ldots,\boldsymbol\gamma_M\}\subset\mathbb R^d$ be a finite frequency set with a fixed ordering, and let $\mathcal X=\{\mathbf y_1,\ldots,\mathbf y_n\}\subset\mathbb R^d$ be a finite node set with a fixed ordering. The generalized Vandermonde matrix associated with $\Gamma$ and $\mathcal X$ is
\begin{equation}\label{equ:gen_vander_matrix0}
\mathcal V_\Gamma(\mathcal X)
:=\bigl[e^{i\boldsymbol\gamma_\ell\cdot\mathbf y_j}\bigr]_{1\le \ell\le M,\,1\le j\le n}
=\bigl[\boldsymbol\phi_\Gamma(\mathbf y_1),\ldots,\boldsymbol\phi_\Gamma(\mathbf y_n)\bigr]
\in\mathbb C^{M\times n},
\end{equation}
where $\boldsymbol\phi_\Gamma(\mathbf y):=(e^{i\boldsymbol\gamma_\ell\cdot\mathbf y})_{\ell=1}^M$ is the corresponding steering vector.
\end{defi}


\begin{defi}[Generalized Vandermonde decomposition]\label{def:generalized-vandermonde-decomposition}
Let $A\in\mathbb C^{M_1\times M_2}$, and let $\mathcal A,\mathcal B\subset\mathbb R^d$ be finite frequency sets with fixed orderings, $|\mathcal A|=M_1$, and $|\mathcal B|=M_2$. We say that $A$ admits a generalized Vandermonde decomposition if there exist a common node set $\mathcal X=\{\mathbf y_1,\ldots,\mathbf y_n\}\subset\mathbb R^d$ and a diagonal matrix $D=\operatorname{diag}(c_1,\ldots,c_n)$ with $c_j\in\mathbb C\setminus\{0\}$ for $j=1,\ldots,n$ such that either
\[
A=\mathcal V_{\mathcal A}(\mathcal X)D\mathcal V_{\mathcal B}(\mathcal X)^\top
\qquad\text{or}\qquad
A=\mathcal V_{\mathcal A}(\mathcal X)D\mathcal V_{\mathcal B}(\mathcal X)^*.
\]
\end{defi}

Classical Hankel and Toeplitz matrices and their multi-level analogs are special cases of Definition~\ref{def:generalized-hankel-toeplitz}. To connect the abstract definition with the observation model \eqref{equ:modelsetting1}, let
\[
\mathcal A=\{\boldsymbol\alpha_1,\ldots,\boldsymbol\alpha_{M_1}\},
\qquad
\mathcal B=\{\boldsymbol\beta_1,\ldots,\boldsymbol\beta_{M_2}\}
\subset\mathbb R^d
\]
and assume $\mathcal A+\mathcal B\subset[-\Omega,\Omega]^d$. The observed matrix
\begin{equation}\label{eq:general-observed-vdm}
A:=\bigl[\mathbf Y(\boldsymbol\alpha_p+\boldsymbol\beta_q)\bigr]_{p,q}
=A_0+\Delta
\end{equation}
has noiseless part and perturbation
\[
A_0
=\mathbf{GH}_{f}(\mathcal A,\mathcal B)
=\mathcal V_{\mathcal A}(\mathcal X)
\operatorname{diag}(a_1,\ldots,a_n)
\mathcal V_{\mathcal B}(\mathcal X)^\top,
\qquad
\Delta:=\bigl[\mathbf W(\boldsymbol\alpha_p+\boldsymbol\beta_q)\bigr]_{p,q},
\]
where $f(\boldsymbol\omega)=\sum_{j=1}^{n} a_j e^{i\boldsymbol\omega\cdot\mathbf y_j}$ and $\mathcal X=\{\mathbf y_1,\ldots,\mathbf y_n\}$. Thus $A_0$ is a  GHM and admits the generalized Vandermonde decomposition of Definition~\ref{def:generalized-vandermonde-decomposition}, while $A$ is its perturbation by the noise matrix $\Delta$. The case of GTM is analogous.

\section{Number detection in multi-dimensional super-resolution problems}\label{sec:number-detection}

The computational resolution limit (CRL) provides a framework for quantifying the resolution of computational super-resolution under noisy measurements. Sharp estimates are available for several one-dimensional problems \cite{liu2022mathematicaloned,liu2021theory}, whereas multi-dimensional bounds generally contain additional dimension-dependent factors. In this section, we use a generalized Hankel-based number detection algorithm to derive a sharp estimate on the multi-dimensional upper bound. 

\subsection{Computational resolution limit for number detection }\label{subsec:crl-number}


We consider the imaging model (\ref{equ:modelsetting1}) in the multi-dimensional space $d \ge 1$ with the point sources tightly spaced and form a cluster. To be more specific, for $p \in [1,\infty]$, we define the $\delta$-neighborhood in $\mathbb{R}^d$ by
\[
B_{\delta,p}^d(\mathbf{x}) := \{\mathbf{y} \in \mathbb{R}^d : \|\mathbf{y}-\mathbf{x}\|_p < \delta\}.
\]
For cubes, we write
\[
Q_\delta^d(\mathbf{x}) = B_{\delta,\infty}^d(\mathbf{x}) := \{\mathbf{y} \in \mathbb{R}^d : \|\mathbf{y}-\mathbf{x}\|_\infty < \delta\}.
\]
We assume that the clustered sources satisfy $\mathbf{y}_j \in B_{\frac{\pi n}{\Omega},1}^d(\mathbf{0})$ for $j=1, \ldots, n$. Confining the sources within a local neighborhood of scale $\mathcal{O}(n/\Omega)$ is a standard framework for studying the super-resolution of clustered sources, and it is essential for the subsequent theoretical analysis. Since we are interested in resolving closely-spaced sources, it is also highly reasonable. We remark that our results for sources in $B_{\frac{\pi n}{\Omega},1}^d(\mathbf{0})$ can be directly generalized to sources in $B_{\frac{\pi n}{\Omega},1}^d(\mathbf{x}), \mathbf{x} \in \mathbb{R}^d$.

The reconstruction process usually targets specific solutions in a so-called $\sigma$-admissible set \cite{liu2021theory}, which comprises discrete measures whose Fourier data are sufficiently close to the measurement $\mathbf{Y}$ in (\ref{equ:modelsetting1}). 


\begin{defi}
Given the measurement $\mathbf{Y}$ generated by the model (\ref{equ:modelsetting1}), let
\[
\widehat{\mu}=\sum_{j=1}^{k} \widehat{a}_j \delta_{\widehat{\mathbf{y}}_j}
\]
be written in reduced form, where $k\in\mathbb Z_{\ge0}$,
$\widehat{\mathbf{y}}_j \in \mathbb{R}^d$, and
$\widehat a_j\in\mathbb C\setminus\{0\}$ for $j=1,\ldots,k$. We say that $\widehat{\mu}$ is a
$\sigma$-admissible discrete measure of $\mathbf{Y}$ if
$$
\bigl|\mathcal{F}[\widehat{\mu}](\boldsymbol{\omega})-\mathbf{Y}(\boldsymbol{\omega})\bigr| < \sigma, \qquad \forall\,\boldsymbol{\omega} \in [-\Omega, \Omega]^d.
$$
If further $\widehat a_j\in\mathbb R_{>0}$ for $j=1,\ldots,k$, then $\widehat{\mu}$ is said to be a positive $\sigma$-admissible discrete measure of $\mathbf{Y}$.
\end{defi}

Note that the set of $\sigma$-admissible measures of $\mathbf{Y}$ characterizes all possible solutions to our super-resolution problem with the given measurement. Detecting the correct source number $n$ is possible only if all admissible measures have at least $n$ supports; otherwise, it is impossible to detect the true number without additional a priori information. Therefore, following definitions similar to those in \cite{liu2021mathematicalhighd,liu2021theory}, we define the CRL for the multi-dimensional number detection problem as follows.

\begin{defi}\label{def:crl-number}
Fix $d,n\in\mathbb N$ with $n\ge2$, $\Omega>0$, $\sigma>0$, and $m_{\min}>0$. The computational resolution limit (CRL) for number detection,
\(
\mathscr{D}_{\mathrm{num},p}(d,n),
\)
is the smallest nonnegative number such that for all sparse measures $\mu=\sum_{j=1}^n a_j\delta_{\mathbf y_j}, \mathbf y_j\in B_{\frac{\pi n}{\Omega},1}^d(\mathbf0)$ and the associated measurement $\vect Y$ in (\ref{equ:modelsetting1}), if 
\[
\min_{i\neq j}\|\vect y_i -\vect y_j\|_p\geq \mathscr{D}_{\mathrm{num},p}(d,n),
\]
then every $\sigma$-admissible discrete measure of $\mathbf Y$ has at least $n$ supports. Restricting both the source class and the admissible measures to positive amplitudes defines
$\mathscr{D}^{+}_{\mathrm{num},p}(d,n)$.
\end{defi}

This definition of the CRL emphasizes the essential impossibility of correctly detecting the number of very close sources by any means. It depends crucially on the signal-to-noise ratio and the sparsity of the sources, structurally differentiating it from classical resolution limits that rely solely on the cutoff frequency.

The following result gives an explicit upper bound for this CRL.

\begin{restatable}{thm}{liresolution}\label{thm:li-resolution}
Let \(\mathbf Y\) be a measurement generated by \(\mu=\sum_{j=1}^n a_j \delta_{\mathbf y_j}\) which is supported on \(B_{\frac{\pi n}{\Omega},1}^d(\mathbf{0})\), and write $m_{\min}:=\min_j|a_j|$. Let \(n\ge 2\), assume $0<\sigma<m_{\min}$, and suppose that the following separation condition is satisfied:
\begin{equation}\label{equ:sepa_condi_1}
\min_{j\neq p}\|\mathbf y_j-\mathbf y_p\|_{1}>
\frac{4\sqrt2\,n\pi}{\Omega}\left(\frac{3}{\sqrt{5}}\right)^d
\left(\frac{\sigma}{m_{\min}}\right)^{\frac{1}{2n-2}}.
\end{equation}
Then there does not exist any \(\sigma\)-admissible discrete measure of \(\mathbf Y\) with fewer than \(n\) supports.
\end{restatable}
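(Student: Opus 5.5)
We argue by contradiction via a generalized Hankel matrix and a rank argument. Suppose $\widehat\mu=\sum_{j=1}^k\widehat a_j\delta_{\widehat{\mathbf y}_j}$ is $\sigma$-admissible with $k\le n-1$. Then $\mathcal F[\mu]-\mathcal F[\widehat\mu]$ is bounded by $2\sigma$ on $[-\Omega,\Omega]^d$. Pick index sets $\mathcal A,\mathcal B$ with $|\mathcal A|=|\mathcal B|=n$ and $\mathcal A+\mathcal B\subset[-\Omega,\Omega]^d$. Using Definition~\ref{def:generalized-hankel-toeplitz}, form the two generalized Hankel matrices $\mathbf{GH}_{\mathcal F[\mu]}(\mathcal A,\mathcal B)=\mathcal V_{\mathcal A}(\mathcal X)D\mathcal V_{\mathcal B}(\mathcal X)^\top$ and $\mathbf{GH}_{\mathcal F[\widehat\mu]}(\mathcal A,\mathcal B)$. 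The second has rank at most $k\le n-1$. Their difference $E$ has entries bounded by $2\sigma$, so $\|E\|_2\le 2n\sigma$. By Weyl's inequality this gives
\[
\sigma_{\min}\bigl(\mathcal V_{\mathcal A}(\mathcal X)D\mathcal V_{\mathcal B}(\mathcal X)^\top\bigr)\le 2n\sigma ,
\]
and the left side is at least $m_{\min}\,\sigma_{\min}(\mathcal V_{\mathcal A})\sigma_{\min}(\mathcal V_{\mathcal B})$. It therefore suffices to show that, under \eqref{equ:sepa_condi_1}, $\sigma_{\min}(\mathcal V_{\mathcal A})\sigma_{\min}(\mathcal V_{\mathcal B})>2n\sigma/m_{\min}$.

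The flexibility of the GHM is used to reduce to one dimension. First choose a direction $\mathbf v$ in a finite family of sign/lattice directions in $[-1,1]^d$, suitably normalized, so that the projections $t_j=\mathbf v\cdot\mathbf y_j$ remain well separated. A pigeonhole or averaging argument over the $\binom n2$ pairs should give some $\mathbf v$ with $\min_{j\ne p}|t_j-t_p|\gtrsim c^d\min\|\mathbf y_j-\mathbf y_p\|_1$. This is the step where the factor $(3/\sqrt5)^d$ should arise, through the fraction of directions along which a given $\ell_1$-separated pair is compressed. Second, take $\mathcal A=\{r\,\mathbf v\,h\}_{r=0}^{n-1}$ and $\mathcal B=\{-r\,\mathbf v\,h\}_{r=0}^{n-1}$ shifted so that all sums stay in the cube, with $h\approx\Omega/(\ldots)$. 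Then $\mathcal V_{\mathcal A}$ and $\mathcal V_{\mathcal B}$ are classical $n\times n$ Vandermonde matrices with nodes $e^{iht_j}$ on the unit circle. Because the sources lie in $B^d_{\pi n/\Omega,1}(\mathbf 0)$, the values $ht_j$ lie in an arc of length less than $2\pi$, so no aliasing occurs.

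Third, apply the standard lower bound for square Vandermonde matrices with unimodular nodes, $\sigma_{\min}\ge \frac{1}{\sqrt n}\prod_{p\ne j}\frac{|z_j-z_p|}{2}$-type estimates (or the bound in the appendices). This yields $\sigma_{\min}(\mathcal V)\gtrsim (h\,\theta_{\min}/\pi)^{n-1}$, and the product of the two factors behaves like $(h\theta_{\min}/\pi)^{2n-2}$. Setting this above $2n\sigma/m_{\min}$ and solving for $\theta_{\min}$ gives a threshold of the form $\frac{C n\pi}{\Omega}(\sigma/m_{\min})^{1/(2n-2)}$. We then need to check that the constants collapse to $4\sqrt2$ after tracking $h$ against the cube constraint $\|r\mathbf v h\|_\infty\le\Omega$. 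The dependence on $n$ inside the root, $(2n)^{1/(2n-2)}$ together with the product constants, must be absorbed into the linear factor $n$.

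I expect the main obstacle to be the projection step. We need a direction that is admissible for the frequency box, so $\|\mathbf v\|_\infty$ is controlled, and that preserves the $\ell_1$ separation up to exactly the factor $(\sqrt5/3)^d$ for all pairs simultaneously. I would first try a random direction with i.i.d. entries and estimate the probability that $|\mathbf v\cdot\mathbf w|$ is small for a fixed $\mathbf w$ via a Fourier/characteristic-function bound; the ratio $3/\sqrt5$ looks like such a small-ball constant. Then I would take a union bound over pairs, and if that fails I would replace it by an explicit lattice construction.
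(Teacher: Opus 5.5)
Your contradiction step (two GHMs on the same index sets, one of rank at most $k<n$, entrywise difference below $2\sigma$, then Weyl) is sound, and it is essentially how the paper closes its argument. The gap is the projection step, and it is structural rather than a matter of constants. The lemma you need is an admissible direction $\mathbf v$ with $\min_{j\neq p}|\mathbf v\cdot(\mathbf y_j-\mathbf y_p)|\ge c^d\|\mathbf v\|_\infty\min_{j\neq p}\|\mathbf y_j-\mathbf y_p\|_1$ and $c$ independent of $n$, and this lemma is false.

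Take $d=2$, $n=k^2$, and sources on a centered $k\times k$ grid of spacing $\delta$. For $\sigma/m_{\min}$ small you can take $\delta$ above the threshold in \eqref{equ:sepa_condi_1} while keeping the grid inside $B^2_{\pi n/\Omega,1}(\mathbf 0)$, and the $\ell_1$-separation is exactly $\delta$. For every $\mathbf v$, the $n$ projections lie in an interval of length at most $2(k-1)\delta\|\mathbf v\|_\infty$, so some projected gap is at most $2\delta\|\mathbf v\|_\infty/(k+1)$. This loss of order $\sqrt n$ (of order $n^{(d-1)/d}/d$ on a $d$-dimensional grid) holds for \emph{every} direction. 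No averaging, small-ball, union-bound or lattice choice can remove it.

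Since $\mathcal A+\mathcal B\subset[-\Omega,\Omega]^d$ forces $(n-1)h\|\mathbf v\|_\infty\le\Omega$, your min-gap Vandermonde estimate then certifies only separations of order $\frac{n^{3/2}}{\Omega}(\sigma/m_{\min})^{1/(2n-2)}$. That is not the linear factor $n$ of the statement. This is exactly the deterioration the paper attributes to the projection-based bound of Liu and Zhang. In particular, $(3/\sqrt5)^d$ is not a small-ball constant for directions.

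The paper never projects. It takes the full multi-level GHM $\mathbf{GH}(s)$ on $\{0,\dots,s\}^d$ with $s=4n$, whose noise matrix has norm at most $(4n+1)^d\sigma$. It then bounds the $d$-variate Vandermonde factor directly (Lemma~\ref{lem:uniform-Vandermonde}). For each node it builds a Lagrange interpolant $f_k=h_0(\cdot-\mathbf t_k)\,b_k$:
\begin{itemize}
\item $b_k$ comes from Li's vanishing polynomial (Lemma~\ref{lem2:uniform-Vandermonde}), with frequencies in the cube of half-width $s(n-1)/(2n)$ and $\|b_k\|_{L^\infty}\le\sqrt{2^{n-1}}\,\theta_{\min}(\Omega,n)^{-(n-1)}$. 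The $\ell^1$ metric appears because it is dual to the $\ell^\infty$ frequency cube, so no direction is ever chosen.
\item $h_0$ is the normalized Dirichlet kernel on the cube of half-width $s/(2n)=2$, so $\|h_0\|_{L^2}=5^{-d/2}$.
\end{itemize}
The duality principle gives $\sigma_{\min}(\mathbf V_1)^2\ge\frac{5^d}{n2^{n-1}}\theta_{\min}(\Omega,n)^{2n-2}$. Requiring $m_{\min}\sigma_{\min}(\mathbf V_1)^2>2(4n+1)^d\sigma$ (Theorem~\ref{liuthm5.1v2}) yields the threshold $\frac{2\sqrt2\,n\pi}{\Omega}\bigl(\frac{2n(4n+1)^d}{5^d}\frac{\sigma}{m_{\min}}\bigr)^{1/(2n-2)}$. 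The stated constant then follows from $(2n)^{1/(2n-2)}\le2$ and $\bigl(\frac{4n+1}{5}\bigr)^{1/(2n-2)}\le\frac{3}{\sqrt5}$, both sharp at $n=2$. Thus $(3/\sqrt5)^d=(9/5)^{d/2}$ is the worst-case per-dimension ratio of the noise amplification $(s+1)^d$ to the kernel gain $5^d$.
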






\begin{proof}
Apply the GHM threshold bound in Theorem \ref{liuthm5.1v2} with $s=4n$. Its sufficient separation is
\[
\frac{2\sqrt2\,n\pi}{\Omega}
\left(\frac{2n(4n+1)^d}{5^d}\right)^{\frac{1}{2n-2}}
\left(\frac{\sigma}{m_{\min}}\right)^{\frac{1}{2n-2}}.
\]
For $n\ge2$, the elementary estimates $(2n)^{\frac{1}{2n-2}}\le2,
\left(\frac{4n+1}{5}\right)^{\frac{1}{2n-2}}
\le\frac{3}{\sqrt5}$ follow by induction, with equality at $n=2$. Hence $\left(\frac{2n(4n+1)^d}{5^d}\right)^{\frac{1}{2n-2}}
\le
2\left(\frac{3}{\sqrt5}\right)^d$, and
\[
\frac{2\sqrt2\,n\pi}{\Omega}
\left(\frac{2n(4n+1)^d}{5^d}\frac{\sigma}{m_{\min}}\right)^{\frac{1}{2n-2}}
\le
\frac{4\sqrt2\,n\pi}{\Omega}\left(\frac{3}{\sqrt5}\right)^d
\left(\frac{\sigma}{m_{\min}}\right)^{\frac{1}{2n-2}}.
\]
Therefore, Theorem~\ref{liuthm5.1v2} and \eqref{equ:sepa_condi_1} give
\[
\widehat\sigma_n\bigl(\mathbf {GH}(s)\bigr)>(s+1)^d\sigma.
\]
Suppose, to the contrary, that
\(\widehat\mu=\sum_{j=1}^{k}\widehat a_j\delta_{\widehat{\mathbf y}_j}\)
is a $\sigma$-admissible measure with $k<n$. Using the same index ordering as in \eqref{hankel}, define
\[
\widehat {\mathbf{GH}}_0(s)
:=
\left[
\mathcal F[\widehat\mu]\!
\left(\tfrac{\Omega}{s}
(\boldsymbol\alpha^{(p)}+\boldsymbol\alpha^{(q)}-s\boldsymbol 1)\right)
\right]_{p,q}.
\]
This matrix factors through the $k$ nodes of $\widehat\mu$, so
\(\operatorname{rank}(\widehat {\mathbf{GH}}_0(s))\le k<n\).
Let \(\widehat\Delta:=\mathbf {GH}(s)-\widehat {\mathbf{GH}}_0(s)\). By admissibility, every entry of $\widehat\Delta$ has modulus strictly smaller than $\sigma$, and hence
\[
\|\widehat\Delta\|_2
\le \|\widehat\Delta\|_F
<(s+1)^d\sigma.
\]
Weyl's inequality now gives
\[
\widehat\sigma_n\bigl(\mathbf {GH}(s)\bigr)
\le \sigma_n(\widehat {\mathbf{GH}}_0(s))+\|\widehat\Delta\|_2
<(s+1)^d\sigma,
\]
which is a contradiction.
\end{proof}

Consequently, for $n\ge2$ and $0<\sigma<m_{\min}$, Definition~\ref{def:crl-number} and Theorem~\ref{thm:li-resolution} give
\begin{equation}\label{eq:crl-number-upper}
\mathscr D_{\mathrm{num},1}(d,n)
\le
\frac{4\sqrt2\,n\pi}{\Omega}\left(\frac{3}{\sqrt5}\right)^d
\left(\frac{\sigma}{m_{\min}}\right)^{\frac{1}{2n-2}}.
\end{equation}

It is instructive to compare \eqref{eq:crl-number-upper} with the existing estimates in the literature, which cover the two-dimensional case and the general-dimensional case separately. For the two-dimensional case, Liu and Ammari \cite{liu2024improved} proved, in the notation of the present paper, the upper bound
\[
\mathscr D_{\mathrm{num},1}(2,n)
\le
\frac{16.6\,\pi(n-1)}{\Omega}
\left(\frac{\sigma}{m_{\min}}\right)^{\frac{1}{2n-2}},
\]
under the source assumption $\mathbf y_j\in B_{\frac{(n-1)\pi}{6\Omega},\infty}^2(\mathbf0)$, a domain which for $d=2$ is contained in our source ball $B_{\frac{\pi n}{\Omega},1}^2(\mathbf0)$. Specializing \eqref{eq:crl-number-upper} to the two-dimensional case $d=2$, the constant in our bound equals $4\sqrt2\,(3/\sqrt5)^2n=\frac{36\sqrt2}{5}n\approx10.18\,n$, which is smaller than $16.6(n-1)$ for every $n\ge3$.

For a general dimension, through one-dimensional projection trick Liu and Zhang \cite{liu2021mathematicalhighd} characterized the CRL $\mathcal D_{\mathrm{num}, 2}$ for number detection in $d$ dimensions, with sources confined to the $\ell^2$-ball $B_{\frac{(n-1)\pi}{2\Omega},2}^d(\mathbf0)$, by the upper-bound estimate
\[
\mathcal D_{\mathrm{num}, 2}(d, n)
\le
\frac{4.4\pi e\,(\pi/2)^{d-1}\bigl(n(n-1)/\pi\bigr)^{\xi(d-1)}}{\Omega}
\left(\frac{\sigma}{m_{\min}}\right)^{\frac{1}{2n-2}},
\]
where $\xi(d):=\sum_{j=1}^{d}1/j$ for \(d \ge 1\) and \(\xi(0) := 0\). Compared with this estimate, \eqref{eq:crl-number-upper} removes the factor $(n(n-1))^{\xi(d-1)}$, whose exponent of $n$ grows with the dimension, and replaces it by a linear dependence on $n$; moreover, the dimension factor in \eqref{eq:crl-number-upper} is $(3/\sqrt5)^d$, which is smaller than $(\pi/2)^{d-1}$ for every $d\ge3$. This substantial improvement stems from the fact that the analysis in
\cite{liu2021mathematicalhighd} reduces the multi-dimensional problem to a
collection of one-dimensional problems via projection, which inevitably
introduces a loss of source separation and hence deteriorates the resulting
conditioning estimates. In contrast, our analysis works directly with the
multi-variate Vandermonde matrix and therefore exploits the full
multi-dimensional geometry of the source configuration and the information
contained in the measurements.




On the other hand, combining \eqref{eq:crl-number-upper} with the lower bound
established in \cite{liu2021mathematicalhighd}, we obtain the following
two-sided estimate:
\begin{equation}\label{eq:crl-number-twosided}
\frac{2}{\Omega}
\left(\frac{\sigma}{m_{\min}}\right)^{\frac{1}{2n-2}}
\le
\mathscr D_{\mathrm{num},1}(d,n)
\le
\frac{4\sqrt2\,n\pi}{\Omega}
\left(\frac{3}{\sqrt5}\right)^d
\left(\frac{\sigma}{m_{\min}}\right)^{\frac{1}{2n-2}},
\end{equation}
for both the general and positive super-resolution problems.
In particular, the dependence on the noise-to-signal ratio,
\[
\left(\frac{\sigma}{m_{\min}}\right)^{\frac{1}{2n-2}},
\]
is sharp.

Our analysis further suggests that the upper bound can potentially be improved
to the form
\begin{equation}\label{eq:crl-number-improved}
\mathscr D_{\mathrm{num},1}(d,n)
\le
\frac{c(d)}{\Omega}
\left(\frac{\sigma}{m_{\min}}\right)^{\frac{1}{2n-2}},
\end{equation}
where $c(d)$ depends only on the ambient dimension and, in particular, the
additional linear dependence on $n$ in
\eqref{eq:crl-number-twosided} can be removed.
Such an estimate may be viewed as a noise-dependent generalization of the
classical Rayleigh limit, whose characteristic scale is
$c(d)/\Omega$.

The underlying reason is geometric. The worst-case exponent $n-1$ in the
smallest singular value estimate of multivariate Vandermonde matrix is attained when the sources become
essentially collinear \cite{li2025nonharmonic}, so that the high-dimensional configuration degenerates
to the most ill-conditioned one-dimensional geometry. Once the sources are
ordered along this line, like the one-dimensional super-resolution \cite{liu2021theory}, the $n$ can be removed by considering the source order.

\subsection{Number detection algorithms based on singular value thresholding}\label{subsec:uniform array}

This section details the number detection algorithm underpinning the sharp upper bound derived previously. It is a multi-dimensional version of singular-value-thresholding algorithms in \cite{liu2024improved} and enables us to obtain the desired CRL constant. 


Let \(s\ge1\) be an integer, set \(\Lambda:=\{0,1,\dots,s\}\subset \mathbb Z\), and let \(\Lambda^{d}:=\Lambda\times\cdots\times\Lambda\subset \mathbb Z^{d}\). We assume that the measurement is given on the set
\[
\Gamma_s:=\frac{\Omega}{s}\bigl(\Lambda^{d}+\Lambda^{d}-s\boldsymbol 1\bigr)
=
[-\Omega,\Omega]^d\cap \frac{\Omega}{s}\mathbb Z^d
\]
by
\[
\mathbf Y(\boldsymbol\omega)=\mathcal F[\mu](\boldsymbol\omega)+\mathbf W(\boldsymbol\omega)
=\sum_{j=1}^{n} a_j e^{i\boldsymbol\omega\cdot \mathbf y_j}+\mathbf W(\boldsymbol\omega),
\qquad \boldsymbol\omega\in \Gamma_s,
\]
where  \(\boldsymbol 1:=(1,\dots,1)^\top\in\mathbb R^d\), \(\mu=\sum_{j=1}^{n} a_j\delta_{\mathbf y_j}\) and \(\|\mathbf W\|_{\infty}<\sigma\).

The construction queries exactly the values indexed by $\Gamma_s$; hence it uses $(2s+1)^d$ distinct Fourier samples. We fix the lexicographic ordering on \(\Lambda^{d}\) and write
\[
\Lambda^{d}=\{\boldsymbol\alpha^{(1)},\dots,\boldsymbol\alpha^{((s+1)^d)}\}.
\]
We then define the GHM \(\mathbf{GH}(s)\in\mathbb C^{(s+1)^d\times (s+1)^d}\) by
\begin{equation}\label{hankel}
\mathbf{GH}(s):=\bigl[\mathbf Y\bigl(\tfrac{\Omega}{s}(\boldsymbol\alpha^{(p)}+\boldsymbol\alpha^{(q)}-s\boldsymbol 1)\bigr)\bigr]_{1\le p,q\le (s+1)^d},
\end{equation}
and the corresponding noise matrix by
\begin{equation}\label{delta}
\mathbf\Delta:=\bigl[\mathbf W\bigl(\tfrac{\Omega}{s}(\boldsymbol\alpha^{(p)}+\boldsymbol\alpha^{(q)}-s\boldsymbol 1)\bigr)\bigr]_{1\le p,q\le (s+1)^d}.
\end{equation}

Next, define \(\psi_s(t):=(1,t,\dots,t^s)^\top\in\mathbb C^{s+1}\). For each node \(\mathbf y_j=((\mathbf y_j)_1,\dots,(\mathbf y_j)_d)^\top \in\mathbb R^d\), let
\[
(\mathbf V_1)_j:=\psi_s\!\left(e^{i(\mathbf y_j)_1\Omega/s}\right)\otimes\cdots\otimes \psi_s\!\left(e^{i(\mathbf y_j)_d\Omega/s}\right),
\]
and set
\[
\mathbf V_1=[(\mathbf V_1)_1,\dots,(\mathbf V_1)_n],\qquad
\mathbf\Sigma:=\operatorname{diag}\!\left(a_1 e^{-i\Omega\boldsymbol 1^\top\mathbf y_1},\dots,a_n e^{-i\Omega\boldsymbol 1^\top\mathbf y_n}\right).
\]
With the above ordering of \(\Lambda^{d}\), the \(\ell\)-th entry of \((\mathbf V_1)_j\) is \(e^{\,i(\Omega/s)\boldsymbol\alpha^{(\ell)}\cdot \mathbf y_j}\). The noiseless matrix
\[
\mathbf {GH}_0(s):=\mathbf V_1\mathbf\Sigma\mathbf V_1^\top
\]
admits a generalized Vandermonde decomposition in the sense of Definition~\ref{def:generalized-vandermonde-decomposition}. The observed matrix is its perturbation:
\begin{equation}\label{eq:factorization_H}
\mathbf {GH}(s)=\mathbf {GH}_0(s)+\mathbf\Delta
=\mathbf V_1\mathbf\Sigma \mathbf V_1^\top+\mathbf\Delta.
\end{equation}


To analyze the singular values of $\mathbf {GH}(s)$, we first need a lower bound on the minimum singular value of the contiguous Vandermonde matrix $\mathbf V_1$. The following lemma provides this crucial estimate.

\begin{lem}\label{lem:uniform-Vandermonde}
Let $d\geq 1$, $n\ge2$, and let $\mathbf y_j\in B_{\frac{\pi n}{\Omega},1}^d(\mathbf{0})$ for $j=1,\dots,n$. If $s\geq 4n$ is even, then
\begin{equation}\label{eq:uniform-Vandermonde}
\sigma_{\min}(\mathbf V_1)\geq
\sqrt{\frac{1}{n2^{n-1}}\left(2\left\lfloor\frac{s}{2n}\right\rfloor+1\right)^d}\,
\bigl(\theta_{\min}(\Omega,n)\bigr)^{n-1},
\end{equation}
where
\[
\theta_{\min}(\Omega,n):=\min_{j\neq k}\frac{\Omega}{2n\pi}\|\mathbf y_j-\mathbf y_k\|_1.
\]
\end{lem}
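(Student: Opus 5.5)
The plan is to bound $\|\mathbf V_1\mathbf c\|_2$ from below for an arbitrary unit vector $\mathbf c\in\mathbb C^n$. I would isolate each coefficient $c_{j_0}$ by a multi-dimensional finite-difference (Prony-type annihilating) operator acting on the sampled exponential sum
\[
f(\mathbf k):=\sum_{j=1}^n c_j e^{i\tau\mathbf k\cdot\mathbf y_j},\qquad \tau:=\Omega/s,\quad \mathbf k\in\Lambda^d .
\]
With this notation, $\|\mathbf V_1\mathbf c\|_2^2=\sum_{\mathbf k\in\Lambda^d}|f(\mathbf k)|^2$. Multiplying the columns by unimodular phases does not change singular values, so I may recentre the index cube at $\tfrac s2\mathbf 1$; this is where $s$ even is used.

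\textbf{Step 1 (annihilator).} Fix $j_0$ and set $K:=\lfloor s/(2n)\rfloor$, $h:=2K+1$. For each $l\neq j_0$ choose a sign vector $\mathbf v_l\in\{-1,1\}^d$ with $\mathbf v_l\cdot(\mathbf y_{j_0}-\mathbf y_l)=\|\mathbf y_{j_0}-\mathbf y_l\|_1$. Consider the operator
\[
\prod_{l\neq j_0}\bigl(T_{h\mathbf v_l}-e^{i\tau h\mathbf v_l\cdot\mathbf y_l}\bigr),
\]
where $T_{\mathbf u}f(\mathbf k)=f(\mathbf k+\mathbf u)$. Applied to $f$, it annihilates every term with $j\neq j_0$. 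What remains is $c_{j_0}\prod_{l\neq j_0}\bigl(e^{i\tau h\mathbf v_l\cdot\mathbf y_{j_0}}-e^{i\tau h\mathbf v_l\cdot\mathbf y_l}\bigr)e^{i\tau\mathbf k\cdot\mathbf y_{j_0}}$. Expanded, the operator is a signed sum of at most $2^{n-1}$ unimodular multiples of shifts $T_{h\mathbf u_S}$ with $\mathbf u_S=\sum_{l\in S}\mathbf v_l$.

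\textbf{Step 2 (size of the surviving factor).} Each factor has modulus $2\bigl|\sin\bigl(\tfrac{\tau h}{2}\|\mathbf y_{j_0}-\mathbf y_l\|_1\bigr)\bigr|$. Since $\tau h\approx\Omega/n$ and $\|\mathbf y_{j_0}-\mathbf y_l\|_1<2\pi n/\Omega$ (both nodes lie in the $\ell^1$ ball), the argument stays in $[0,\pi)$. Jordan-type bounds $\sin x\ge \tfrac{2}{\pi}\min(x,\pi-x)$ should then give each factor $\gtrsim\frac{\Omega}{2n\pi}\|\mathbf y_{j_0}-\mathbf y_l\|_1\ge\theta_{\min}(\Omega,n)$. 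The endpoint near $\pi$ needs some care; I would try to exploit the specific choice of $K$ together with the strict inequality in the ball.

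\textbf{Step 3 (summation).} Evaluate the identity from Step 1 at every $\mathbf k$ in a base cube $G$ of side $h$, i.e.\ $(2K+1)^d$ points. Square, and apply Cauchy–Schwarz over the $2^{n-1}$ terms. This gives
\[
(2K+1)^d|c_{j_0}|^2\theta_{\min}^{2(n-1)}\le 2^{n-1}\sum_S\sum_{\mathbf k\in G}|f(\mathbf k+h\mathbf u_S)|^2 .
\]
If the translates $G+h\mathbf u_S$ are pairwise disjoint and contained in the recentred $\Lambda^d$, the right side is at most $2^{n-1}\|\mathbf V_1\mathbf c\|_2^2$. Choosing $j_0$ with $|c_{j_0}|^2\ge 1/n$ then yields \eqref{eq:uniform-Vandermonde}.

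\textbf{Main obstacle.} The hard step is this last bookkeeping: the translates must fit inside $\{0,\dots,s\}^d$, and distinct subsets $S$ may produce the same shift, which would inflate the count beyond $2^{n-1}$.

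To keep the shifts inside the cube, I would first try orienting all shifts into a single orthant. Replacing $\mathbf v_l$ by $-\mathbf v_l$ only changes each factor by a unimodular constant, so I can choose the signs so that the shifts are nonnegative coordinatewise after recentring. The total extent is then about $n(2K+1)\le s+1$.

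For coinciding shifts, I would merge equal shifts into a single term. The merged coefficients are then elementary symmetric-type sums, and bounding them requires an $\ell^1$ coefficient count. If this fails to give exactly $2^{n-1}$, I would instead accept the count and adjust the choice of $h$ or the $\sin$ estimate to absorb the constant.
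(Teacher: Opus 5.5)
Your proposal has a genuine gap. With the common step $h=2K+1$, both Step~2 and Step~3 fail, and not merely in their constants.

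\textbf{Step 2.} Put $\theta_{j_0l}:=\frac{\Omega}{2n\pi}\|\mathbf y_{j_0}-\mathbf y_l\|_1<1$. The half-angle is $\frac{\tau h}{2}\|\mathbf y_{j_0}-\mathbf y_l\|_1=\pi\frac{n(2K+1)}{s}\theta_{j_0l}$. When $2n\mid s$ we have $\frac{n(2K+1)}{s}=1+\frac{1}{2K}>1$, so the angle is not confined to $[0,\pi)$. Take $s=4n$ and $\mathbf y_{j_0}=-\mathbf y_l=\frac{4n\pi}{5\Omega}\mathbf e_1$, both in the ball. Then $\theta_{j_0l}=\frac45$, the surviving factor is exactly $0$, and the identity says nothing about $c_{j_0}$; nearby configurations make the factor arbitrarily small.

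\textbf{Step 3.} Each coordinate of $\mathbf u_S$ ranges over an interval of length $n-1$. Hence along every axis the points $\mathbf k+h\mathbf u_S$, $\mathbf k\in G$, fill $nh=n(2K+1)$ consecutive integers. For $s=4n$ this is $5n>s+1$, so the translates leave $\Lambda^d$ already for $n=2$. The $2^{n-1}$ shifts are also in general not distinct: in $d=1$, or whenever all $\mathbf v_l$ agree, they take at most $n$ values. Merging them produces elementary symmetric coefficients whose squared $\ell^2$-mass can approach $\binom{2n-2}{n-1}>2^{n-1}$. Separately, $\mathbf v_l\mapsto-\mathbf v_l$ flips every coordinate, so for $d\ge2$ you cannot orient all shifts into one orthant.

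Your fallback of adjusting $h$ or the constants does not save the scheme. Keeping the merged translates disjoint and inside the cube forces, already when all $\mathbf v_l$ agree, the side of $G$ to be at most $\lfloor (s+1)/n\rfloor$. For $s=4n$ this is $4=2K$, which loses a factor $(4/5)^d$ that no $n$-dependent gain in the sine bounds can absorb. Handling distant nodes would also require shorter shifts, which destroys the common lattice $h\mathbb Z^d$ that disjointness relies on.

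\textbf{Missing idea.} Disjointness is unnecessary: the constant is governed by the sup-norm of the annihilator's symbol, not by its number of terms. Let $\mathcal A:=\prod_{l\ne j_0}(T_{\mathbf s_l}-z_l)$ with $z_l:=e^{i\tau\mathbf s_l\cdot\mathbf y_l}$, let $P:=\prod_{l\ne j_0}(e^{i\tau\mathbf s_l\cdot\mathbf y_{j_0}}-z_l)$, and let $\tilde f$ be the zero extension of $f$ to $\mathbb Z^d$. If all shifted points stay in the recentred cube, Plancherel on $\ell^2(\mathbb Z^d)$ gives
\[
|G|\,|c_{j_0}|^2|P|^2=\sum_{\mathbf k\in G}\bigl|(\mathcal A\tilde f)(\mathbf k)\bigr|^2\le\|B\|_{L^\infty(\mathbb T^d)}^2\,\|\mathbf V_1\mathbf c\|_2^2,\qquad B(\mathbf x):=\prod_{l\ne j_0}\bigl(e^{2\pi i\mathbf s_l\cdot\mathbf x}-z_l\bigr).
\]
Overlapping translates are harmless here. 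With $\phi_l:=\tau\,\mathbf s_l\cdot(\mathbf y_{j_0}-\mathbf y_l)$ one has $|P|/\|B\|_{L^\infty}\ge\prod_{l\ne j_0}\tfrac12|1-e^{i\phi_l}|$.

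Now take $\mathbf s_l:=K\mathbf v_l$, half your step, and $G:=\{-K,\dots,K\}^d$. The footprint has half-width $nK\le s/2$. Since $s\ge4n$, $\phi_l=\frac{2nK}{s}\pi\theta_{j_0l}\in(\frac{\pi}{2}\theta_{j_0l},\pi\theta_{j_0l}]$, so $|1-e^{i\phi_l}|=2\sin(\phi_l/2)\ge\sqrt2\,\theta_{j_0l}$. Hence $\|\mathbf V_1\mathbf c\|_2^2\ge(2K+1)^d|c_{j_0}|^2\,2^{-(n-1)}\bigl(\theta_{\min}(\Omega,n)\bigr)^{2(n-1)}$. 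Your choice $|c_{j_0}|^2\ge1/n$ then gives \eqref{eq:uniform-Vandermonde}.

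This is the paper's mechanism in primal form. The paper builds Lagrange interpolants $f_k=h_0(\cdot-\mathbf t_k)\,b_k$. Here $b_k$ is a product of two-point factors from Lemma~\ref{lem2:uniform-Vandermonde}, with frequencies of size about $s/(2n)$ rather than $s/n$. The factor $h_0$ is the normalized Dirichlet kernel on $\overline Q_{s/(2n)}^{\,d}$. The paper bounds $\|f_k\|_{L^2}\le\|h_0\|_{L^2}\|b_k\|_{L^\infty}$ and applies the duality principle. Repaired this way, your exact sign-vector shifts give a self-contained proof that does not need to cite that interpolation lemma.
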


\begin{proof}
The detailed proof is deferred to Appendix~\ref{sec:appendix-li}.
\end{proof}

With the above lemma and notation in hand, we can now formulate the multi-dimensional singular value threshold theorem.

\begin{restatable}{thm}{uniform-Vandermonde-threshold}\label{liuthm5.1v2}
Let $d\geq 1$ and $n\ge2$, let $\mu=\sum_{j=1}^{n}a_j\delta_{\mathbf y_j}$ with $\mathbf y_j\in B_{\frac{\pi n}{\Omega},1}^d(\mathbf{0})$ for $j=1,\dots,n$, and assume that $s\geq 4n$ is even and $\sigma<m_{\min}$. Let $\hat{\sigma}_1\geq \hat{\sigma}_2\geq \cdots \geq \hat{\sigma}_{(s+1)^d}$ be the singular values of $\mathbf {GH}(s)$. Then
\begin{equation}\label{liueq5.3v2}
\hat{\sigma}_j\leq (s+1)^d\sigma,\qquad j=n+1,\dots,(s+1)^d.
\end{equation}
Moreover, if
\begin{equation}\label{liueq5.4v2}
\min_{j\neq k}\|\mathbf y_j-\mathbf y_k\|_1>
\frac{2n\pi}{\Omega}
\left(
\frac{n2^n}{m_{\min}\left(2\left\lfloor\frac{s}{2n}\right\rfloor+1\right)^d}(s+1)^d\sigma
\right)^{\frac{1}{2n-2}},
\end{equation}
then
\begin{equation}\label{liueq5.5v2}
\hat{\sigma}_n>(s+1)^d\sigma. 
\end{equation}
\end{restatable}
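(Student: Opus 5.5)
We argue from the factorization \eqref{eq:factorization_H}, $\mathbf{GH}(s)=\mathbf V_1\mathbf\Sigma\mathbf V_1^\top+\mathbf\Delta$, and use Weyl's inequality for singular values in both directions. The noiseless part $\mathbf{GH}_0(s)=\mathbf V_1\mathbf\Sigma\mathbf V_1^\top$ has rank at most $n$, because it factors through the $n\times n$ diagonal matrix $\mathbf\Sigma$. So its singular values $\sigma_j(\mathbf{GH}_0(s))$ vanish for $j\ge n+1$.

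For the noise term we use the Frobenius bound. The matrix $\mathbf\Delta$ is $(s+1)^d\times(s+1)^d$ and every entry has modulus $<\sigma$, so
\[
\|\mathbf\Delta\|_2\le\|\mathbf\Delta\|_F<(s+1)^d\sigma .
\]
Weyl's inequality $|\hat\sigma_j-\sigma_j(\mathbf{GH}_0(s))|\le\|\mathbf\Delta\|_2$ then gives \eqref{liueq5.3v2} immediately for $j\ge n+1$.

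For the lower bound \eqref{liueq5.5v2}, Weyl again gives
\[
\hat\sigma_n\ge\sigma_n(\mathbf{GH}_0(s))-\|\mathbf\Delta\|_2 .
\]
It therefore suffices to show $\sigma_n(\mathbf{GH}_0(s))\ge 2(s+1)^d\sigma$. We bound $\sigma_n$ of the product from below by a product of minimum singular values:
\[
\sigma_n(\mathbf V_1\mathbf\Sigma\mathbf V_1^\top)\ge\sigma_{\min}(\mathbf V_1)\,\sigma_{\min}(\mathbf\Sigma)\,\sigma_{\min}(\mathbf V_1^\top).
\]
This holds because $\mathbf V_1$ has full column rank $n$ (from Lemma~\ref{lem:uniform-Vandermonde}, since the separation hypothesis forces distinct nodes and hence $\theta_{\min}>0$). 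The inequality then follows from the variational characterization: for $x$ in the row space, $\|\mathbf V_1\mathbf\Sigma\mathbf V_1^\top x\|\ge\sigma_{\min}(\mathbf V_1)\,\|\mathbf\Sigma\mathbf V_1^\top x\|$, and so on.

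Since the phases $e^{-i\Omega\mathbf 1^\top\mathbf y_j}$ are unimodular, $\sigma_{\min}(\mathbf\Sigma)=m_{\min}$. Lemma~\ref{lem:uniform-Vandermonde} then yields
\[
\sigma_n(\mathbf{GH}_0(s))\ge m_{\min}\,\frac{(2\lfloor s/(2n)\rfloor+1)^d}{n2^{n-1}}\,\theta_{\min}(\Omega,n)^{2n-2}.
\]
The main step is checking that this quantity exceeds $2(s+1)^d\sigma$ under \eqref{liueq5.4v2}. Rewriting \eqref{liueq5.4v2} via $\theta_{\min}=\frac{\Omega}{2n\pi}\min\|\mathbf y_j-\mathbf y_k\|_1$, it reads
\[
\theta_{\min}^{2n-2}>\frac{n2^n(s+1)^d\sigma}{m_{\min}(2\lfloor s/(2n)\rfloor+1)^d}.
\]
Multiplying by $m_{\min}(2\lfloor s/(2n)\rfloor+1)^d/(n2^{n-1})$ gives exactly $\sigma_n(\mathbf{GH}_0(s))>2(s+1)^d\sigma$. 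Note that the factor $2^n$ against $2^{n-1}$ in the lemma supplies precisely the factor $2$ needed to absorb the noise.

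Combining with $\|\mathbf\Delta\|_2<(s+1)^d\sigma$ gives $\hat\sigma_n>(s+1)^d\sigma$. The only delicate points are confirming full column rank, so that the product inequality is valid, and the strictness of the inequalities, both of which are routine. The assumption $\sigma<m_{\min}$ is not essential to this argument beyond consistency of the constants.
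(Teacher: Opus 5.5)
Your proposal is correct and follows the paper's proof step for step. It uses Weyl's inequality with $\|\mathbf\Delta\|_2\le\|\mathbf\Delta\|_F\le(s+1)^d\sigma$ and $\operatorname{rank}\mathbf{GH}_0(s)\le n$ for the tail singular values. It then combines $\sigma_n(\mathbf{GH}_0(s))\ge m_{\min}\sigma_{\min}(\mathbf V_1)^2$ with Lemma~\ref{lem:uniform-Vandermonde} to obtain $\sigma_n(\mathbf{GH}_0(s))>2(s+1)^d\sigma$ under \eqref{liueq5.4v2}; your justification of the product inequality via full column rank is a small clarification the paper leaves implicit.
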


\begin{proof}
By \eqref{eq:factorization_H}, we have $\mathbf {GH}(s)=\mathbf {GH}_0(s)+\mathbf\Delta$ with $\mathbf {GH}_0(s)=\mathbf V_1\mathbf\Sigma \mathbf V_1^\top$. Since each entry of $\mathbf\Delta$ is bounded by $\sigma$ in modulus, $\|\mathbf\Delta\|_2\leq \|\mathbf\Delta\|_F\leq (s+1)^d\sigma$. Weyl's theorem therefore gives
\[
|\hat{\sigma}_j-\sigma_j(\mathbf {GH}_0(s))|\leq \|\mathbf\Delta\|_2,\qquad j=1,\dots,(s+1)^d.
\]
By decomposition $\mathbf {GH}_0(s)=\mathbf V_1\mathbf\Sigma \mathbf V_1^\top$, we have $\operatorname{rank}(\mathbf {GH}_0(s))\leq n$ and $\sigma_j(\mathbf {GH}_0)=0$ for $j\geq n+1$. Together with $\|\mathbf\Delta\|_2\leqs (s+1)^d \sigma$, this gives \eqref{liueq5.3v2}.

Next, \eqref{liueq5.4v2} implies
\[
\theta_{\min}(\Omega,n)>
\left(
\frac{n2^n}{m_{\min}\left(2\left\lfloor\frac{s}{2n}\right\rfloor+1\right)^d}(s+1)^d\sigma
\right)^{\frac{1}{2n-2}}.
\]
Applying it and (\ref{eq:uniform-Vandermonde}) to $\sigma_n(\mathbf {GH}_0(s))\geq \sigma_{\min}(\mathbf\Sigma)\sigma_{\min}(\mathbf V_1)^2$ yields \[
\sigma_n(\mathbf {GH}_0(s))>2(s+1)^d\sigma.
\]
Therefore,
\[
\hat{\sigma}_n\geq \sigma_n(\mathbf {GH}_0(s))-\|\mathbf\Delta\|_2>(s+1)^d\sigma,
\]
which proves \eqref{liueq5.5v2}.
\end{proof}

Based on Theorem \ref{liuthm5.1v2}, we can propose a simple thresholding algorithm, \textbf{Algorithm \ref{alg:fixed-svt}}, for the number detection.



\begin{algorithm}[H]
\caption{Singular-Value-Thresholding Number Detection Algorithm}\label{alg:fixed-svt}
\small
\KwIn{Measurement data \(\mathcal Y\), positive integer \(s\), dimension \(d\), noise level \(\sigma\).}
\KwOut{The detected number of signals \(\hat n_s\).}

Form the measurement matrix \(\mathbf {GH}(s)\) from \(\mathcal Y\)\;
\(\varepsilon_s \gets \sigma (s+1)^d\)\;
\(\hat\sigma_1\ge \hat\sigma_2\ge \cdots \ge \hat\sigma_{(s+1)^d}\gets \textbf{svd}(\mathbf {GH}(s))\)\;

\For{$i=(s+1)^d$ \textbf{downto} $1$}{
    \If{$\hat\sigma_i>\varepsilon_s$}{
        \Return \(\hat n_s=i\)\;
    }
}
\Return \(\hat n_s=0\)\;
\end{algorithm}





\section{Nonuniform arrays and faster number detection algorithms}\label{sec:nonuniform-arrays}

While Subsection~\ref{subsec:uniform array} addresses number detection for contiguous uniform frequency sets, this section considers non-contiguous frequency queries. Their segmented index geometry is analogous to that of sparse and distributed arrays \cite{pal2010nested,vaidyanathan2011sparsesamplers,heimiller1983distributed,1982Distributed,gavish1993viaesprit,Gershman1997A}: each coordinate set consists of uniform blocks separated by gaps. The larger maximum queried frequency can improve the conditioning of the associated Vandermonde factors, while the consecutive indices within each block permit explicit estimates.

To analyze non-contiguous sampling sets, we employ the GHM framework, which directly accommodates missing frequency indices. For segmented sets, this formulation yields explicit singular-value lower bounds without virtual-array interpolation \cite{li_preestimation-based_2009} or multiple-invariance ESPRIT \cite{swindlehurst_multiple_1992}. We also consider a randomized construction whose matrix size can be chosen below that of a full tensor grid; its recovery statement is conditional on the realized Vandermonde factors.

The remainder of this section is organized as follows. In Subsection \ref{subsec:equally_distributed}, we propose a detection algorithm for segmented frequency sets. In Subsection \ref{subsec:rand-numdetect}, we introduce a randomized sampling heuristic and state its deterministic conditional guarantee. Finally, numerical results are presented for the proposed algorithms.

\subsection{Fast number detection algorithms for equally distributed arrays}\label{subsec:equally_distributed}

We now consider a segmented frequency-query pattern whose blocks have the same lengths and uniform spacings. Let $\mathcal{X} = \{\mathbf{y}_1, \ldots, \mathbf{y}_{n}\} \subset (-\pi, \pi]^d$ be an ordered set of distinct nodes. Let $m,r\in\mathbb Z_{\ge0}$ and $D\in\mathbb N$ satisfy $m\ge1$ and $D\ge m+1$. We define the segmented index set $\Lambda$ as
\[
\Lambda = \{0,1,\dots,m\} \cup \{D,D+1,\dots,D+m\} \cup \cdots \cup \{rD,rD+1,\dots,rD+m\}.
\]
Let $L = |\Lambda| = (r+1)(m+1)$ and denote the maximum index by $\Omega := rD+m$. The multi-dimensional sampling set is the Cartesian product $\Lambda^d = \Lambda \times \dots \times \Lambda \subset \mathbb{Z}^d$ and the measurement is given on the set $\Gamma_{\mathrm{seg}} := \Lambda^d + \Lambda^d-\Omega\boldsymbol 1\subset[-\Omega,\Omega]^d\cap\mathbb Z^d$, i.e., 
\[
\mathbf Y(\boldsymbol\omega)=\mathcal F[\mu](\boldsymbol\omega)+\mathbf W(\boldsymbol\omega)
=\sum_{j=1}^{n} a_j e^{i\boldsymbol\omega\cdot \mathbf y_j}+\mathbf W(\boldsymbol\omega),
\qquad \boldsymbol\omega\in \Gamma_{\mathrm{seg}},
\]
where $\boldsymbol 1:=(1,\dots,1)^\top\in\mathbb R^d$, $\mu=\sum_{j=1}^{n} a_j\delta_{\mathbf y_j}$ and $\|\mathbf W\|_{\infty}<\sigma$. 

Figure~\ref{fig:GHMmeasure} illustrates the resulting one-dimensional measurement set.

\begin{figure}[htbp]
    \centering
    \includegraphics[width=\linewidth]{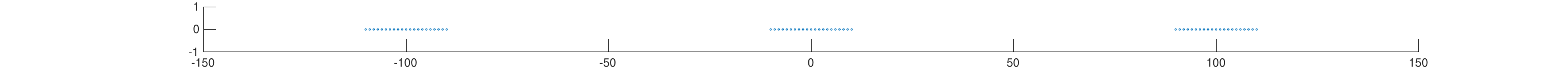}
    \caption{The measurement set of the segmented GHM with $r=1$, $m=10$, unit inner spacing, and block spacing $D=100$.}
    \label{fig:GHMmeasure}
\end{figure}

We fix the lexicographic ordering on $\Lambda^d$ and write
\[
\Lambda^d=\{\boldsymbol\alpha^{(1)},\dots,\boldsymbol\alpha^{(L^d)}\}.
\]
We then define the GHM $\mathbf {GH}(m,r,D)\in\mathbb C^{L^d\times L^d}$ by
\begin{equation}\label{eq:segmented_hankel}
\mathbf {GH}(m,r,D):=\bigl[\mathbf Y\bigl(\boldsymbol\alpha^{(p)}+\boldsymbol\alpha^{(q)}-\Omega\boldsymbol 1\bigr)\bigr]_{1\le p,q\le L^d},
\end{equation}
and the corresponding noise matrix by
\begin{equation}
\mathbf\Delta:=\bigl[\mathbf W\bigl(\boldsymbol\alpha^{(p)}+\boldsymbol\alpha^{(q)}-\Omega\boldsymbol 1\bigr)\bigr]_{1\le p,q\le L^d}.
\end{equation}

Next, define $\psi_\Lambda(t) := (t^k)_{k \in \Lambda}^\top \in \mathbb{C}^L$. For each node $\mathbf y_j=((\mathbf y_j)_1,\dots,(\mathbf y_j)_d)\in(-\pi, \pi]^d$, let
\[
(\vect V_1)_j:=\psi_\Lambda\!\left(e^{i(\mathbf y_j)_1}\right)\otimes\cdots\otimes \psi_\Lambda\!\left(e^{i(\mathbf y_j)_d}\right),
\]
and set
\begin{equation}\label{equ:segmented_vander_1}
\mathbf V_1=[(\mathbf V_1)_1,\dots,(\mathbf V_1)_n],\qquad
\mathbf\Sigma:=\operatorname{diag}\!\left(a_1 e^{-i\Omega\boldsymbol 1^\top\mathbf y_1},\dots,a_n e^{-i\Omega\boldsymbol 1^\top\mathbf y_n}\right).
\end{equation}
With the above ordering of $\Lambda^d$, the $\ell$-th entry of $(\mathbf V_1)_j$  is $e^{\,i\boldsymbol\alpha^{(\ell)}\cdot \mathbf y_j}$, respectively. The noiseless matrix
\[
\mathbf {GH}_0(m,r,D):=\mathbf V_1\mathbf\Sigma\mathbf V_1^\top
\]
admits a generalized Vandermonde decomposition in the sense of Definition \ref{def:generalized-vandermonde-decomposition}. The observed matrix is its perturbation:
\begin{equation}\label{eq:factorization_H_segmented}
\mathbf {GH}(m,r,D)=\mathbf {GH}_0(m,r,D)+\mathbf\Delta
=\mathbf V_1\mathbf\Sigma \mathbf V_1^\top+\mathbf\Delta.
\end{equation}

Based on the singular-value thresholding criterion in Theorem~\ref{thm:segmented_threshold}, we first propose a fixed-\((m,r)\) number detection procedure for the equally separated array setting.

\begin{algorithm}[H]
\caption{Segmented Singular-Value-Thresholding Number Detection Algorithm}\label{alg:fixed-segmented-svt}
\small
\KwIn{Measurements \(\{\mathbf Y(\boldsymbol\omega):\boldsymbol\omega\in\Gamma_{\mathrm{seg}}\}\), array parameters \(m, r, D\), dimension \(d\), noise level \(\sigma\).}
\KwOut{The detected number of signals \(\hat n_{m,r}\).}

\(L \gets (r+1)(m+1)\)\;
Form the segmented measurement matrix \(\mathbf {GH}(m,r,D)\) of size \(L^d \times L^d\) from these measurements\;
\(\varepsilon_{m,r} \gets \sigma L^d\)\;
\(\hat\sigma_1\ge \hat\sigma_2\ge \cdots \ge \hat\sigma_{L^d}\gets \textbf{svd}(\mathbf {GH}(m,r,D))\)\;

\For{$i=L^d$ \textbf{downto} $1$}{
    \If{$\hat\sigma_i>\varepsilon_{m,r}$}{
        \Return \(\hat n_{m,r}=i\)\;
    }
}
\Return \(\hat n_{m,r}=0\)\;
\end{algorithm}


To analyze the resolution of algorithm \ref{alg:fixed-segmented-svt}, we consider a more practical scenario where the nodes form multiple clusters in the multi-dimensional space. To rigorously characterize such a distribution, we first define the distance metric on the angular torus $\mathbb T_{2\pi}^d:= \mathbb R^d/(2\pi\mathbb Z)^d$ and the corresponding notions of local sparsity.

\begin{defi}[Periodic Metric and Minimum Separation]
    \label{defi:metric_separation}
    For any two representatives $\mathbf{u}, \mathbf{v} \in (-\pi, \pi]^d$ of points in  $\mathbb T_{2\pi}^d$ and $p \in [1, \infty)$, the periodic $\ell_p$-distance is defined as
    \[
    |\mathbf{u} - \mathbf{v}|_p := \left( \sum_{k=1}^d \min \big( |u_k - v_k|, \, 2\pi - |u_k - v_k| \big)^p \right)^{1/p},
    \]
    with the $\ell^\infty$-distance defined as $|\mathbf{u} - \mathbf{v}|_\infty := \max_{1 \le k \le d} \min \big( |u_k - v_k|, \, 2\pi - |u_k - v_k| \big)$. For a finite set of nodes $\mathcal{X} \subset (-\pi, \pi]^d$, the global minimum $\ell_p$-separation is defined as
    \[
    \Delta_p(\mathcal{X}) := \min_{\substack{\mathbf{x}, \mathbf{y} \in \mathcal{X} \\ \mathbf{x} \ne \mathbf{y}}} |\mathbf{x} - \mathbf{y}|_p.
    \]
    For a singleton set, we use the convention $\Delta_p(\mathcal X):=+\infty$.
\end{defi}

\begin{defi}[Local Sparsity]
    \label{defi:local_sparsity}
    Fix a scale parameter $\tau > 0$. For any node $\mathbf{y} \in \mathcal{X}$, its local neighborhood $\mathcal{N}_\infty(\mathbf{y}, \tau, \mathcal{X})$ is defined as 
    \[
    \mathcal{N}_\infty(\mathbf{y}, \tau, \mathcal{X}) := \left\{ \mathbf{z} \in \mathcal{X} : |\mathbf{y} - \mathbf{z}|_\infty \le \tau \right\}.
    \]
    The local sparsity $\nu_\infty(\tau, \mathcal{X})$ of the node set $\mathcal{X}$ is defined as the maximum cardinality of any such neighborhood:
    \[
    \nu_\infty(\tau, \mathcal{X}) := \max_{\mathbf{y} \in \mathcal{X}} \big| \mathcal{N}_\infty(\mathbf{y}, \tau, \mathcal{X}) \big|.
    \]
\end{defi}

In many practical scenarios, nodes are tightly concentrated within well-separated regions, naturally forming a multi-cluster geometry. To rigorously capture this phenomenon, we formally define this clumped structure.

\begin{defi}[Clump Structure]
    \label{defi:high_dim_clumps}
    Let $A,n^\star\in\mathbb N$ with $n^\star\ge2$, and let $0<\tau\le\eta$. A subset $\mathcal{X} \subset (-\pi, \pi]^d$ is said to form \textit{$(A, \infty, \tau, \eta, n^\star)$-clumps} if there exists a partition $\mathcal{X} = \bigcup_{a=1}^A \mathcal{C}_a$ into nonempty pairwise disjoint sets such that the following conditions hold:
    \begin{enumerate}
        \item $|\mathcal{C}_a| = n_a \le n^\star$ for all $a=1, \dots, A$, and $n^\star = \max_{1 \le a \le A} n_a$.
        \item $\max_{\mathbf{x}, \mathbf{y} \in \mathcal{C}_a} |\mathbf{x} - \mathbf{y}|_\infty \le \tau$ for each cluster $\mathcal{C}_a$.
        \item $\min_{\mathbf{x} \in \mathcal{C}_a, \mathbf{y} \in \mathcal{C}_b} |\mathbf{x} - \mathbf{y}|_\infty > \eta$ for all distinct indices $a \neq b$.
    \end{enumerate}
    Consequently, the local sparsity of such a set exactly satisfies $\nu_\infty(\tau, \mathcal{X}) = n^\star$.
\end{defi}

For clumped nodes, the minimum singular value of the segmented Vandermonde matrix admits the following explicit lower bound.

\begin{thm}
    \label{thm:segmented-vandermonde}
    Let $d \ge 1$, and suppose $\mathcal{X}=\{\mathbf{y}_1, \dots, \mathbf{y}_n\} \subset (-\pi, \pi]^d$ consists of $(A, \infty, \tau,  \eta, n^\star)$-clumps. 
    Let $\Lambda$ be the segmented index set with integer parameters $m\ge1$, $r\ge0$, and $D > m$.
    Fix $0 <\tau \le \frac{\pi}{2Dd}$ and $\beta > \frac{1}{2\log 2}$. Define
    \[
        m_1:=\left\lceil\frac{m}{2}\right\rceil,
        \qquad
        K_{\mathrm{loc}}:=\left\lfloor\frac{m_1}{n^\star}\right\rfloor.
    \]
    Suppose $\eta \geq \frac{4\pi \beta d}{K_{\mathrm{loc}}+1}$, $r \ge 2n^\star$, and the minimum $\ell_1$-separation satisfies $\Delta_1(\mathcal{X})\leq \frac{\pi n^\star}{rD}$. Then the smallest singular values of the segmented Vandermonde matrix $\mathbf{V}_1$ satisfies the following lower bound:
    \begin{align}
        \sigma_{\min}(\mathbf{V}_1) 
        \ge \frac{1}{\sqrt{n}} \left( 2 - e^{1/(2\beta)} \right)^{n^\star/2}\frac{\sqrt{(\frac{r}{n^\star})^d(\lfloor\frac{m}{2} \rfloor+1)^d} }{(\sqrt{2})^{n^\star-1}}  \left( \frac{rD}{\pi n^\star } \Delta_1(\mathcal{X})\right)^{n^\star-1}.
    \end{align}
\end{thm}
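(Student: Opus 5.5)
The factor structure of the bound suggests a two-scale argument. The coarse block offsets $\{0,D,\dots,rD\}$ give the super-resolution factor $\bigl(\tfrac{rD}{\pi n^\star}\Delta_1(\mathcal X)\bigr)^{n^\star-1}$, which only sees the largest clump. The fine consecutive offsets $\{0,\dots,m\}$ separate the different clumps, and this separation costs the factor $(2-e^{1/(2\beta)})^{n^\star/2}$.

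\textbf{Reduction to a product index set.} Since $\Lambda=D\{0,\dots,r\}+\{0,\dots,m\}$, every row index of $\mathbf V_1$ can be written as $\boldsymbol\alpha=D\mathbf k+\mathbf l$ with $\mathbf k\in\{0,\dots,r\}^d$ and $\mathbf l\in\{0,\dots,m\}^d$. Hence, for every $\mathbf c\in\mathbb C^n$,
\[
\|\mathbf V_1\mathbf c\|_2^2=\sum_{\mathbf l}\Bigl\|\mathcal V_{D\{0,\dots,r\}^d}(\mathcal X)\operatorname{diag}\bigl(e^{i\mathbf l\cdot\mathbf y_j}\bigr)\mathbf c\Bigr\|_2^2 .
\]
I would discard some rows and introduce a nonnegative weight $w(\mathbf l)\le 1$ on the fine indices. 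The aim is a lower bound on $\mathbf c^*\bigl(\mathbf V_1^*W\mathbf V_1\bigr)\mathbf c$ with $W=\operatorname{diag}(w)\otimes I$, which in turn bounds $\sigma_{\min}(\mathbf V_1)^2$ from below.

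\textbf{Choice of weight.} I would take $w$ to be a tensor product, in each coordinate, of the self-convolution of the indicator of $\{0,\dots,K_{\mathrm{loc}}\}$, normalized so that $w\le1$. Its support then fits in $\{0,\dots,\lfloor m/2\rfloor\}$, and the total mass is comparable to $(\lfloor m/2\rfloor+1)^d$. Its Fourier transform is a nonnegative Fejér kernel. The Fejér kernel decays like $(K_{\mathrm{loc}}+1)^{-2}\theta^{-2}$, and the hypothesis $\eta\ge 4\pi\beta d/(K_{\mathrm{loc}}+1)$ makes its value at any cross-clump difference $\mathbf y_j-\mathbf y_k$ a small fraction of its mass. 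Within a clump, the hypothesis $\tau\le\pi/(2Dd)$ keeps the phases aligned: the coarse phase differences satisfy $D\,|\mathbf y_j-\mathbf y_k|_1\le\pi/2$. I expect to use this to show that each within-clump block behaves like a genuine one-dimensional-type Vandermonde Gram matrix.

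\textbf{Block-diagonal dominance.} With this weight, the Gram matrix $G=\mathbf V_1^*W\mathbf V_1$ splits into diagonal blocks indexed by clumps plus off-diagonal cross-clump blocks. Summing the Fejér tails over all other clumps and over the coarse indices, I would bound the off-diagonal part by a multiple of the diagonal scale. Accumulated multiplicatively over the at most $n^\star$ nodes of a clump, this should produce the factor $(2-e^{1/(2\beta)})^{n^\star}$. The condition $\beta>1/(2\log 2)$ is exactly what makes this factor positive. From the perturbation estimate one then gets
\[
\sigma_{\min}(G)\ge (2-e^{1/(2\beta)})^{n^\star}\min_a\sigma_{\min}(G_{aa}).
\]
The extra $1/\sqrt n$ would come from passing between the $\ell^2$ norm of $\mathbf c$ and its worst clump component.

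\textbf{Within a clump.} For each diagonal block $G_{aa}$, the fine weights contribute the mass factor $(\lfloor m/2\rfloor+1)^d$ up to phase alignment. The coarse part is the contiguous Vandermonde matrix with frequency step $D$ and $r+1$ indices per coordinate, on the $n_a\le n^\star$ nodes of the clump. Rescaling so that Lemma~\ref{lem:uniform-Vandermonde} applies with $\Omega/s\leftrightarrow D$, $s\leftrightarrow r$ and $n\leftrightarrow n^\star$, the hypothesis $\Delta_1(\mathcal X)\le\pi n^\star/(rD)$ places the relevant $\theta$ at most $1$. Here I would need a version of Lemma~\ref{lem:uniform-Vandermonde} valid for $r\ge 2n^\star$ instead of $4n^\star$, obtained by rerunning its appendix proof; the resulting factor $(2\lfloor r/(2n^\star)\rfloor+1)^d$ is at least $(r/n^\star)^d$. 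This gives
\[
\sigma_{\min}(G_{aa})\ge\frac{1}{2^{n^\star-1}}\Bigl(\frac{r}{n^\star}\Bigr)^d\Bigl(\Bigl\lfloor\frac m2\Bigr\rfloor+1\Bigr)^d\Bigl(\frac{rD}{\pi n^\star}\Delta_1(\mathcal X)\Bigr)^{2n^\star-2}.
\]
Combining with the previous step and taking square roots yields the stated bound.

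The main obstacle is the decoupling step. The Fejér tail must be controlled uniformly over the coarse modulation $e^{iD\mathbf k\cdot(\mathbf y_j-\mathbf y_k)}$, and the off-diagonal perturbation must be organized so that it yields the clean product $(2-e^{1/(2\beta)})^{n^\star}$. I would first try a Schur-complement induction that removes one node at a time, bounding each cross-term by $e^{1/(2\beta)}-1$ via the tail estimate. This is where the exponential in $\beta$ most plausibly comes from.
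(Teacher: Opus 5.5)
\textbf{Gap: the clump-decoupling step.} There is a genuine gap, and it is in the decoupling step you flag as the main obstacle. With your tensor weight the Gram matrix is a Hadamard product, $G=G^{\mathrm c}\circ G^{\mathrm f}$, where $G^{\mathrm c}_{jk}=\sum_{\mathbf s\in\{0,\dots,r\}^d}e^{iD\mathbf s\cdot(\mathbf y_k-\mathbf y_j)}$ and $G^{\mathrm f}_{jk}=\widehat w(\mathbf y_k-\mathbf y_j)$ with $\widehat w(\mathbf u):=\sum_{\mathbf l}w(\mathbf l)e^{i\mathbf l\cdot\mathbf u}$. This structure does make your within-clump step clean: Schur's product theorem gives $\lambda_{\min}(G_{aa})\ge\widehat w(\mathbf 0)\,\sigma_{\min}^2(\mathcal V_{D\{0,\dots,r\}^d}(\mathcal C_a))$.

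Across clumps, however, the coarse factor does not decay. If $D(\mathbf y_k-\mathbf y_j)\in2\pi\mathbb Z^d$, which the hypotheses permit, then $|G^{\mathrm c}_{jk}|=(r+1)^d$. The Fejér tail therefore only bounds cross-clump entries by a fixed fraction (about $1/(16\beta^2d^2)$) of $(r+1)^d\widehat w(\mathbf 0)$. That bound does not shrink as $\Delta_1(\mathcal X)\to0$, whereas $\min_a\lambda_{\min}(G_{aa})$ can be as small as order $\Delta_1(\mathcal X)^{2n^\star-2}$.

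More fundamentally, no entrywise bound on the cross blocks can yield $\lambda_{\min}(G)\ge c\min_a\lambda_{\min}(G_{aa})$. Suppose $\epsilon=\lambda_{\min}(G_{aa})=\lambda_{\min}(G_{bb})$ with unit eigenvectors $\mathbf e_a,\mathbf e_b$. The cross block $(1-\kappa)\epsilon\,\mathbf e_a\mathbf e_b^*$ has arbitrarily small entries and keeps $G$ positive semidefinite, yet it forces $\lambda_{\min}(G)\le\kappa\epsilon$. What you actually need is control of the cross blocks along the near-null (divided-difference) directions of each clump, that is, a principal-angle bound between the clumps' steering subspaces. Via the Hadamard structure this is equivalent to a Loewner bound $G^{\mathrm f}\succeq c\operatorname{blockdiag}(G^{\mathrm f}_{aa})$ for a severely ill-conditioned fine Gram matrix. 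Your outline supplies neither: it sums Fejér tails over clumps (whose number $A$ is also unbounded) and then runs a node-by-node Schur-complement induction. So the factor $(2-e^{1/(2\beta)})^{n^\star}$ is not derived.

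\textbf{How the paper decouples.} The missing idea is to decouple on the dual side. By Lemma~\ref{lem:minsvd_bound_by_lagInterp_high_dim}, $\sigma_{\min}(\mathcal V)^{-1}\le(\sum_k\|F_k\|_{L^2}^2)^{1/2}$ for Lagrange interpolants $F_k\in\mathcal P(m,r,D,d)$, and the paper takes $F_k=G_kB_k$.
\begin{itemize}
\item The localizer $G_k\in\mathcal P(m_1,0,D,d)$ equals $1$ at $\mathbf y_k/(2\pi)$ and vanishes exactly on all other clumps, with $\|G_k\|_{L^\infty}\le(2-e^{1/(2\beta)})^{-n^\star/2}$.
\item To build it, $\mathcal X\setminus\mathcal C_k$ is split into at most $n^\star$ classes, each $\eta$-separated (Proposition~\ref{prop:decomposition}). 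Each class together with $\mathbf y_k$ receives an interpolant on $\{0,\dots,K_{\mathrm{loc}}\}^d$ with sup norm at most $(2-e^{1/(2\beta)})^{-1/2}$ (Theorem~\ref{thm:well_separated_segmented} with Lemma~\ref{lem:interpolation_via_svd}). These interpolants are then multiplied.
\item The factor $B_k\in\mathcal P(\lfloor m/2\rfloor,r,D,d)$ kills the rest of the clump.
\end{itemize}
Then $\|F_k\|_{L^2}\le\|G_k\|_{L^\infty}\|B_k\|_{L^2}$, so cross-clump interaction is never compared with the tiny within-clump eigenvalues.

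\textbf{Other problems.}
\begin{itemize}
\item Your within-clump step also needs repair. Lemma~\ref{lem:uniform-Vandermonde} requires $r\ge4n^\star$ and the whole clump inside an $\ell^1$-ball of radius $\pi n^\star/(rD)$. The hypotheses only give clump $\ell^1$-diameter at most $d\tau\le\pi/(2D)$, which need not fit once $r\ge4n^\star$. The condition $\Delta_1(\mathcal X)\le\pi n^\star/(rD)$ bounds the minimum separation, not the clump size.
\item Lemma~\ref{lem:neighborset_segmented} handles this: far nodes get their own quantized frequency at cost $\sqrt2$ each, and the one-sided recentering $\mathbf k=\mathbf k^+-\mathbf k^-$ makes $r\ge2n^\star$ suffice.
\item The inequality $2\lfloor r/(2n^\star)\rfloor+1\ge r/n^\star$ is false in general (take $r/(2n^\star)=1.9$).
\item Your triangular weight has mass $(K_{\mathrm{loc}}+1)^d$, not $(\lfloor m/2\rfloor+1)^d$.
\end{itemize}
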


\begin{proof}
     By the definition of $\mathbf{V}_1$ in \eqref{equ:segmented_vander_1}, it coincides with the generalized Vandermonde matrix $\mathcal{V}_{\Lambda^d}(\mathcal{X})$ in Definition~\ref{generalized vandermonde}, hence $\sigma_{\min}(\mathbf{V}_1)=\sigma_{\min}(\mathcal{V}_{\Lambda^d}(\mathcal{X}))$. The detailed constructive proof of this lower bound for $\mathcal{V}_{\Lambda^d}(\mathcal{X})$ is deferred to Appendix~\ref{sec:appendix-segmented-Vandermonde}.
\end{proof}

We are now ready to state the multi-dimensional singular value threshold theorem for segmented frequency sets. The dimension of $\mathbf {GH}(m,r,D)$ is $L^d \times L^d$, where $L = (r+1)(m+1)$.

\begin{restatable}{thm}{segmentedthreshold}\label{thm:segmented_threshold}
    Let $d\geq 1$ and $n^\star\ge2$, and let $\mu=\sum_{j=1}^{n}a_j\delta_{\mathbf y_j}$ with $\mathcal{X}=\{\mathbf y_1, \dots, \mathbf y_n\} \subset (-\pi, \pi]^d$ forming $(A, \infty, \tau,  \eta, n^\star)$-clumps. Set $m_{\min}:=\min_{1\le j\le n}|a_j|$.
    Let $\Lambda$ be the segmented index set with integer parameters $m\ge1$, $r\ge0$, and $D > m$, and set $\Omega:=rD+m$ and $L := (r+1)(m+1)$.
    Fix $0 <\tau \le \frac{\pi}{2Dd}$ and $\beta > \frac{1}{2\log 2}$, and define
    \[
        m_1:=\left\lceil\frac{m}{2}\right\rceil,
        \qquad
        K_{\mathrm{loc}}:=\left\lfloor\frac{m_1}{n^\star}\right\rfloor.
    \]
    Suppose $\eta \geq \frac{4\pi \beta d}{K_{\mathrm{loc}}+1}$, $r \ge 2n^\star$, and the noise level satisfies $0<\sigma < m_{\min}$.
    Let $\hat{\sigma}_1\geq \hat{\sigma}_2\geq \cdots \geq \hat{\sigma}_{L^d}$ be the singular values of the segmented measurement matrix $\mathbf {GH}(m,r,D)$. Then for the noise subspace, we have
    \begin{equation}\label{eq:threshold_noise_segmented}
        \hat{\sigma}_j\leq L^d\sigma,\qquad j=n+1,\dots,L^d.
    \end{equation}
    Moreover, if the minimum separation satisfies the local geometry condition $\Delta_1(\mathcal{X}) \leq \frac{\pi n^\star}{rD}$ and is bounded from below by the threshold
\begin{equation}\label{eq:separation_condition_segmented}
     \Delta_1(\mathcal{X}) > \frac{\sqrt{2}\pi (n^\star + \frac{1}{2}) (\sqrt{5})^d}{\Omega\left( 2 - e^{1/(2\beta)} \right)} \left( \frac{2n \sigma}{m_{\min}} \right)^{\frac{1}{2n^\star-2}},
    \end{equation}
    then the signal subspace satisfies
    \begin{equation}\label{eq:threshold_signal_segmented}
        \hat{\sigma}_n > L^d\sigma.
    \end{equation}
\end{restatable}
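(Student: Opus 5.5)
The argument follows the template of Theorem~\ref{liuthm5.1v2}: Weyl's inequality controls the noise, and the factorization \eqref{eq:factorization_H_segmented} together with the minimum-singular-value bound of Theorem~\ref{thm:segmented-vandermonde} controls the signal.

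\textbf{Noise subspace.} Every entry of $\mathbf\Delta$ has modulus below $\sigma$. Hence
\[
\|\mathbf\Delta\|_2\le\|\mathbf\Delta\|_F< L^d\sigma .
\]
Since $\mathbf{GH}_0(m,r,D)=\mathbf V_1\mathbf\Sigma\mathbf V_1^\top$ has rank at most $n$, we have $\sigma_j(\mathbf{GH}_0)=0$ for $j\ge n+1$. Weyl's inequality $|\hat\sigma_j-\sigma_j(\mathbf{GH}_0)|\le\|\mathbf\Delta\|_2$ then gives \eqref{eq:threshold_noise_segmented} immediately.

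\textbf{Signal subspace.} Let $\mathbf V_1^\top$ be full row rank $n$, which follows from Theorem~\ref{thm:segmented-vandermonde}. Then
\[
\sigma_n(\mathbf{GH}_0)\ge\sigma_{\min}(\mathbf\Sigma)\,\sigma_{\min}(\mathbf V_1)^2 .
\]
The entries of $\mathbf\Sigma$ have modulus $|a_j|$, so $\sigma_{\min}(\mathbf\Sigma)=m_{\min}$. All hypotheses of Theorem~\ref{thm:segmented-vandermonde} are assumed: the clump structure, $\tau\le\pi/(2Dd)$, the bound on $\eta$, $r\ge2n^\star$, and $\Delta_1(\mathcal X)\le\pi n^\star/(rD)$. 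Invoking it and squaring yields
\[
\sigma_n(\mathbf{GH}_0)\ge \frac{m_{\min}}{n}\,(2-e^{1/(2\beta)})^{n^\star}\,\frac{(r/n^\star)^d(\lfloor m/2\rfloor+1)^d}{2^{n^\star-1}}\Bigl(\frac{rD}{\pi n^\star}\Delta_1(\mathcal X)\Bigr)^{2n^\star-2}.
\]
It then suffices to show that this quantity exceeds $2L^d\sigma$. Weyl's inequality then gives $\hat\sigma_n\ge\sigma_n(\mathbf{GH}_0)-\|\mathbf\Delta\|_2>L^d\sigma$.

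\textbf{Matching the separation threshold.} This is the main (purely computational) obstacle. We must show that \eqref{eq:separation_condition_segmented} forces the inequality above. Rearranged, the requirement is
\[
\Delta_1(\mathcal X)^{2n^\star-2}>\frac{2^{n^\star}n\sigma}{m_{\min}}\Bigl(\frac{\pi n^\star}{rD}\Bigr)^{2n^\star-2}\frac{L^d}{(r/n^\star)^d(\lfloor m/2\rfloor+1)^d}\,(2-e^{1/(2\beta)})^{-n^\star}.
\]
The dimensional ratio is handled with $r\ge 2n^\star\ge 4$ and $\lfloor m/2\rfloor+1\ge (m+1)/2$, which give
\[
\frac{L^d}{(r/n^\star)^d(\lfloor m/2\rfloor+1)^d}\le\Bigl(\frac{2(r+1)n^\star}{r}\Bigr)^d .
\]
This must then be absorbed, after taking the $(2n^\star-2)$-th root, into $(\sqrt5)^d$ together with the factor $n^\star+\tfrac12$. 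Concretely, one uses $\bigl((2n^\star(r+1)/r)^d\bigr)^{1/(2n^\star-2)}\le (5/2\cdot\ldots)$-type elementary bounds analogous to those in the proof of Theorem~\ref{thm:li-resolution}, with $n^\star\ge2$ as the worst case.

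Next, $rD$ must be replaced by $\Omega=rD+m$. Using $D>m$ and $r\ge 2n^\star$, one has $\Omega\le rD(1+1/(2n^\star))$, i.e.
\[
\frac{\pi n^\star}{rD}\le\frac{\pi(n^\star+\tfrac12)}{\Omega};
\]
this is presumably the source of the factor $n^\star+\tfrac12$. The power $2^{n^\star}$ gives $\sqrt2$ after the root, up to the exponent mismatch $2^{n^\star/(2n^\star-2)}\le 2$, which must be absorbed. The factor $(2-e^{1/(2\beta)})^{-n^\star/(2n^\star-2)}$ is bounded by $(2-e^{1/(2\beta)})^{-1}$ since $0<2-e^{1/(2\beta)}<1$ and $n^\star/(2n^\star-2)\le1$. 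Finally, $(2n\sigma/m_{\min})^{1/(2n^\star-2)}$ appears exactly.

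I expect the bookkeeping of these constants, especially fitting $2^{n^\star/(2n^\star-2)}$ and the dimensional ratio inside $\sqrt2(\sqrt5)^d$, to be the delicate point. If the constants do not close exactly, I would first tighten the $L^d$ ratio using $(r+1)/r\le 5/4$ and $(m+1)/(\lfloor m/2\rfloor+1)\le 2$.
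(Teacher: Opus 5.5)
Your proposal is correct and is essentially the paper's proof: Weyl's inequality with $\|\mathbf\Delta\|_2\le L^d\sigma$ handles the noise part, then $\sigma_n(\mathbf{GH}_0)\ge m_{\min}\sigma_{\min}(\mathbf V_1)^2>2L^d\sigma$ via Theorem~\ref{thm:segmented-vandermonde}, followed by the same constant reductions. The bookkeeping you hedge on closes exactly: writing $2^{n^\star}n=2^{n^\star-1}\cdot 2n$ gives precisely the prefactor $\sqrt2$ and the $2n$ inside the root with no loss; your ratio bound together with $r\ge 2n^\star$ gives $n^\star\frac{(r+1)(m+1)}{r(\lfloor m/2\rfloor+1)}\le 2n^\star+1$, and $(2n^\star+1)^{1/(2n^\star-2)}\le\sqrt5$ (equality at $n^\star=2$); and the factor $n^\star+\frac12$ comes solely from $\Omega<(r+1)D$, not from the dimensional term.
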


\begin{proof}
    Similar to the proof of Theorem~\ref{liuthm5.1v2}, applying Weyl's inequality alongside the noise perturbation bound $\|\mathbf{\Delta}\|_2 \le L^d\sigma$ immediately establishes the noise subspace bound \eqref{eq:threshold_noise_segmented}.
    
    Since $\mathbf V_1$ has full column rank, the product inequality for the smallest nonzero singular value gives
    \[
        \sigma_n(\mathbf {GH}_0)
        =\sigma_n(\mathbf V_1\mathbf\Sigma\mathbf V_1^\top)
        \ge m_{\min}\sigma_{\min}(\mathbf V_1)^2.
    \]
    Thus, for the signal subspace threshold \eqref{eq:threshold_signal_segmented}, it suffices that $m_{\min} \sigma_{\min}(\mathbf{V}_1)^2 > 2L^d\sigma$. Substituting the lower bound from Theorem~\ref{thm:segmented-vandermonde} gives
    \begin{equation}\label{eq1:separation_condition_segmented}
    \Delta_1(\mathcal{X}) >
        \frac{\sqrt{2}\pi n^\star}{rD}
        \left(
        \frac{2 n (n^\star)^d \sigma}{m_{\min} (2 - e^{1/(2\beta)})^{n^\star}} \left( \frac{L}{r(\lfloor \frac{m}{2} \rfloor + 1)} \right)^d
        \right)^{\frac{1}{2n^\star-2}}.
    \end{equation}
    Specifically, for any $m \ge 1$ and $n^\star \ge 2$, utilizing $r \ge 2n^\star$, we apply three strict worst-case bounds simultaneously:
    $$
        \left( (n^\star)^d \left( \frac{L}{r(\lfloor \frac{m}{2} \rfloor + 1)} \right)^d
        \right)^{\frac{1}{2n^\star-2}} \le (\sqrt{5})^d, \quad
        \left( \frac{1}{(2 - e^{1/(2\beta)})^{n^\star}} \right)^{\frac{1}{2n^\star-2}} \le \frac{1}{2 - e^{1/(2\beta)}}.
    $$
    Furthermore, linking $rD$ with the maximum index $\Omega = rD + m$, we strictly bound the distance parameter by:
$$\frac{n^\star}{rD} = \frac{n^\star}{\Omega} \frac{rD+m}{rD} < \frac{n^\star}{\Omega}\left(1 + \frac{1}{r}\right) \le \frac{n^\star}{\Omega}\frac{2n^\star+1}{2n^\star} = \frac{n^\star + \frac{1}{2}}{\Omega}.$$
Combining these bounds yields the sufficient condition \eqref{eq:separation_condition_segmented}.
\end{proof}

Compared to the resolution limit theory in Theorem \ref{subsec:crl-number}, Theorem \ref{thm:segmented_threshold} shows that Algorithm \ref{alg:fixed-segmented-svt} achieves the optimal resolution when superresolving point sources in well-separated clumps.

\subsection{Number detection algorithm based on random GHM}\label{subsec:rand-numdetect}

In multi-dimensional super-resolution problems, classical subspace methods rely on multi-level Hankel or Toeplitz matrices. For a $d$-dimensional space, if we take $L$ measurements along each coordinate to construct the data matrix, the resulting multi-level matrix has a dimension of $L^d \times L^d$. The computational complexity of performing the Singular Value Decomposition (SVD) on this matrix scales as $\mathcal{O}(L^{3d})$. This exponential growth with respect to the spatial dimension $d$ is known as the curse of dimensionality, which makes classical methods computationally intractable even for moderate dimensions.

To reduce the matrix dimension, we consider a randomized sampling strategy based on the generalized Hankel framework. Instead of using a dense tensor grid, it constructs a matrix from two randomly selected lists of admissible integer frequencies.

Let $\mathcal A,\mathcal B\subset\mathbb Z^d$ be prescribed finite sets satisfying
$\mathcal A+\mathcal B\subset[-\Omega,\Omega]^d$, and let $M_1,M_2\in\mathbb N$ satisfy
$1\le M_1\le|\mathcal A|$ and $1\le M_2\le|\mathcal B|$. Draw
$\boldsymbol{\omega}_1,\dots,\boldsymbol{\omega}_{M_1}$ uniformly without replacement from $\mathcal A$ and, independently, draw
$\boldsymbol{\zeta}_1,\dots,\boldsymbol{\zeta}_{M_2}$ uniformly without replacement from $\mathcal B$. We then construct $\mathbf {GH}_{\mathrm{rand}}\in\mathbb C^{M_1\times M_2}$ by
\begin{equation}\label{eq:random_hankel}
    (\mathbf{GH}_{\text{rand}})_{p,q} = \mathbf{Y}(\boldsymbol{\omega}_p + \boldsymbol{\zeta}_q), \quad 1 \le p \le M_1, \ 1 \le q \le M_2,
\end{equation}
where $\mathbf{Y}(\boldsymbol{\omega})$ is the noisy measurement defined in \eqref{equ:modelsetting1}. 

Write $\mathbf {GH}_0:=\mathbf V_1\mathbf\Sigma\mathbf V_2^\top$. Since \eqref{eq:random_hankel} uses the noisy data, the random measurement matrix is a perturbation of the GHM $\mathbf{GH}_0$:
\begin{equation}\label{eq:random-hankel-noisy-factorization}
    \mathbf{GH}_{\text{rand}} = \mathbf {GH}_0+\Delta
    =\mathbf{V}_1 \mathbf{\Sigma} \mathbf{V}_2^\top+\Delta,
    \qquad
    \Delta_{pq}:=\mathbf W(\boldsymbol\omega_p+\boldsymbol\zeta_q),
\end{equation}
where $\mathbf{V}_1 \in \mathbb{C}^{M_1 \times n}$ and $\mathbf{V}_2 \in \mathbb{C}^{M_2 \times n}$ are generalized Vandermonde matrices evaluated at the randomly sampled frequencies, and $\mathbf{\Sigma} = \operatorname{diag}(a_1, \dots, a_n)$.

Based on the decomposition (\ref{eq:random-hankel-noisy-factorization}), we develop the singular value thresholding algorithm (Algorithm \ref{alg:rand_hankel_svt}) that is similar to Algorithm \ref{alg:fixed-segmented-svt}. The following theorem gives the corresponding threshold and the resolution of the algorithm, where the \emph{sampling spread} of order $n$,
\begin{equation}\label{eq:sampling_spread_def}
    \gamma_n(\Lambda) = \max_{S \subset \Lambda, |S|=n} \min_{\lambda,\mu \in S, \lambda \ne \mu} |\lambda - \mu|,
\end{equation} 
plays a crucial role. 

\begin{thm}\label{thm:resolutionrandghmnumber1}
Let $M_1, M_2> n$, then 
\[
\hat \sigma_j < \sigma\sqrt{M_1M_2}, \quad j=n+1, \cdots, M_2.
\]
Furthermore, for sufficiently closed one-dimensional sources and sufficiently small noise $\sigma$, if the sampling spread $\min(\gamma_n(\mathcal A), \gamma_n(\mathcal B))\geq C_2(n)\frac{\Omega}{n}$ and
\begin{equation}\label{equ:minisinguofrandomvander}
\min_{i\neq j}|y_i -y_j|\geq \frac{C_3}{\Omega}\left(\frac{\sigma}{m_{\min}}\right)^{\frac{1}{2n-2}},
    \end{equation}
we have 
\[
\hat \sigma_n > \sigma\sqrt{M_1M_2}.
\]
\end{thm}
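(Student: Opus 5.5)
The plan follows the template of Theorems~\ref{liuthm5.1v2} and \ref{thm:segmented_threshold}: a Weyl-type perturbation argument built on the factorization \eqref{eq:random-hankel-noisy-factorization}.

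\textbf{Noise subspace.} Every entry of $\Delta$ satisfies $|\Delta_{pq}|=|\mathbf W(\boldsymbol\omega_p+\boldsymbol\zeta_q)|<\sigma$, so
\[
\|\Delta\|_2\le\|\Delta\|_F<\sigma\sqrt{M_1M_2}.
\]
Since $\operatorname{rank}(\mathbf V_1\mathbf\Sigma\mathbf V_2^\top)\le n$, we have $\sigma_j(\mathbf{GH}_0)=0$ for $j\ge n+1$. Weyl's inequality then gives $\hat\sigma_j\le\|\Delta\|_2<\sigma\sqrt{M_1M_2}$ for $j=n+1,\dots,\min(M_1,M_2)$. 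The inequality is strict because there are finitely many entries, each strictly below $\sigma$.

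\textbf{Signal subspace, reduction.} Assume $\mathbf V_1$ and $\mathbf V_2$ have full column rank $n$. Then
\[
\sigma_n(\mathbf{GH}_0)\ge\sigma_{\min}(\mathbf V_1)\,m_{\min}\,\sigma_{\min}(\mathbf V_2),
\]
since for unit $x\in\mathbb C^n$ in the row space, $\|\mathbf V_1\mathbf\Sigma\mathbf V_2^\top\|$ factors through each term in turn. Weyl then gives $\hat\sigma_n\ge\sigma_n(\mathbf{GH}_0)-\|\Delta\|_2$. It therefore suffices to show
\[
m_{\min}\,\sigma_{\min}(\mathbf V_1)\,\sigma_{\min}(\mathbf V_2)>2\sigma\sqrt{M_1M_2}.
\]
This reduces everything to lower bounds on the minimum singular values of the one-dimensional generalized Vandermonde matrices $\mathcal V_{\{\omega_p\}}(\mathcal X)$ and $\mathcal V_{\{\zeta_q\}}(\mathcal X)$ for $d=1$.

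\textbf{Signal subspace, Vandermonde bound (main obstacle).} The frequencies are arbitrary integers drawn from $\mathcal A$, so the contiguous-block estimates of Lemma~\ref{lem:uniform-Vandermonde} and Theorem~\ref{thm:segmented-vandermonde} are not available. I would argue as follows.

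1. Select $n$ of the sampled rows. By the definition \eqref{eq:sampling_spread_def} of the sampling spread, we can choose $S\subset\mathcal A$ with $|S|=n$ and pairwise gaps at least $\gamma_n(\mathcal A)\ge C_2(n)\Omega/n$. The sample must contain such a well-spread $n$-subset, or at least $n$ rows whose frequencies are pairwise separated at order $\Omega/n$. I interpret the hypothesis as being on the realized sample, so that the statement is conditional on the realized Vandermonde factors, as the section intro announces.

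2. Use monotonicity: $\sigma_{\min}(\mathbf V_1)\ge\sigma_{\min}$ of the $n\times n$ submatrix on those rows.

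3. Bound that square matrix $[e^{i\lambda y_j}]_{\lambda\in S,\,j}$. It is a nonharmonic Vandermonde matrix with nodes $y_j$ clustered within $O(n/\Omega)$ and frequencies spread at scale $\Omega/n$. I would write it as a Taylor expansion about the cluster center, $e^{i\lambda y}=\sum_k (i\lambda)^k(y-y_0)^k/k!$. This factors it approximately as a generalized power matrix in $\lambda$ times a confluent Vandermonde matrix in $y$. Because the $\lambda$'s are spread, the power matrix is well conditioned after rescaling by $\Omega/n$. The node part then contributes the classical product $\prod_{k\ne j}|y_j-y_k|$, giving
\[
\sigma_{\min}\ge c(n)\bigl(\Omega\min_{i\ne j}|y_i-y_j|\bigr)^{n-1}.
\]
Here the phrases ``sufficiently close sources'' and ``sufficiently small $\sigma$'' are used to control the Taylor remainder. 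I expect this to be the hardest step, and I would appeal to the nonharmonic estimates of \cite{li2025nonharmonic} if possible.

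\textbf{Conclusion.} Insert the bound for both factors. The threshold condition becomes
\[
m_{\min}c(n)^2\bigl(\Omega\min|y_i-y_j|\bigr)^{2n-2}>2\sigma\sqrt{M_1M_2}.
\]
This yields \eqref{equ:minisinguofrandomvander}, with $C_3$ absorbing $c(n)$, the factor $2$, and the $\sqrt{M_1M_2}$ dependence; the latter is treated as fixed, or balanced by the fact that the remaining rows only increase $\sigma_{\min}$.
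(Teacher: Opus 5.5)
Your architecture matches the paper's. You use $\operatorname{rank}\mathbf{GH}_0\le n$ plus Weyl for the noise singular values, and $\sigma_n(\mathbf{GH}_0)\ge m_{\min}\sigma_{\min}(\mathbf V_1)\sigma_{\min}(\mathbf V_2)$ plus Weyl for $\hat\sigma_n$. Your use of the strict entrywise bound $|\Delta_{pq}|<\sigma$ gives the strict inequality the statement claims; the paper's proof only records $\le$.

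Where you differ is the Vandermonde bound. The paper does not prove it: it applies Theorem~\ref{thm:nonuniform_vdm_scaling}, i.e.\ \cite[Theorem~5.5]{huang2026minimum}. That result bounds the whole $M\times n$ factor by $C_1(n)(\gamma_n(\Lambda)\theta_{\min})^{n-1}$ under the cluster condition \eqref{eq:cluster_condition_def}. You instead restrict to a well-spread $n$-subset of rows and Taylor-expand.

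Several of your supporting steps are sound:
\begin{itemize}
\item Row restriction is fine.
\item Reading the spread hypothesis as a condition on the realized rows is also fine, and the paper's application silently needs the same reading, since $\gamma_n$ of a subset of $\mathcal A$ can only be smaller.
\item Absorbing $\sqrt{M_1M_2}$ into $C_3$ is what the paper does implicitly. Drop the alternative about ``remaining rows'', though: monotonicity gives no growth in $M_1$, so $C_3$ really carries a factor $(M_1M_2)^{1/(4n-4)}$.
\item The estimates of \cite{li2025nonharmonic} used in this paper concern contiguous frequency cubes. They are not the right fallback for arbitrary integer rows.
\end{itemize}

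The step that does not close as written is the Taylor truncation. If you only know the nodes lie within $O(n/\Omega)$, the phases $\lambda(y-y_0)$ are of order $n$ and the series is not small. Even when $\Omega\operatorname{diam}(\mathcal X)\to0$, truncating after degree $n-1$ leaves an error of size about $(\Omega\operatorname{diam}(\mathcal X))^{n}$. Your main term is only about $(\Omega\theta_{\min}/n)^{n-1}$. The error is dominated only if $\operatorname{diam}(\mathcal X)\lesssim\theta_{\min}$, which is precisely the bounded-aspect-ratio condition $\max_j|y_j-y_0|\le\tau\theta_{\min}/2$ of the cited theorem.

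So ``sufficiently close sources'' must be made to mean exactly this. Small $\sigma$ does nothing for the remainder; it only places the separation threshold in the small-$\theta_{\min}$ regime.

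Alternatively, you can avoid both truncation and the aspect-ratio assumption with the exact Newton factorization $V_S=GN$, where $G_{pk}=e^{i\lambda_p(\cdot)}[y_1,\dots,y_k]$ and $N_{kj}=\prod_{l<k}(y_j-y_l)$.
\begin{itemize}
\item Recenter the frequencies at $\lambda_c$. This is legitimate because $\mathcal A+\mathcal B\subset[-\Omega,\Omega]$ forces $\operatorname{diam}\mathcal A\le2\Omega$. Then rescale via $\mu_p:=n(\lambda_p-\lambda_c)/\Omega$, so the $\mu_p$ have gaps at least $C_2$ and modulus at most $n$.
\item By Hermite--Genocchi, $G$ tends, up to a diagonal unitary, to the well-conditioned matrix $[(i\mu_p)^{k-1}/(k-1)!]$ as $\Omega\operatorname{diam}(\mathcal X)\to0$.
\item $N^{-1}$ has entries $\prod_{l\le k,\,l\neq j}(y_j-y_l)^{-1}$ for $j\le k$ (and $0$ otherwise). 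After the same rescaling, these are bounded by $(\Omega\theta_{\min}/n)^{-(n-1)}$ once $\Omega\operatorname{diam}(\mathcal X)\le n$.
\end{itemize}
This yields $\sigma_{\min}(V_S)\ge c(n,C_2)(\Omega\theta_{\min}/n)^{n-1}$ for general small clusters.
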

\begin{proof}
The noiseless matrix $\mathbf {GH}_0$ has rank at most $n$. By Weyl's inequality, we have 
\[
\hat\sigma_j\leq \|\Delta\|_2\leq \sigma \sqrt{M_1M_2},
\qquad j\ge n+1.
\]

For other $\hat \sigma_j$'s, the estimate relies on the following estimate (\cite[Theorem~5.5]{huang2026minimum}) for the minimum singular value of non-uniformly sampled Fourier matrix.
\begin{thm}\label{thm:nonuniform_vdm_scaling}
Let $\Lambda = \{\lambda_1, \dots, \lambda_M\} \subset \mathbb{Z}$ be a one-dimensional frequency sampling set ($d=1$) with $M\geq n$ maximum frequency $W := \max_{1 \le m \le M} |\lambda_m|$. Suppose the target sources $Y = \{y_1, \dots, y_n\} \subset \mathbb{R}$ form a local cluster around a center $y_0$, satisfying the \emph{cluster condition}:
\begin{equation}\label{eq:cluster_condition_def}
    \max_{1 \le j \le n} |y_j - y_0| \le \frac{\tau \theta_{\min}}{2}, \quad \text{where} \quad \theta_{\min} := \min_{j \ne k} |y_j - y_k|.
\end{equation}
Let $\mathbf{V} \in \mathbb{C}^{M \times n}$ be the generalized Vandermonde matrix with entries $(\mathbf{V})_{m,j} = e^{i \lambda_m y_j}$. For small enough $\theta_{\min}$, there exists a constant $C_1(n) > 0$ depending only on $n$ such that:
\begin{equation}\label{eq:vdm_single_lower_bound}
    \sigma_{\min}(\mathbf{V}) \ge C_1(n) (\gamma_n(\Lambda) \theta_{\min})^{n-1}.
\end{equation}
\end{thm}

Now, since $\min(\gamma_n(\mathcal A), \gamma_n(\mathcal B))\geq C_2(n)\frac{\Omega}{n}$, by (\ref{eq:vdm_single_lower_bound}) and (\ref{equ:minisinguofrandomvander}), we have 
\[
\sigma_n(\mathbf {GH}_0)
\ge m_{\min}\sigma_{\min}(\mathbf V_1)\sigma_{\min}(\mathbf V_2)
>2\sigma\sqrt{M_1 M_2}.
\]
This yields
\[
\hat \sigma_j \geq \sigma_n(\mathbf {GH}_0)- \|\Delta\|_2 > \sigma\sqrt{M_1 M_2}.
\]
\end{proof}



\begin{algorithm}[H]
\caption{Random GHM Singular-Value-Thresholding Number Detection Algorithm}\label{alg:rand_hankel_svt}
\small
\KwIn{Measurement function \(\mathbf{Y}(\boldsymbol{\omega})\), admissible sets $\mathcal A,\mathcal B$, noise level \(\sigma\), sample sizes \(M_1, M_2\).}
\KwOut{The detected number of signals \(\hat{n}\).}

Draw the two frequency lists uniformly without replacement from $\mathcal A$ and $\mathcal B$, respectively\;
Construct \(\mathbf{GH}_{\text{rand}} \in \mathbb{C}^{M_1 \times M_2}\) with \((\mathbf{GH}_{\text{rand}})_{p,q} = \mathbf{Y}(\boldsymbol{\omega}_p + \boldsymbol{\zeta}_q)\)\;
\(\varepsilon_{\text{rand}} \gets \sigma \sqrt{M_1 M_2}\) \;
\(\hat\sigma_1\ge \hat\sigma_2\ge \cdots \ge \hat\sigma_{\min(M_1, M_2)} \gets \textbf{svd}(\mathbf{GH}_{\text{rand}})\)\;

\For{$i=\min(M_1, M_2)$ \textbf{downto} $1$}{
    \If{$\hat\sigma_i>\varepsilon_{\text{rand}}$}{
        \Return \(\hat{n}=i\)\;
    }
}
\Return \(\hat{n}=0\)\;
\end{algorithm}

The SVD of the prescribed $M_1\times M_2$ matrix costs
$O(M_1M_2\min\{M_1,M_2\})$, while forming and checking all frequency sums costs $O(dM_1M_2)$ and uses at most $M_1M_2$ distinct Fourier queries. The theoretical results presented below, together with those in \cite{huang2026minimum}, indicate that choosing $M_1,M_2=O(n)$ is sufficient for super-resolving $n$ closely spaced sources. Moreover, the randomized structure provides an additional advantage when resolving sources distributed across different clusters. Consequently, the overall computational complexity is reduced to $O(\max(n^
3,dn^2))$, which is substantially more efficient than that of classical algorithms; see next subsection for numerical comparisons.



\subsection{Numerical experiments}

In this subsection, we evaluate the three number detection constructions of Subsections~\ref{subsec:uniform array}, \ref{subsec:equally_distributed} and \ref{subsec:rand-numdetect} in dimensions $d=1$ and $d=2$. Throughout all experiments, we assume the sources have unit amplitudes and additive white noise with level $\sigma = 0.1$.

The labels in the tables are as follows and the same names are used in both dimensions.

The notation \emph{full-GHM} denotes the full grid construction of
Subsection~\ref{subsec:uniform array}, i.e., the GHM $\mathbf {GH}(s)$ in
\eqref{hankel} with $s=\Omega$ and consecutive index set
$\Lambda^d=\{0,1,\dots,\Omega\}^d$, then $\boldsymbol\omega \in \Gamma_s=\{-\Omega,-\Omega+1,\dots,\Omega\}^d$.

\emph{GHM-5} denotes the segmented construction of
Subsection~\ref{subsec:equally_distributed}, i.e., the GHM
$\mathbf {GH}(m,r,D)$ in \eqref{eq:segmented_hankel} with $m=10$, $r=2$,
$D=400$, and $\Omega=rD+m=810$. Its one-dimensional index set is
\(
\Lambda = \{0,1,\dots,10\} \cup \{400,401,\dots,410\} \cup  \{800,801,\dots,810\}
\),
then
\[
\boldsymbol\omega\in\Gamma_{\mathrm{seg}}=\{-810,\dots,-790\}\cup\{-410,\dots,-390\}\cup\{-10,\dots,10\}\cup\{390,\dots,410\}\cup\{790,\dots,810\}.
\]
In dimension $d=2$ we use
the Cartesian product $\Lambda^2$, denoted GHM-$5\times5$.

The notation \emph{randGHM-$M_1\times M_2$} denotes the random GHM
$\mathbf {GH}_{\mathrm{rand}}\in\mathbb C^{M_1\times M_2}$ of
Subsection~\ref{subsec:rand-numdetect} and \eqref{eq:random_hankel}. The $M_1$ row frequencies and
$M_2$ column frequencies are drawn uniformly without replacement from
admissible integer sets subject to the bandlimit constraint
$\boldsymbol\omega_p+\boldsymbol\zeta_q\in[-\Omega,\Omega]^d$.



We first consider a five-source signal in dimension $d=1$, consisting of two well-separated clusters $\{1.1,1.1+\Delta,2,2+\Delta,2+2\Delta\}$. The resolution and runtime are reported in Table~\ref{table:NumberVDM_1D}. We observe that our methods based on sub-samplings (GHM-5, randGHM-$20\times 20$, randGHM-$50\times 50$) achieves both the optimal order of resolution and runtime. In particular, they are $1000+$ times faster than the Hankel-based algorithm (full-GHM ($\Omega=810$) ).  


\begin{table}[htbp]
    \caption{Resolution and computational time for various methods in dimension $d=1$.}
    \centering 
    \begin{tabular}{c|ccc} 
    Method          & SVD Time (ms)                        & Total Time (ms) & Resolution \\ \hline
    full-GHM ($\Omega=10$)         &    $0.03 \pm 0.14$      & $0.07\pm 0.20$    & $0.3698$                   \\
    full-GHM ($\Omega=810$)      &    $165.6 \pm 13.2$        & $169.4\pm 13.7$ & $3.9\times 10^{-3}$      \\
    GHM-5 ($\Omega=810$)         &     $0.07 \pm 0.02$     & $0.10\pm 0.12$       & $3.1\times 10^{-3}$      \\
    randGHM-$20\times 20$ ($\Omega=810$)     &      $0.03 \pm 0.01$        & $0.06\pm 0.03 $       & $4.5\times 10^{-3}$      \\
    randGHM-$50\times 50$ ($\Omega=810$)     &      $0.19 \pm 0.06$        & $0.27\pm 0.10$       & $4.2\times 10^{-3}$      \\
    \end{tabular}
    \label{table:NumberVDM_1D}
\end{table}

We next consider five sources in dimension $d=2$, consisting of two clusters $\{(1.1, 1.1), (1.1+\Delta, 1.1), (2.0, 2.0), (2.0+\Delta, 2.0), (2.0, 2.0+\Delta)\}$. The resolution and runtime are reported in Table~\ref{table:NumberDetection_2D}. Similar to the one-dimensional case, the random GHM achieves the optimal runtime and resolution. However, due to the large size in the two-dimensional case (GHM-$5\times5$ has
size $1089\times1089$), the GHM with segmented index is no longer comparable to the random GHM, showing the efficiency of random sampling. The OOM (out of memory) indicates that
the computation could not be completed on our computer: with
$s=\Omega=810$ and $d=2$, the matrix $\mathbf {GH}(s)$ in \eqref{hankel} has
$(s+1)^d=811^2=657721$ rows and columns, i.e., about $4.3\times10^{11}$
entries, which already requires roughly $7$\,TB of memory in complex double
precision before any arithmetic is performed. 



\begin{table}[htbp]
    \caption{Resolution and computational time for number detection in dimension $d=2$.}
    \centering
    \begin{tabular}{c|ccc}
    Method & SVD Time (ms) & Total Time (ms) & Resolution \\ \hline
    full-GHM ($\Omega=30$)                   & $236.6 \pm 16.3$ & $241.6 \pm 16.5$ & 0.0612 \\
full-GHM ($\Omega=810$)                  &  OOM &  OOM  & - \\
GHM-$5\times 5$ ($\Omega=810$)            & $294.4 \pm 14.5$ & $300.5 \pm 14.7$ & $1.7\times 10^{-3}$ \\
randGHM-$20\times 20$ ($\Omega=810$)      & $0.04 \pm 0.05$ & $0.08 \pm 0.09$ & $2.9\times 10^{-3}$ \\
randGHM-$50\times 50$ ($\Omega=810$)      & $0.25 \pm 0.18$ & $0.36 \pm 0.23$ & $2.5\times 10^{-3}$ \\
    \end{tabular}
    \label{table:NumberDetection_2D}
\end{table}

\section{GHM-based location recovery}\label{sec:location-recovery}

In this section we consider the location recovery problem for the imaging model \eqref{equ:modelsetting1}: given the noisy measurements $\mathbf Y(\boldsymbol\omega)$, $\boldsymbol\omega\in[-\Omega,\Omega]^d$, recover the source locations $\mathbf y_1,\dots,\mathbf y_n\in\mathbb R^d$, where the source number $n$ is assumed to be known. Once the locations are recovered, the amplitudes $a_1,\dots,a_n$ are obtained by solving a linear least squares problem.



A direct approach to the problem is to Fourier-invert the measurements. For a uniform frequency grid this is the classical DFT, whose resolution is limited by the Rayleigh limit. For a segmented or sparse frequency set, the corresponding non-uniform DFT can have large grating-lobe sidelobes and may merge sources whose separation is below the Rayleigh limit, as illustrated in Figure~\ref{fig:nudft-vs-music}.

\begin{figure}[htb]
    \centering
    \includegraphics[width=\linewidth]{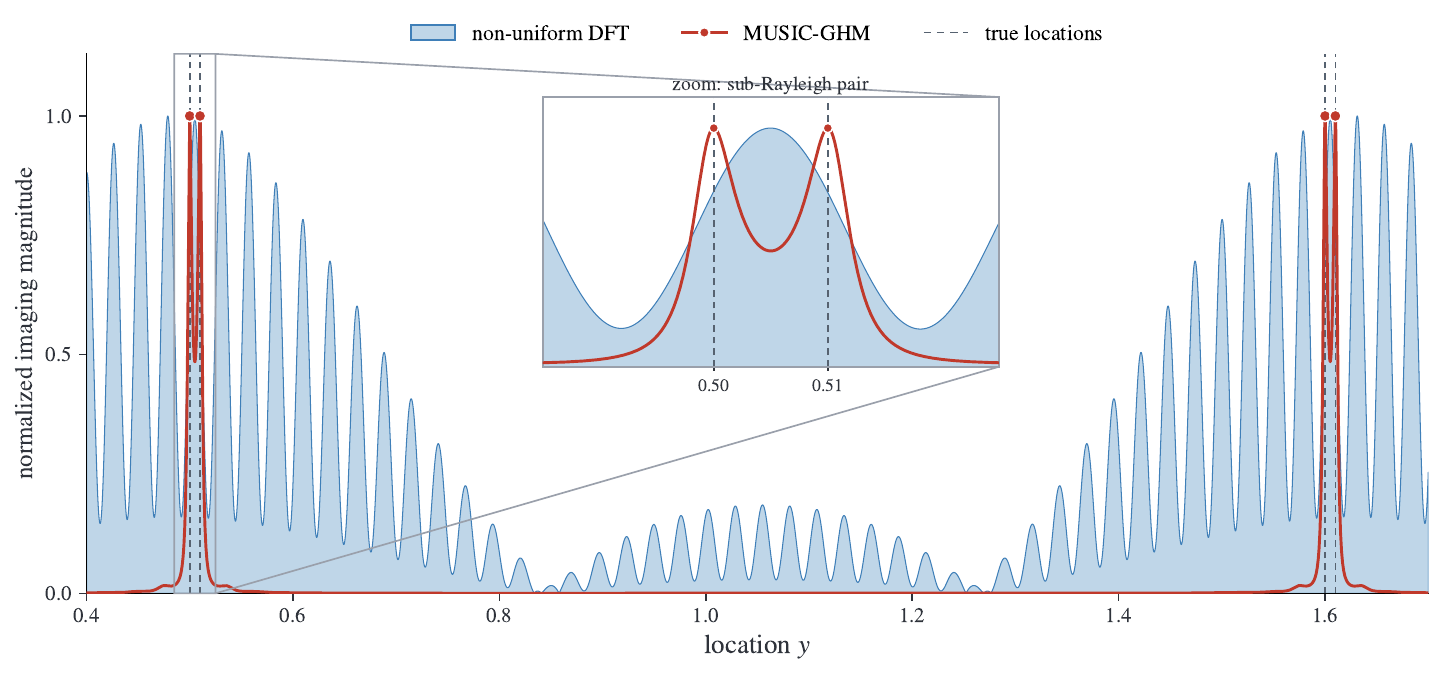}
    \caption{Non-uniform DFT versus GHM-based MUSIC on the same segmented-frequency measurements for the source locations $\{0.5,0.51,1.6,1.61\}$, with index set $\Lambda=\{0,\dots,10\}\cup\{240,\dots,250\}$. The non-uniform DFT has large grating-lobe sidelobes and merges each sub-Rayleigh pair into a single peak (inset), whereas GHM-based MUSIC resolves the sources.}
    \label{fig:nudft-vs-music}
\end{figure}

Beyond Fourier inversion, the existing methods for this problem include MUSIC~\cite{schmidt1986multiple,liao2015music}, root-MUSIC~\cite{barabell1983improving,raohari1989performance,boyer2008decoupled}, ESPRIT~\cite{roy1989esprit,zoltowski1996closed,haardt1998simultaneous} and the matrix pencil method~\cite{hua1990matrix,hua1992memp}. The cited multidimensional root-MUSIC, ESPRIT, and matrix-pencil formulations use measurements on a full tensor grid: root-MUSIC exploits the polynomial structure of the steering vector, ESPRIT the shift-invariance of a uniform array, and the matrix pencil method the multi-level Hankel structure of the data matrix. The MUSIC orthogonality criterion extends to a general GHM, but its implementation must also use the row-frequency steering map of that GHM and a localization procedure adapted to the resulting objective.

Section~\ref{subsec:music-classical} presents a unified MUSIC framework for GHMs, with classical multi-level Hankel matrices as a special case, and discusses a coarse-to-local implementation and its cost. Section~\ref{subsec:music-stability} establishes its noise stability. Random arrays are treated in Section~\ref{subsec:random-array-music}, and numerical experiments are reported in Section~\ref{subsec:music-numerics}.

\subsection{Unified MUSIC Framework}\label{subsec:music-classical}

We recall the MUSIC algorithm \cite{schmidt1986multiple,liao2016music,liao2015music} in the multi-dimensional setting. The observed matrix $\mathbf H(s)$ defined in \eqref{hankel} collects the measurements on the integer grid $\Gamma_s=\{-s,\dots,s\}^d$ and satisfies $\mathbf H(s)=\mathbf H_0(s)+\Delta$ by \eqref{eq:factorization_H}. For $\mathbf y\in\mathbb R^d$, define the steering vector
\[
\boldsymbol\phi_s(\mathbf y):=\psi_s\!\left(e^{i(\mathbf y)_1}\right)\otimes\cdots\otimes\psi_s\!\left(e^{i(\mathbf y)_d}\right)\in\mathbb C^{(s+1)^d},
\]
whose $\ell$-th entry is $e^{i\boldsymbol\alpha^{(\ell)}\cdot\mathbf y}$, so that $\mathbf V_1=[\boldsymbol\phi_s(\mathbf y_1),\dots,\boldsymbol\phi_s(\mathbf y_n)]$ in \eqref{eq:factorization_H}.

Assume that $\operatorname{rank}\mathbf H_0(s)=n$ and write its singular value decomposition as $\mathbf H_0(s)=U_0\Sigma_0V_0^*$, with $U_0=[U_{0,1},U_{0,2}]$ and $U_{0,1}$ containing the first $n$ left singular vectors. Since $\operatorname{range}\mathbf H_0(s)=\operatorname{span}\{\boldsymbol\phi_s(\mathbf y_j)\}_{j=1}^n$, one has
\[
U_{0,2}^*\boldsymbol\phi_s(\mathbf y_j)=0,\qquad j=1,\dots,n.
\]
Under the identifiability condition that no other steering vector lies in this range, the noise-free imaging function
\[
J_0(\mathbf y)=\frac{\|\boldsymbol\phi_s(\mathbf y)\|}{\|U_{0,2}^*\boldsymbol\phi_s(\mathbf y)\|}
\]
has poles exactly at the source locations. 

In the presence of noise, we run MUSIC with the noisy matrix $\mathbf H(s)$ in place of its noiseless counterpart $\mathbf H_0(s)$, as summarized in Algorithm~\ref{alg:music-grid}. Stability for the classical multi-level Hankel construction was established in \cite{liao2015music}. The corresponding analysis for GHMs is given in Section~\ref{subsec:music-stability}.

\begin{algorithm}[H]
\caption{Unified MUSIC Framework}\label{alg:music-grid}
\small
\KwIn{Measurements \(\mathcal Y\) on \(\Gamma_s\), source number \(n\), search grid \(\mathcal G\subset(-\pi,\pi]^d\).}
\KwOut{Estimated locations \(\hat{\mathbf y}_1,\dots,\hat{\mathbf y}_n\).}

Form the data matrix \(\mathbf H(s)\) in \eqref{hankel} from \(\mathcal Y\)\;
\([U,\Sigma,V]\gets\textbf{svd}(\mathbf H(s))\); set the noise space \(U_2\) to all columns of \(U\) but the first \(n\)\;
\ForEach{$\mathbf y\in\mathcal G$}{
    \(J(\mathbf y)\gets \|\boldsymbol\phi_s(\mathbf y)\|/\|U_2^*\boldsymbol\phi_s(\mathbf y)\|\)\;
}
\Return the \(n\) largest local peaks \(\hat{\mathbf y}_1,\dots,\hat{\mathbf y}_n\) of \(J\)\;
\end{algorithm}

For a GHM, the row-frequency set determines both the left Vandermonde factor and the steering map used for localization. Let $\mathcal A=\{\boldsymbol\alpha_1,\ldots,\boldsymbol\alpha_{M_1}\}$ and $\mathcal B=\{\boldsymbol\beta_1,\ldots,\boldsymbol\beta_{M_2}\}$ be its row- and column-frequency sets, respectively, and let $\mathcal X=\{\mathbf y_1,\ldots,\mathbf y_n\}$. The noiseless GHM $\mathbf{GH}_0\in\mathbb C^{M_1\times M_2}$ admits the generalized Vandermonde decomposition
\begin{equation}\label{eq:ghm_music_factorization}
\mathbf{GH}_0=\vect V_1 \mathbf\Sigma \vect V_2^\top,
\qquad
\vect V_1= \mathcal V_{\mathcal A}(\mathcal X),
\quad
\vect V_2=\mathcal V_{\mathcal B}(\mathcal X),
\end{equation}
where $\mathbf\Sigma =\operatorname{diag}(a_1,\dots,a_n)$. Its row-frequency steering vector is $\boldsymbol\phi_{\mathcal A}$ from Definition~\ref{generalized vandermonde}. For a noisy GHM $\mathbf{GH}_\sigma=\mathbf {GH}_0+\Delta$, Algorithm~\ref{alg:music-grid} applies with $\mathbf H(s)$ and $\boldsymbol\phi_s$ replaced by $\mathbf {GH}_\sigma$ and $\boldsymbol\phi_{\mathcal A}$, respectively. 


\begin{remark}\label{rem:partial_svd}
A coarse-to-local implementation first evaluates the MUSIC imaging function on a coarse grid $\mathcal G_{\mathrm c}$ and then refines the selected candidates with a continuous local solver, as in Gradient-MUSIC~\cite{fannjiang2026gradient}. In practice, one may compute the leading $n$ left singular vectors $U_c$ by $\operatorname{svds}$ and form the noise-space basis as $\operatorname{null}(U_c^*)$.
For a dense matrix, a partial SVD using $N_{\mathrm{mv}}$ matrix--vector products costs $O(N_{\mathrm{mv}}M_1M_2)$. For unstructured row frequencies, the coarse scan costs $O(|\mathcal G_{\mathrm c}|M_1(n+d))$. If $N_{\mathrm c}$ candidates are refined for $N_{\mathrm{it}}$ first-order iterations with analytic steering derivatives, the refinement costs $O(N_{\mathrm c}N_{\mathrm{it}}(M_1nd+d^3))$. The total cost is therefore
\[
O\!\left(
N_{\mathrm{mv}}M_1M_2
+|\mathcal G_{\mathrm c}|M_1(n+d)
+N_{\mathrm c}N_{\mathrm{it}}(M_1nd+d^3)
\right).
\]
\end{remark}

\subsection{Stability Analysis}\label{subsec:music-stability}

For the stability analysis, define
\[
\mathbf a(\mathbf y):=\frac{\boldsymbol\phi_{\mathcal A}(\mathbf y)}{\|\boldsymbol\phi_{\mathcal A}(\mathbf y)\|_2},
\]
and let
\[
\mathbf {GH}_0=U\Sigma V^*,
\qquad
U=[U_1,U_2],
\]
where the columns of $U_1$ span the $n$-dimensional signal subspace and the columns of $U_2$ span its orthogonal complement. Define the exact noise-space correlation and imaging function by
\begin{equation}
\label{eq:music-exact-objective}
P_N:=U_2U_2^*,
\qquad
R(\mathbf y):=\|U_2^*\mathbf a(\mathbf y)\|_2,
\qquad
J(\mathbf y):=\frac{1}{R(\mathbf y)}.
\end{equation}

For the perturbed data matrix $\mathbf {GH}_\sigma=\mathbf{GH}_0+\Delta$, write its exact singular value decomposition as
\begin{equation}
\label{eq:music-perturbed-svd}
\mathbf{GH}_{\sigma}=\widetilde U \widetilde\Sigma \widetilde V^*,
\qquad \widetilde U=[\widetilde U_1,\widetilde U_2],
\end{equation}
and define
\[
R_{\sigma}(\mathbf y):=\|\widetilde U_2^*\mathbf a(\mathbf y)\|_2,
\qquad
J_\sigma(\mathbf y):=\frac{1}{R_\sigma(\mathbf y)}.
\]
For GHMs, the following result follows by the same Wedin-type argument as Li and Liao~\cite[Proposition~4.2]{li2021stable}; the proof is omitted. Recall that $m_{\min}=\min_{j=1,\dots,n}|a_j|$.

\begin{lem}
\label{lem:stability_ghm_music}
Assume that $M_1>n$, $M_2\ge n$, $m_{\min}>0$, and that $\vect V_1, \vect V_2$ in \eqref{eq:ghm_music_factorization} have full column rank $n$. If
\[
2\|\Delta\|_2<m_{\min}\sigma_{\min}(\vect V_1)\sigma_{\min}(\vect V_2),
\]
then
\[
\|R_{\sigma}-R\|_{\infty}
\le
\frac{2\|\Delta\|_2}{m_{\min}\sigma_{\min}(\vect V_1)\sigma_{\min}(\vect V_2)}.
\]
\end{lem}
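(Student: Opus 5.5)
The plan is to reduce the statement to a perturbation bound for orthogonal projectors, in the spirit of the Wedin-type argument of Li and Liao. Let $P_N=U_2U_2^*$ and let $\widetilde P_N=\widetilde U_2\widetilde U_2^*$ be its noisy counterpart. Fix $\mathbf y$. Since $\|\mathbf a(\mathbf y)\|_2=1$, the reverse triangle inequality gives
\[
|R_\sigma(\mathbf y)-R(\mathbf y)|=\bigl|\|\widetilde P_N\mathbf a(\mathbf y)\|_2-\|P_N\mathbf a(\mathbf y)\|_2\bigr|\le\|(\widetilde P_N-P_N)\mathbf a(\mathbf y)\|_2\le\|\widetilde P_N-P_N\|_2 .
\]
Here we use $\|U_2^*x\|_2=\|U_2U_2^*x\|_2$, because $U_2$ has orthonormal columns. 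Moreover $\widetilde P_N-P_N=P_S-\widetilde P_S$, where $P_S=U_1U_1^*$ and $\widetilde P_S=\widetilde U_1\widetilde U_1^*$ are the signal-subspace projectors. The task therefore reduces to bounding $\|\widetilde P_S-P_S\|_2=\|\sin\Theta(U_1,\widetilde U_1)\|_2$ uniformly in $\mathbf y$.

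The first step is a lower bound on the $n$-th singular value of the noiseless matrix. Since $\mathbf V_1$ and $\mathbf V_2$ have full column rank, the factorization \eqref{eq:ghm_music_factorization} gives
\[
\sigma_n(\mathbf{GH}_0)\ge\sigma_{\min}(\mathbf V_1)\,\sigma_{\min}(\mathbf\Sigma)\,\sigma_{\min}(\mathbf V_2^\top)=m_{\min}\sigma_{\min}(\mathbf V_1)\sigma_{\min}(\mathbf V_2)=:\delta_0 .
\]
This is the same product inequality already used in the proof of Theorem~\ref{liuthm5.1v2}. Also $\sigma_{n+1}(\mathbf{GH}_0)=0$, so $\operatorname{range}U_1=\operatorname{range}\mathbf V_1$ is exactly the signal subspace. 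By Weyl's inequality, $\sigma_{n+1}(\mathbf{GH}_\sigma)\le\|\Delta\|_2$ and $\sigma_n(\mathbf{GH}_\sigma)\ge\delta_0-\|\Delta\|_2>\|\Delta\|_2$, using the hypothesis $2\|\Delta\|_2<\delta_0$. Consequently $\widetilde U_1$, spanned by the top $n$ left singular vectors, is well defined, and there is a genuine gap between the retained and discarded noisy singular values.

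The second step, which is the main obstacle, is to apply Wedin's $\sin\Theta$ theorem with the right gap so that the constant comes out as $2\|\Delta\|_2/\delta_0$. I would not use Wedin's theorem directly, since it needs a gap between $\sigma_n(\mathbf{GH}_\sigma)$ and $\sigma_{n+1}(\mathbf{GH}_0)=0$ and yields the bound $\max(\|\Delta V_1\|,\|\Delta^*U_1\|)/\sigma_n(\mathbf{GH}_\sigma)\le\|\Delta\|_2/(\delta_0-\|\Delta\|_2)$. Instead I would argue directly. Since $U_2^*\mathbf{GH}_0=0$,
\[
U_2^*\widetilde U_1\widetilde\Sigma_1=U_2^*\mathbf{GH}_\sigma\widetilde V_1=U_2^*\Delta\widetilde V_1 ,
\]
so $\|U_2^*\widetilde U_1\|_2\le\|\Delta\|_2/\sigma_n(\mathbf{GH}_\sigma)$, where $\widetilde\Sigma_1$ and $\widetilde V_1$ denote the leading $n$ singular values and right singular vectors of $\mathbf{GH}_\sigma$. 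The quantity $\|U_2^*\widetilde U_1\|_2$ equals $\|\widetilde P_S-P_S\|_2$, because both subspaces have the same dimension $n$. Combining this with $\sigma_n(\mathbf{GH}_\sigma)\ge\delta_0-\|\Delta\|_2>\delta_0/2$ gives $\|\widetilde P_S-P_S\|_2<2\|\Delta\|_2/\delta_0$. This is exactly the claimed bound.

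Finally, the estimate is uniform in $\mathbf y$, so taking the supremum gives $\|R_\sigma-R\|_\infty\le2\|\Delta\|_2/(m_{\min}\sigma_{\min}(\mathbf V_1)\sigma_{\min}(\mathbf V_2))$. The only care needed is to confirm that $\|U_2^*\widetilde U_1\|_2$ equals $\|\widetilde P_S-P_S\|_2$ in the equal-dimension case, and that $M_1>n$ ensures $U_2$ is nonempty so that $R$ is nontrivial.
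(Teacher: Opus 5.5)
Your proof is correct and follows essentially the route the paper intends, namely the Wedin-type argument of Li and Liao: bound $|R_\sigma-R|$ by $\|\widetilde P_S-P_S\|_2$, control this by $\|\Delta\|_2/\sigma_n(\mathbf{GH}_\sigma)\le\|\Delta\|_2/(\delta_0-\|\Delta\|_2)\le 2\|\Delta\|_2/\delta_0$, and use $\sigma_n(\mathbf{GH}_0)\ge m_{\min}\sigma_{\min}(\mathbf V_1)\sigma_{\min}(\mathbf V_2)$. Your ``direct'' step is just Wedin's theorem specialized to $\sigma_{n+1}(\mathbf{GH}_0)=0$, and Wedin's bound $\|\Delta\|_2/(\delta_0-\|\Delta\|_2)$ already gives the claim, so you had no reason to avoid it; the only cosmetic fix is to write $\le$ instead of $<$ in the last inequality, so that the case $\Delta=0$ is covered.
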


For segmented arrays under the multi-clump geometry, the general stability bound admits the following explicit form.

\begin{cor}
\label{cor:stability_multidim_segmented}
Assume the hypotheses of Theorem~\ref{thm:segmented-vandermonde} and, in addition, that $m\ge n$. For the segmented factorization \eqref{eq:factorization_H_segmented}, let $m_{\min}:=\min_{1\le j\le n}|a_j|>0$ and define
\begin{equation}
\label{eq:segmented-music-Bcl}
B_{\mathrm{cl}}
:=
\frac{1}{\sqrt n}
\left(2-e^{1/(2\beta)}\right)^{n^\star/2}
\frac{\sqrt{(\frac{r}{n^\star})^d(\lfloor\frac m2\rfloor+1)^d}}
{(\sqrt2)^{n^\star-1}}
\left(\frac{rD}{\pi n^\star}\Delta_1(\mathcal X)\right)^{n^\star-1}.
\end{equation}
Then
\[
\sigma_{\min}(\mathbf V_1)=\sigma_{\min}(\mathbf V_2)\ge B_{\mathrm{cl}}>0.
\]
If
\[
2\|\Delta\|_2<m_{\min}B_{\mathrm{cl}}^2,
\]
then the corresponding noise-space correlations satisfy
\[
\|R_{\sigma}-R\|_{\infty}
\le
\frac{2\|\Delta\|_2}{m_{\min}B_{\mathrm{cl}}^2}.
\]
\end{cor}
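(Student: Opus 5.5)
The corollary is a direct specialization of Lemma~\ref{lem:stability_ghm_music} to the segmented factorization \eqref{eq:factorization_H_segmented}. We only need to identify the two Vandermonde factors, bound their smallest singular values by Theorem~\ref{thm:segmented-vandermonde}, and then substitute.

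\textbf{Step 1: the two factors coincide.} In \eqref{eq:factorization_H_segmented} the row and column frequency sets are both $\Lambda^d$, with the same lexicographic ordering. Hence
\[
\mathbf{GH}_0(m,r,D)=\mathbf V_1\mathbf\Sigma\mathbf V_1^\top .
\]
In the notation of \eqref{eq:ghm_music_factorization} this gives $\mathbf V_2=\mathcal V_{\Lambda^d}(\mathcal X)=\mathbf V_1$, so $\sigma_{\min}(\mathbf V_1)=\sigma_{\min}(\mathbf V_2)$ trivially.

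The diagonal factor differs from $\operatorname{diag}(a_1,\dots,a_n)$ only by the unimodular factors $e^{-i\Omega\boldsymbol 1^\top\mathbf y_j}$. Its smallest diagonal modulus is therefore still $m_{\min}$. One could absorb these phases into the columns of $\mathbf V_2$ instead, but this is unnecessary: the Wedin-type argument behind Lemma~\ref{lem:stability_ghm_music} only uses $\sigma_n(\mathbf{GH}_0)\ge m_{\min}\sigma_{\min}(\mathbf V_1)\sigma_{\min}(\mathbf V_2)$, and that bound holds verbatim here.

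\textbf{Step 2: the lower bound.} Theorem~\ref{thm:segmented-vandermonde} applies under the assumed hypotheses and gives exactly $\sigma_{\min}(\mathbf V_1)\ge B_{\mathrm{cl}}$. To see $B_{\mathrm{cl}}>0$, check each factor:
\begin{itemize}
\item $\beta>\frac{1}{2\log 2}$ gives $e^{1/(2\beta)}<2$, so $2-e^{1/(2\beta)}>0$;
\item $r\ge 2n^\star>0$ and $m\ge1$, so the square-root factor is positive;
\item $\Delta_1(\mathcal X)>0$ because the nodes are distinct.
\end{itemize}
Full column rank of $\mathbf V_1=\mathbf V_2$ follows immediately.

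\textbf{Step 3: dimension hypotheses of Lemma~\ref{lem:stability_ghm_music}.} We need $M_1=M_2=L^d$ with $L^d>n$. The extra assumption $m\ge n$ secures this, since
\[
L^d\ge L=(r+1)(m+1)\ge m+1>n .
\]

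\textbf{Step 4: conclusion.} Since the constant in Lemma~\ref{lem:stability_ghm_music} decreases as the product of singular values increases, the condition $2\|\Delta\|_2<m_{\min}B_{\mathrm{cl}}^2$ implies $2\|\Delta\|_2<m_{\min}\sigma_{\min}(\mathbf V_1)\sigma_{\min}(\mathbf V_2)$. The lemma then gives
\[
\|R_\sigma-R\|_\infty\le \frac{2\|\Delta\|_2}{m_{\min}\sigma_{\min}(\mathbf V_1)^2}\le \frac{2\|\Delta\|_2}{m_{\min}B_{\mathrm{cl}}^2}.
\]

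\textbf{Expected obstacle.} The only delicate point is the phase-modified diagonal $\mathbf\Sigma$ in Step 1. I would address it by noting that $\mathbf\Sigma=\operatorname{diag}(a_j)\,\mathbf P$ with $\mathbf P$ diagonal unitary. Then $\mathbf V_1\mathbf\Sigma\mathbf V_1^\top=\mathbf V_1\operatorname{diag}(a_j)(\mathbf V_1\mathbf P)^\top$, and right-multiplying by the unitary $\mathbf P$ does not change singular values, so $\sigma_{\min}(\mathbf V_1\mathbf P)=\sigma_{\min}(\mathbf V_1)$. The MUSIC quantities $R$ and $R_\sigma$ depend only on the left singular subspaces and on the row steering map $\boldsymbol\phi_{\Lambda^d}$, neither of which is affected by this change.
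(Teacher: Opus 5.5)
Your proposal is correct and takes the same route as the paper. The paper's proof bounds $\sigma_{\min}(\mathbf V_1)\sigma_{\min}(\mathbf V_2)\ge B_{\mathrm{cl}}^2$ via Theorem~\ref{thm:segmented-vandermonde} and substitutes this into Lemma~\ref{lem:stability_ghm_music}. Your additional checks fill in details the paper leaves implicit:
\begin{itemize}
\item positivity of $B_{\mathrm{cl}}$;
\item $L^d>n$, which follows from $m\ge n$;
\item absorbing the unimodular phases of $\mathbf\Sigma$ into $\mathbf V_2=\mathbf V_1\mathbf P$.
\end{itemize}
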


\begin{proof}
Theorem~\ref{thm:segmented-vandermonde} gives $\sigma_{\min}(\mathbf V_1)\sigma_{\min}(\mathbf V_2)\ge B_{\mathrm{cl}}^2$. Substituting this bound into Lemma~\ref{lem:stability_ghm_music} proves the result.
\end{proof}

For other array geometries, Lemma~\ref{lem:stability_ghm_music} can be combined with any applicable lower bound on the minimum singular value of the Vandermonde factors. Examples in their respective sampling geometries include the multivariate bounds in \cite{kunis2020smallest} and the restricted-Fourier bounds in \cite{li2021stable}.

\FloatBarrier

\subsection{Random arrays}\label{subsec:random-array-music}

For the grid-based MUSIC algorithm of Section~\ref{subsec:music-classical}, the data matrix has size $(s+1)^d\times(s+1)^d$. The random GHM $\mathbf{GH}_{\text{rand}}$ in \eqref{eq:random_hankel} provides a smaller matrix for location recovery. Its noiseless part has the factorization \eqref{eq:ghm_music_factorization}, with row frequencies $\boldsymbol\omega_p$ and column frequencies $\boldsymbol\zeta_q$. The corresponding steering vector is
\[
\boldsymbol\phi_{\mathcal A}(\mathbf y)=
[e^{i\boldsymbol\omega_1\cdot\mathbf y},\dots,e^{i\boldsymbol\omega_{M_1}\cdot\mathbf y}]^\top .
\]
The MUSIC procedure and the peak search are then applied as described in Section~\ref{subsec:music-classical}.

Unlike the segmented array, a realized random row set can exhibit manifold ambiguity even when the associated Vandermonde factors have full column rank. The experiments below evaluate realized draws, but no uniform high-probability recovery theorem over the continuous signal class is claimed.

The difference in the required measurements is illustrated in Figure~\ref{fig:full-vs-random-ghm-measurements}. In the full tensor-grid construction, the row and column frequency sets form a Cartesian grid, so all pairwise sums fill a larger Cartesian grid of measurements. In the random GHM construction, only the sums $\boldsymbol\omega_p+\boldsymbol\zeta_q$ associated with the sampled row and column frequencies are used.

\begin{figure}[htbp]
\centering
\begin{subfigure}[t]{0.48\textwidth}
\centering
\begin{tikzpicture}[x=0.36cm,y=0.36cm]
    \foreach \x in {-4,...,4} {
        \draw[gray!25] (\x,-4) -- (\x,4);
        \draw[gray!45] (\x,-4.08) -- (\x,-3.92);
    }
    \foreach \y in {-4,...,4} {
        \draw[gray!25] (-4,\y) -- (4,\y);
        \draw[gray!45] (-4.08,\y) -- (-3.92,\y);
    }
    \draw[->] (-4.45,0) -- (4.65,0) node[right] {$k_1$};
    \draw[->] (0,-4.45) -- (0,4.65) node[above] {$k_2$};
    \foreach \x in {-4,-2,0,2,4} {
        \node[font=\scriptsize,below] at (\x,-4.45) {$\x$};
    }
    \foreach \y in {-4,-2,0,2,4} {
        \node[font=\scriptsize,left] at (-4.45,\y) {$\y$};
    }
    \foreach \x in {-4,...,4} {
        \foreach \y in {-4,...,4} {
            \fill[blue!70!black] (\x,\y) circle (2pt);
        }
    }
    \node[draw,fill=white,rounded corners=1pt,align=center,font=\scriptsize] at (0,6.05)
    {full GHM\\$25\times25$ matrix\\$81$ measurements};
\end{tikzpicture}
\caption{Full tensor-grid GHM.}
\end{subfigure}
\hfill
\begin{subfigure}[t]{0.48\textwidth}
\centering
\begin{tikzpicture}[x=0.36cm,y=0.36cm]
    \foreach \x in {-4,...,4} {
        \draw[gray!25] (\x,-4) -- (\x,4);
        \draw[gray!45] (\x,-4.08) -- (\x,-3.92);
    }
    \foreach \y in {-4,...,4} {
        \draw[gray!25] (-4,\y) -- (4,\y);
        \draw[gray!45] (-4.08,\y) -- (-3.92,\y);
    }
    \draw[->] (-4.45,0) -- (4.65,0) node[right] {$k_1$};
    \draw[->] (0,-4.45) -- (0,4.65) node[above] {$k_2$};
    \foreach \x in {-4,-2,0,2,4} {
        \node[font=\scriptsize,below] at (\x,-4.45) {$\x$};
    }
    \foreach \y in {-4,-2,0,2,4} {
        \node[font=\scriptsize,left] at (-4.45,\y) {$\y$};
    }
    \foreach \x in {-4,...,4} {
        \foreach \y in {-4,...,4} {
            \fill[gray!35] (\x,\y) circle (0.8pt);
        }
    }
    \foreach \wx/\wy in {-2/-2,-1/1,1/-1,2/2} {
        \foreach \zx/\zy in {-2/1,0/-2,1/0,2/-1} {
            \pgfmathtruncatemacro{\sx}{\wx+\zx}
            \pgfmathtruncatemacro{\sy}{\wy+\zy}
            \fill[red!75!black] (\sx,\sy) circle (2.4pt);
        }
    }
    \node[draw,fill=white,rounded corners=1pt,align=center,font=\scriptsize] at (0,6.05)
    {random GHM \\$4\times4$ matrix\\$16$ measurements};
\end{tikzpicture}
\caption{Random GHM.}
\end{subfigure}
\caption{Measurement indices for full tensor-grid GHM and random GHM constructions in a two-dimensional example.}
\label{fig:full-vs-random-ghm-measurements}
\end{figure}

For both constructions, the method computes a rank-$n$ partial SVD. For dense matrices, one matrix--vector product costs $O(M_1M_2)$ for the random GHM and $O((s+1)^{2d})$ for the full tensor-grid matrix. Hence the corresponding subspace costs are $O(N_{\mathrm{mv}}M_1M_2)$ and $O(N_{\mathrm{mv}}(s+1)^{2d})$, respectively, with method-dependent iteration counts. The complete cost, including the location search, is given in Remark~\ref{rem:partial_svd}; reducing the matrix size affects only the subspace term unless the search budgets are also controlled. In the experiments below, $M_1$ and $M_2$ are selected empirically at scales comparable to $n+d$; this choice is not a proven sample-complexity law.

\subsection{Numerical experiments}\label{subsec:music-numerics}

We test MUSIC with segmented and random GHMs in dimensions $d=1$ and $d=2$ and compare it with MUSIC based on the classical multi-level Hankel matrix, denoted full-GHM~($\Omega$), where $\Omega$ is the cutoff frequency.

Unless otherwise stated, the source amplitudes have unit magnitude
and the additive noise is independently generated in the sense of the
noise model \eqref{equ:modelsetting1}. For a fixed source configuration
and a fixed method, the reported RMSE is computed over
$N_{\mathrm{trial}}=100$ independent Monte Carlo trials by optimal
permutation matching:
\[
\operatorname{RMSE}=
\left(
\frac{1}{N_{\mathrm{trial}}n}
\sum_{\ell=1}^{N_{\mathrm{trial}}}
\min_{\pi\in S_n}
\sum_{j=1}^n
\|\widehat{\mathbf y}^{(\ell)}_{\pi(j)}-\mathbf y_j\|_2^2
\right)^{1/2}.
\]
Resolution is the smallest tested separation $\Delta$ for which the
empirical recovery probability is at least $95\%$,
where a trial is successful if all estimated locations can be matched
to the true locations within tolerance $\Delta/4$.
For random GHMs, the row and column frequency sets are drawn
from the admissible integer sets subject to the bandlimit
constraint $\boldsymbol\omega_p+\boldsymbol\zeta_q\in[-\Omega,\Omega]^d$.
The same random frequency draw is used across the Monte Carlo trials
of a fixed experiment; no resampling criterion depending on the unknown
source locations is used by the recovery algorithm.

\subsubsection{Comparison of GHM constructions under the MUSIC framework}

We first compare several GHM constructions within the unified MUSIC framework; see the results in Table~\ref{table:RMSEandResoForGHM1D}. The results show that MUSIC based on segmented or random GHMs achieves resolution of the same order as MUSIC based on the full-GHM ($\Omega=809$), while speeding up SVD computation by factors of approximately $500$--$5000$. Since peak search dominates the total time, the overall speedup is less pronounced.

\begingroup
\renewcommand{\topfraction}{0.9}
\renewcommand{\bottomfraction}{0.8}
\renewcommand{\textfraction}{0.05}
\renewcommand{\floatpagefraction}{0.85}



\begin{table}[htbp]
\centering
\footnotesize
\setlength{\tabcolsep}{2pt}
\begin{tabular}{c|cccc}
Method & SVD Time (ms) & Total Time (ms) & Resolution & RMSE ($\Delta=0.24$) \\ \hline
full-GHM ($\Omega=9$) & $0.0267$ & $0.912$ & \bluebox{0.21} & $1.38\times 10^{-2}$ \\
full-GHM ($\Omega=809$) & \bluebox{131.82} & \bluebox{135.04} & $1.4\times 10^{-3}$ & $2.64\times 10^{-6}$ \\
GHM-3 ($\Omega=809$) & $0.0601$ & $2.40$ & $4.6\times 10^{-3}$ & $1.12\times 10^{-5}$ \\
GHM-5 ($\Omega=809$) & $0.152$ & $2.56$ & $1.55\times 10^{-3}$ & $9.80\times 10^{-6}$ \\
randGHM-$10\times5$ ($\Omega=809$) & $0.0217$ & $1.90$ & $4.5\times 10^{-3}$ & $6.38\times 10^{-5}$ \\
randGHM-$20\times10$ ($\Omega=809$) & $0.0627$ & $2.65$ & $2.6\times 10^{-3}$ & $1.38\times 10^{-5}$ \\
randGHM-$50\times30$ ($\Omega=809$) & $0.218$ & $2.62$ & $2.2\times 10^{-3}$ & $5.21\times 10^{-6}$
\end{tabular}
\caption{Resolution and RMSE for the four-source configuration
$\{0.5,0.5+\Delta,1.6,1.6+\Delta\}$ in dimension $d=1$
(noise level $\sigma=0.1$). GHM-$3$ uses $m=9$, $r=1$, and $D=800$,
and GHM-$5$ uses $m=9$, $r=2$, and $D=400$. The randGHM-$M_1\times M_2$
methods use random GHMs of size $M_1\times M_2$,
as in Section~\ref{subsec:random-array-music}. }
\label{table:RMSEandResoForGHM1D}
\end{table}

Figure~\ref{fig:MUSIC1Dspectrum} shows the MUSIC spectra of several representative GHMs. The full-GHM with $\Omega=9$ merges the neighboring spectral peaks within each clump into a single peak and therefore cannot distinguish the sources within a clump. At the same computational cost, the segmented GHM and random GHM clearly resolve the within-clump spectral peaks.

\begin{figure}[htbp]
    \centering
    \begin{subfigure}{0.48\textwidth}
        \centering
        \includegraphics[width=\linewidth]{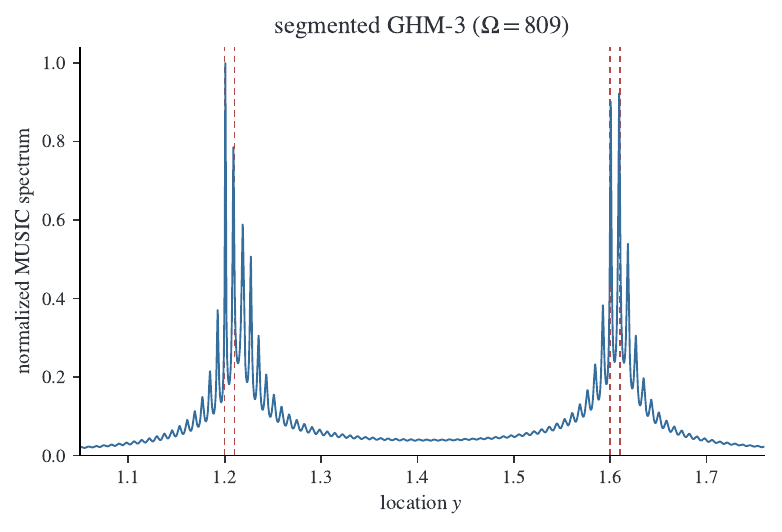}
    \end{subfigure}
    \hfill
    \begin{subfigure}{0.48\textwidth}
        \centering
        \includegraphics[width=\linewidth]{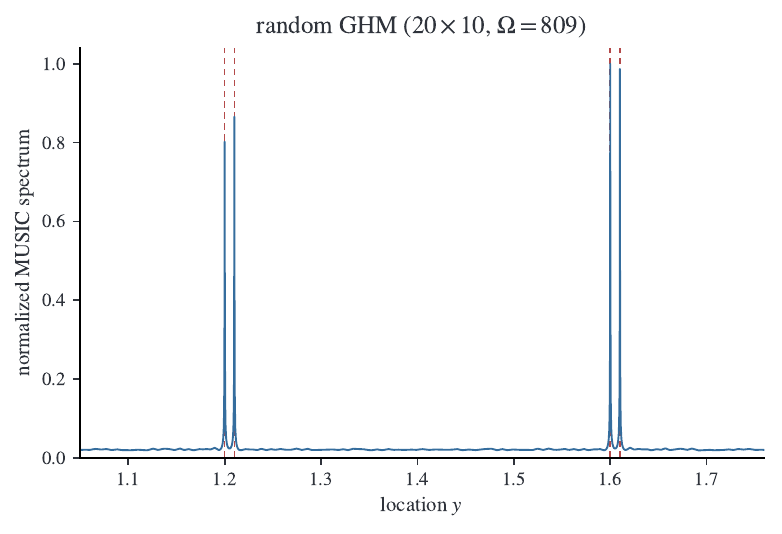}
    \end{subfigure}

    \begin{subfigure}{0.48\textwidth}
        \centering
        \includegraphics[width=\linewidth]{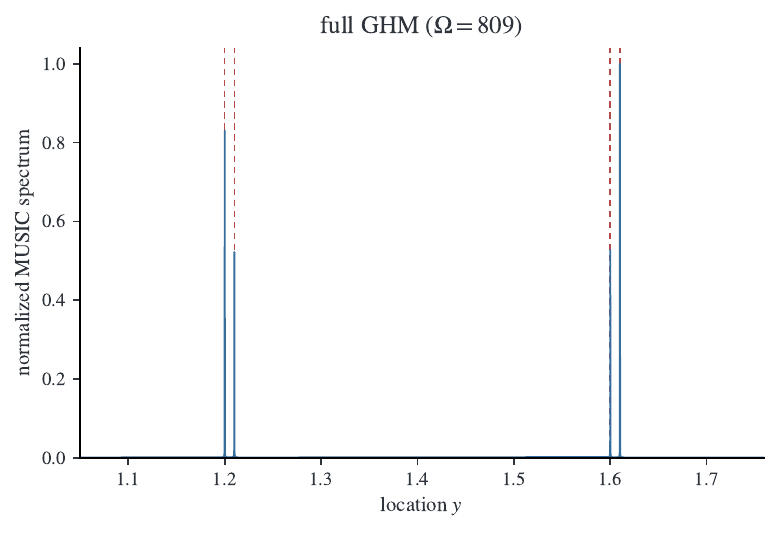}
    \end{subfigure}
    \hfill
    \begin{subfigure}{0.48\textwidth}
        \centering
        \includegraphics[width=\linewidth]{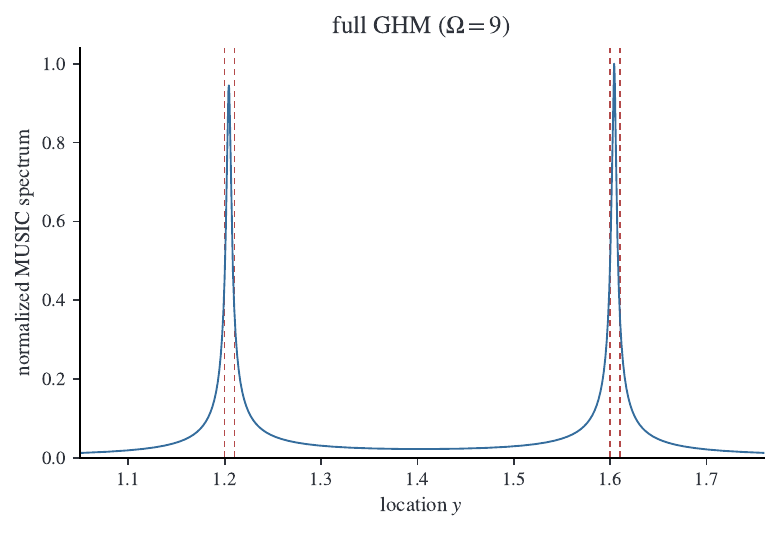}
    \end{subfigure}
    \caption{MUSIC imaging functions of four different methods in dimension $d=1$ for the source locations $\{1.2,1.21,1.6,1.61\}$. The segmented construction is GHM-$3$.}
    \label{fig:MUSIC1Dspectrum}
\end{figure}

We perform the same comparison in dimension $d=2$; see Table~\ref{table:RMSEandResoForGHM2D}. In this experiment, the noise subspace is computed using \texttt{svds} and \texttt{null} rather than a full SVD. For the full-GHM with $\Omega=809$, the matrix has size $656100\times656100$ and requires approximately $6.3$ TiB of dense complex storage. It is therefore not formed explicitly; its matrix--vector products are evaluated implicitly using FFTs.

\begin{table}[htbp]
\centering
\footnotesize
\setlength{\tabcolsep}{2pt}
\begin{tabular}{c|cccc}
Method & SVD Time (ms) & Total Time (ms) & Resolution & RMSE ($\Delta_{\min}=0.15\sqrt{2}$) \\ \hline
full-GHM ($\Omega=9$) & $0.124$ & $7.40$ & \bluebox{0.141} & $2.27\times10^{-3}$ \\
full-GHM ($\Omega=809$) & \bluebox{1667} & \bluebox{1787} & $3.54\times10^{-4}$ & $8.65\times10^{-8}$ \\
GHM-$3\times3$ ($\Omega=809$) & $0.447$ & $9.10$ & $7.07\times10^{-4}$ & $2.41\times10^{-6}$ \\
GHM-$5\times5$ ($\Omega=809$) & $2.06$ & $13.16$ & $9.19\times10^{-4}$ & $1.10\times10^{-6}$ \\
randGHM-$5\times5$ ($\Omega=809$) & $0.105$ & $8.42$ & $5.66\times10^{-3}$ & $9.23\times10^{-5}$ \\
randGHM-$50\times50$ ($\Omega=809$) & $0.107$ & $8.37$ & $1.77\times10^{-3}$ & $3.46\times10^{-6}$
\end{tabular}
\caption{Comparison of GHM constructions for the two-source configuration $\mathbf y_1=(1.1,1.1)$ and $\mathbf y_2=(1.1+\delta,1.1+\delta)$ in dimension $d=2$, where $\Delta=\sqrt2\,\delta$.}
\label{table:RMSEandResoForGHM2D}
\end{table}

\subsubsection{False peaks in MUSIC}

In the high-noise regime the segmented GHM methods with few subarrays may exhibit false peaks; GHM-$3$ is the typical case. The mechanism is visible in the noise-space correlation. In the noise-free case $U_2^*\boldsymbol\phi(\mathbf y)=0$ exactly at the true locations, so the true peaks of $J$ are infinite and dominate the spurious near-zeros of $\|U_2^*\boldsymbol\phi(\mathbf y)\|$. Under noise, $\|\widetilde U_2^*\boldsymbol\phi(\mathbf y)\|$ no longer vanishes at the true locations, while for the three-subarray geometry $\|U_2^*\boldsymbol\phi(\mathbf y)\|$ remains small at many spurious locations; the true and the spurious peaks become comparable, and false peaks appear; see Figure~\ref{fig:falsePeaksGHM3}.

However, for a random GHM, the irregular frequencies make systematic phase alignment at off-support locations unlikely. Consequently, the noise-space correlation fluctuates around a nonzero level away from the true locations, without the structured near-zeros observed for the segmented GHM; see Figure~\ref{fig:falsePeaksRandGHM}(a). As a result, false peaks are suppressed; see Figure~\ref{fig:falsePeaksRandGHM}(b). The same behavior is observed for the two-dimensional random GHM in Figure~\ref{fig:falsePeaksRandGHM2D}.

\FloatBarrier
\endgroup

\begin{figure}[H]
    \centering
    \begin{minipage}[t]{0.33\textwidth}
        \centering
        \includegraphics[width=\linewidth]{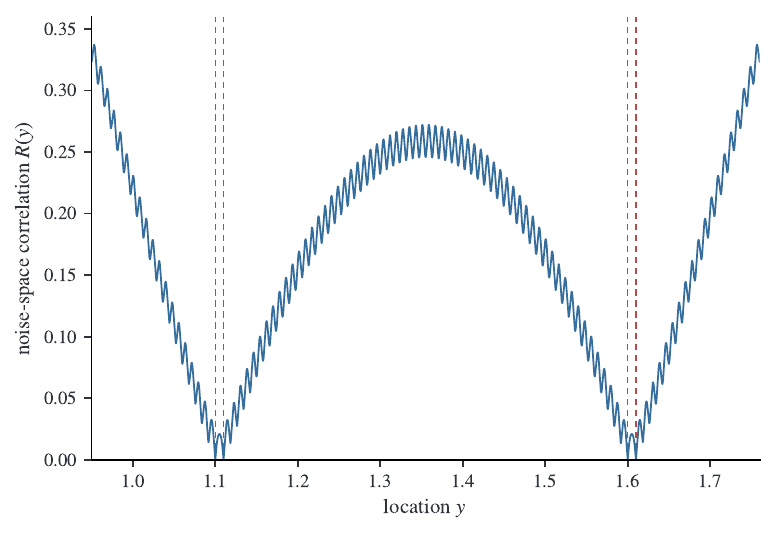}
        \caption*{(a) Noise space}
    \end{minipage}
    \hfill
    \begin{minipage}[t]{0.33\textwidth}
        \centering
        \includegraphics[width=\linewidth]{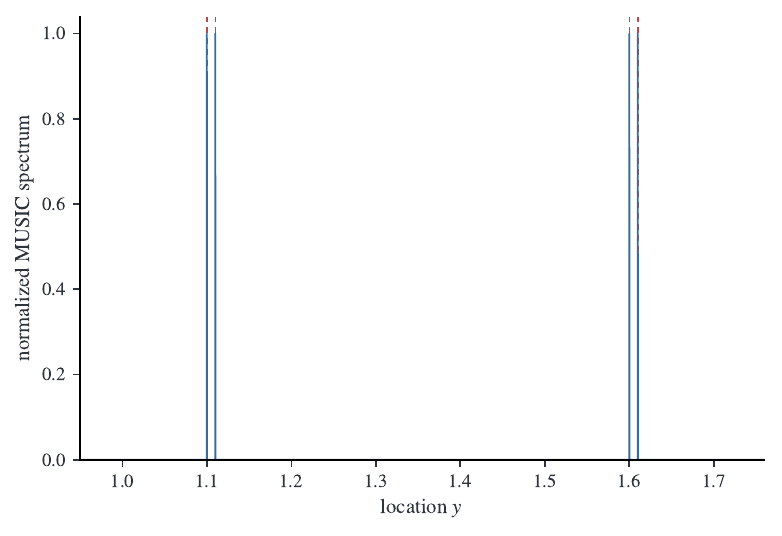}
        \caption*{(b) Spectrum (noise-free)}
    \end{minipage}
    \hfill
    \begin{minipage}[t]{0.33\textwidth}
        \centering
        \includegraphics[width=\linewidth]{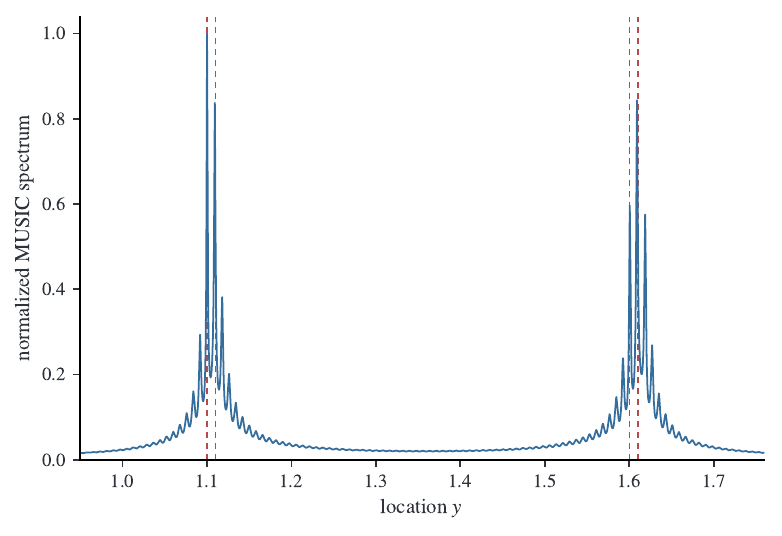}
        \caption*{(c) Spectrum ($\sigma=0.1$)}
    \end{minipage}
    \caption{MUSIC spectra and the noise space based on GHM-$3$ for the source locations $\{1.1,1.11,1.6,1.61\}$.}
    \label{fig:falsePeaksGHM3}
\end{figure}

\begin{figure}[H]
    \centering
    \begin{minipage}[t]{0.33\textwidth}
        \centering
        \includegraphics[width=\linewidth]{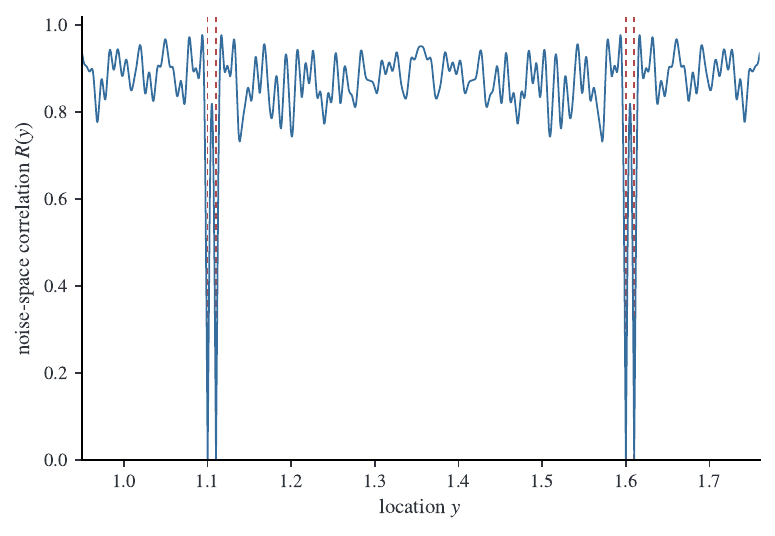}
        \caption*{(a) Noise space}
    \end{minipage}
    \hspace{3em}
    \begin{minipage}[t]{0.33\textwidth}
        \centering
        \includegraphics[width=\linewidth]{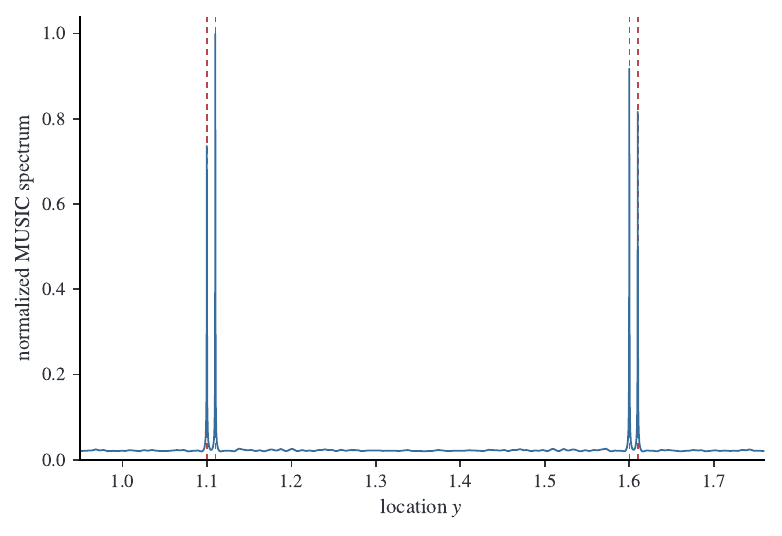}
        \caption*{(b) Spectrum ($\sigma=0.1$)}
    \end{minipage}
    \caption{MUSIC spectrum and the noise space based on a random GHM of size $20\times 10$ for the source locations $\{1.1,1.11,1.6,1.61\}$.}
    \label{fig:falsePeaksRandGHM}
\end{figure}

\begin{figure}[!htbp]
    \centering
    \begin{minipage}[t]{0.48\textwidth}
        \centering
        \includegraphics[width=\linewidth]{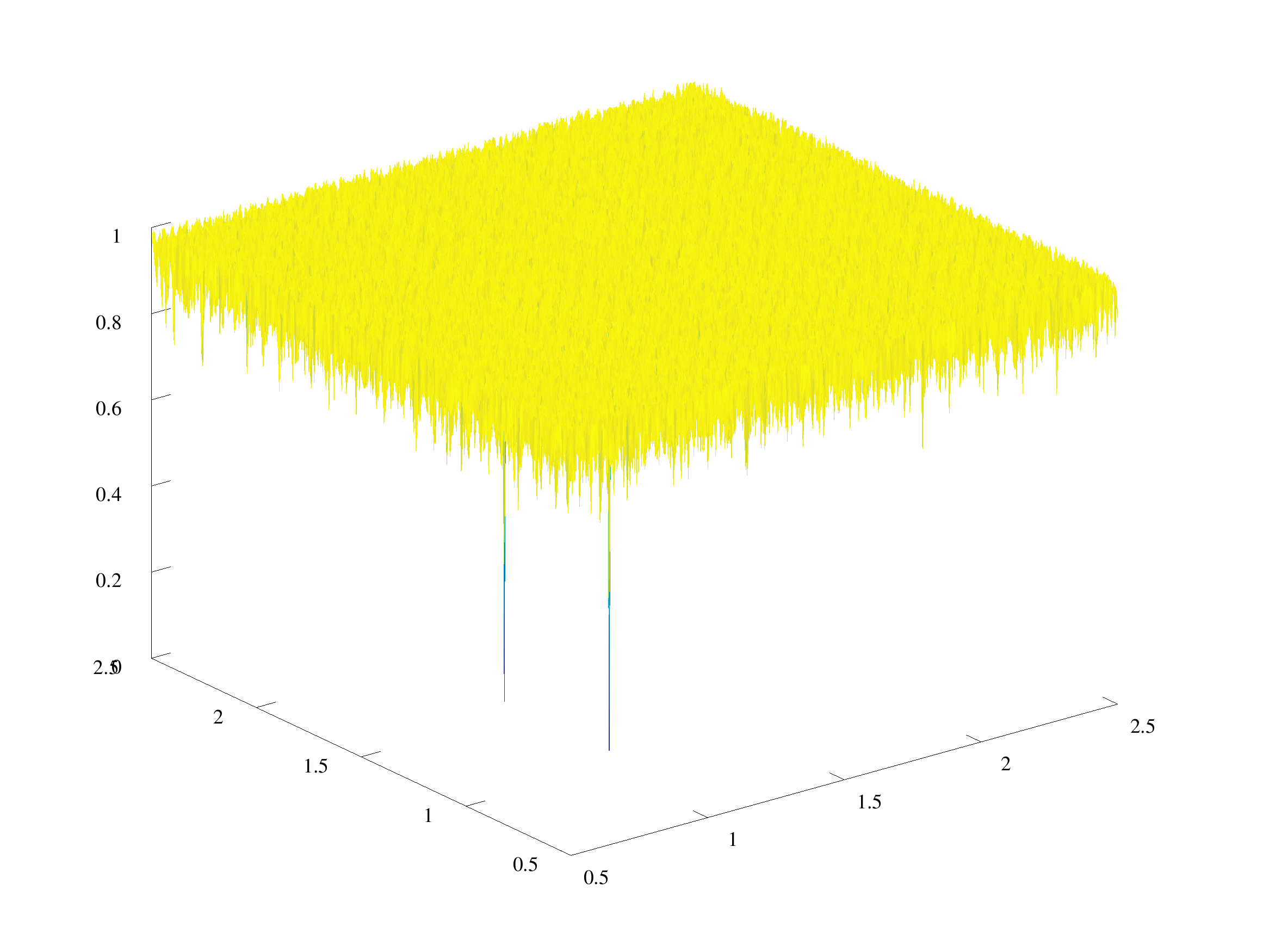}
        \caption*{(a) Noise space}
    \end{minipage}
    \hfill
    \begin{minipage}[t]{0.48\textwidth}
        \centering
        \includegraphics[width=\linewidth]{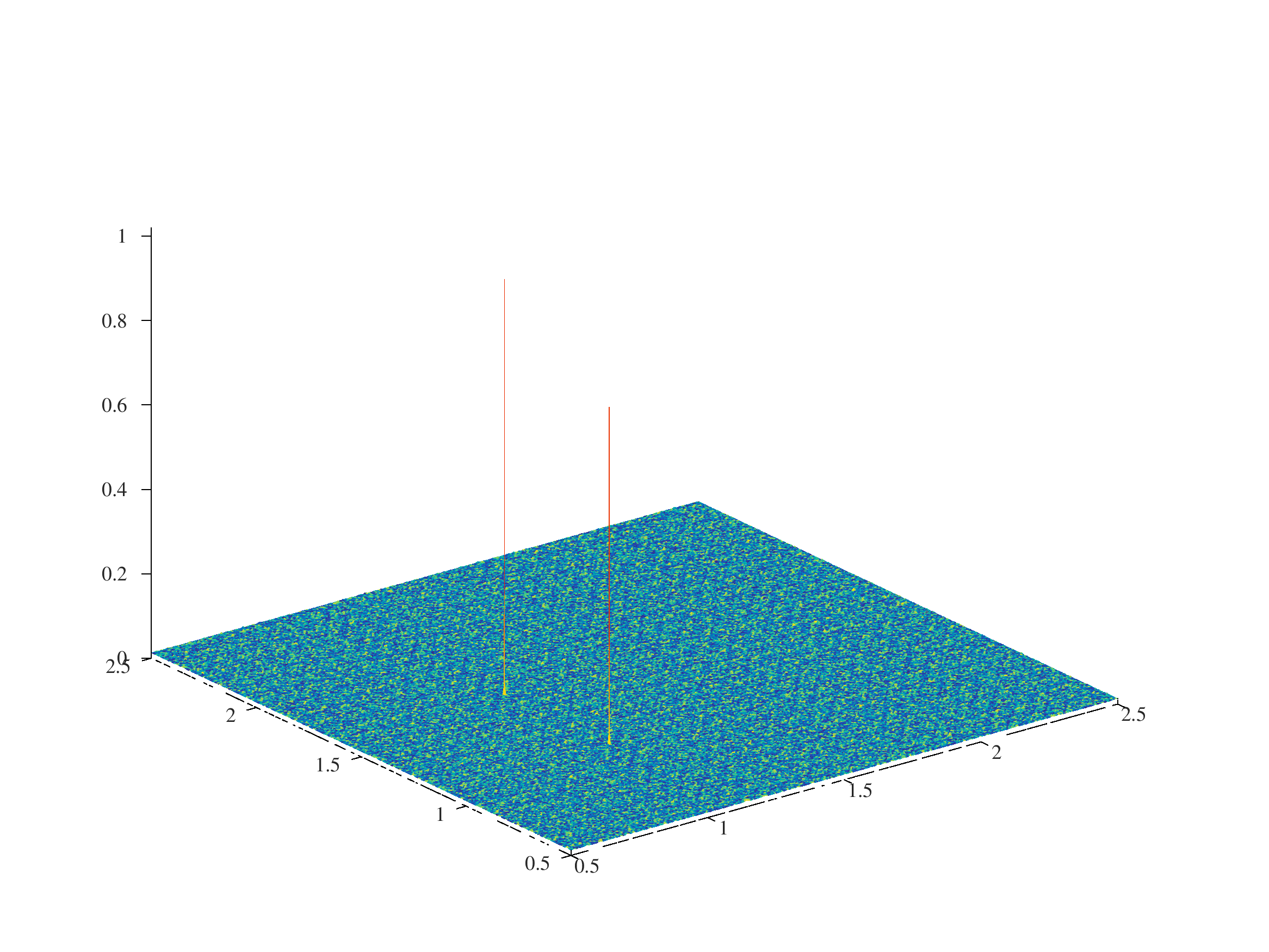}
        \caption*{(b) Spectrum ($\sigma=0.1$)}
    \end{minipage}
    \caption{Two-dimensional MUSIC spectrum and the noise space based on a random GHM of size $20\times 20$ for the source locations $\{(1.1,1.1),(1.1,1.6)\}$.}
    \label{fig:falsePeaksRandGHM2D}
\end{figure}

\subsubsection{Grid-free high-dimensional random-GHM MUSIC localization}

Although a random GHM avoids the exponential growth of the data-matrix dimension, direct MUSIC peak search still requires evaluating the spectrum on a high-dimensional tensor-product grid. The number of grid points therefore remains exponential in $d$. To remove this remaining dimensional bottleneck, we introduce a grid-free localization algorithm. The algorithm first obtains suitable initial estimates from the intersection of the estimated signal subspace with the constant-modulus steering manifold, followed by full-dimensional phase decoding. Starting from these estimates, Newton-type local iterations locate the peaks of the MUSIC spectrum.

Write $\mathcal W=\{\boldsymbol\omega_1,\ldots,\boldsymbol\omega_{M_1}\}$ and $\mathcal Z=\{\boldsymbol\zeta_1,\ldots,\boldsymbol\zeta_{M_2}\}$ for the realized row- and column-frequency sets. Using the signal-subspace matrix $\widetilde U_1$ in \eqref{eq:music-perturbed-svd}, constant-modulus representatives of the signal subspace are obtained from multiple random starts of
\[
F_{\mathrm{CM}}(\mathbf q)
:=\sum_{p=1}^{M_1}
\left(M_1\left|(\widetilde U_1\mathbf q)_p\right|^2-1\right)^2,
\qquad \|\mathbf q\|_2=1.
\]
For phase decoding, let $D_{\mathcal W}\in\mathbb Z^{(M_1-1)\times d}$ have rows $(\boldsymbol\omega_p-\boldsymbol\omega_1)^\top$, $p=2,\ldots,M_1$. Lattice reduction yields a short integer combination matrix $C\in\mathbb Z^{d\times(M_1-1)}$ such that $B:=CD_{\mathcal W}\in\mathbb Z^{d\times d}$ is nonsingular and well-conditioned. For a normalized subspace representative $\mathbf z$, set
\[
\boldsymbol\theta(\mathbf z):=
\bigl(\arg(z_p\overline{z_1})\bigr)_{p=2}^{M_1},
\qquad
\boldsymbol\eta(\mathbf z):=
\arg\!\left(\exp\!\left(iC\boldsymbol\theta(\mathbf z)\right)\right),
\]
where the argument and exponential in the definition of $\boldsymbol\eta$ are applied entrywise. Its aliases in the prescribed search region $\mathcal D$ are
\[
\mathcal L(\mathbf z):=
\left\{B^{-1}\bigl(\boldsymbol\eta(\mathbf z)+2\pi\mathbf k\bigr)\in\mathcal D:
\mathbf k\in\mathbb Z^d\right\}.
\]
For $r=1,\ldots,N_{\mathrm{CM}}$, draw $\mathbf q_r^{(0)}$ uniformly from $\mathbb S^{2n-1}$ and apply constrained local optimization to $F_{\mathrm{CM}}$ on the unit sphere. Denote the resulting point by $\mathbf q_r$ and set
\[
\mathbf z_r
=\frac{\widetilde U_1\mathbf q_r}{\|\widetilde U_1\mathbf q_r\|_2}.
\]
Starting from each point in $\mathcal L(\mathbf z_r)$, $r=1,\ldots,N_{\mathrm{CM}}$, Newton-type local iterations are applied to $\frac{1}{J_\sigma^2(\mathbf y)}$ over $\mathcal D$. Among the refined locations, those corresponding to the $n$ largest MUSIC spectral peaks form the recovered set $\widehat{\mathcal X}$.

We evaluate the method for $d=1,\ldots,8$ with $n=4$, cutoff frequency $\Omega=809$, and $\sigma=10^{-3}$. For each $d$, two centers $\mathbf c_1\in[-0.9,-0.4]^d$ and $\mathbf c_2\in[0.4,0.9]^d$ and two unit vectors $\mathbf v_1,\mathbf v_2\in\mathbb S^{d-1}$ are drawn once, giving the two-clump configuration
\begin{equation}
\label{eq:two-clump-configuration}
\mathcal X_\Delta
=\bigcup_{a=1}^2
\left\{\mathbf c_a-\frac{\Delta}{2}\mathbf v_a,
\mathbf c_a+\frac{\Delta}{2}\mathbf v_a\right\}.
\end{equation}
For the constant-modulus initialization, $N_{\mathrm{CM}}=64$ random starting points are optimized.

\begin{table}[htbp]
\centering
\small
\setlength{\tabcolsep}{7pt}
\begin{tabular}{c|c|ccc}
$d$ & randGHM $M$ & Total time (ms) & Resolution & RMSE ($\Delta=0.1$) \\ \hline
$1$ & $20$ & $47.53$  & $3.8\times10^{-3}$ & $1.01\times10^{-7}$ \\
$2$ & $24$ & $55.78$  & $3.3\times10^{-3}$ & $1.26\times10^{-7}$ \\
$3$ & $28$ & $77.08$  & $3.4\times10^{-3}$ & $1.18\times10^{-7}$ \\
$4$ & $32$ & $113.52$ & $4.7\times10^{-3}$ & $1.34\times10^{-7}$ \\
$5$ & $36$ & $175.72$ & $3.9\times10^{-3}$ & $1.32\times10^{-7}$ \\
$6$ & $40$ & $255.44$ & $4.5\times10^{-3}$ & $1.27\times10^{-7}$ \\
$7$ & $44$ & $396.11$ & $6.3\times10^{-3}$ & $1.26\times10^{-7}$ \\
$8$ & $64$ & $894.70$ & $4.5\times10^{-3}$ & $8.56\times10^{-8}$ \\
\end{tabular}
\caption{Total time, resolution, and recovery accuracy of the proposed grid-free random-GHM MUSIC localization method for the two-clump configuration in~\eqref{eq:two-clump-configuration}. Total time is averaged over $100$ independent trials.}
\label{table:RandomGHMLocationDimension}
\end{table}

The full-GHM MUSIC method is also compared in Table~\ref{table:FullGHMLocationDimension} using the same grid-free peak-localization procedure. The results illustrate the curse of dimensionality: even with FFT-based implicit matrix-vector products in place of explicit matrix construction, the memory requirement for $d\geq 3$ is prohibitive on a personal computer.

\begin{table}[!htbp]
\centering
\small
\setlength{\tabcolsep}{4pt}
\begin{tabular}{c|cc|ccc}
$d$ & $N_d$ & Memory & Total time (ms) & Resolution & RMSE ($\Delta=0.1$) \\ \hline
$1$ & $810$ & $25.3\,\mathrm{KiB}$ & $361.89$ & $3.7\times10^{-3}$ & $3.65\times10^{-8}$ \\
$2$ & $656100$ & $40.0\,\mathrm{MiB}$ & $2755.13$ & $3.1\times10^{-3}$ & $1.81\times10^{-9}$ \\
$3$ & $5.3\times10^{8}$ & $63\,\mathrm{GiB}$ & OOM & $-$ & $-$ \\
$4$ & $4.3\times10^{11}$ & $100\,\mathrm{TiB}$ & OOM & $-$ & $-$ \\
$5$ & $3.5\times10^{14}$ & $160\,\mathrm{PiB}$ & OOM & $-$ & $-$ \\
$6$ & $2.8\times10^{17}$ & $250\,\mathrm{EiB}$ & OOM & $-$ & $-$ \\
$7$ & $2.3\times10^{20}$ & $400\,\mathrm{ZiB}$ & OOM & $-$ & $-$ \\
$8$ & $1.9\times10^{23}$ & $620\,\mathrm{YiB}$ & OOM & $-$ & $-$ \\
\end{tabular}
\caption{Grid-free full-GHM MUSIC localization with $\Omega=809$ for the two-clump configuration in~\eqref{eq:two-clump-configuration}, where the full GHM is of size $N_d\times N_d$ with $N_d=810^d$. Memory is the storage of one double-precision complex FFT convolution grid, namely $16(2\Omega+1)^d$ bytes. Total time is averaged over $100$ independent trials.}
\label{table:FullGHMLocationDimension}
\end{table}

\FloatBarrier
\section{Conclusion and future work}\label{sec:conclusionandfuture}

In this work, we developed a generalized Hankel/Toeplitz framework for multi-dimensional super-resolution from nonuniform Fourier measurements by taking the underlying Vandermonde decomposition, rather than the rigid Hankel/Toeplitz pattern, as the fundamental algebraic structure. This framework unifies classical Hankel, Toeplitz, and multi-level constructions while allowing substantially more flexible sampling geometries and matrix dimensions that scale with the intrinsic degrees of freedom rather than with a full tensor-product grid. On the theoretical side, we established explicit multi-dimensional sufficient separation conditions and computational resolution bounds for source-number detection in the multi-dimensional super-resolution problem. For segmented frequency sets, which model sparse and distributed sampling geometries, we derived deterministic lower bounds for the minimum singular values of the associated generalized Vandermonde matrices and corresponding thresholding guarantees for both single-clump and multi-clump configurations. To overcome the dimensional and computational bottlenecks of conventional multi-level Hankel matrices, we further introduced randomized GHM constructions whose computational cost scales with the effective information content, together with conditional recovery guarantees expressed directly in terms of the minimum singular values of the realized Vandermonde factors. We also extended the generalized framework from source-number detection to source localization by developing a GHM-based MUSIC method for nonuniform Fourier measurements, whose stability under noise is controlled by Vandermonde conditioning and whose computational burden can be reduced through partial singular-value decomposition and efficient peak-search strategies. Taken together, these results establish a unified connection among sampling geometry, Vandermonde conditioning, computational resolution, number detection, and subspace localization, and demonstrate that the generalized Vandermonde decomposition provides a flexible and scalable foundation for extending classical Hankel/Toeplitz-based methods to sparse, segmented, randomized, and high-dimensional sensing scenarios.

This work also opens the door to many new problems that have important applications. Firstly, the numerical experiments suggest that the randomized generalized Hankel
matrix (GHM) performs remarkably well not only for a cluster of closely spaced
sources, but also for sources distributed over a relatively large imaging
region. While the super-resolution capability for a single cluster of
closely spaced sources has been elucidated in
Theorem~\ref{thm:resolutionrandghmnumber1}, the mechanism underlying the favorable performance
for well-separated or multi-cluster sources is still not completely
understood. This observation motivates understanding why randomized GHMs can reliably
distinguish sources distributed over a large region. This can be transformed into the minimum singular value estimation of a generalized Vandermonde matrix (random Fourier matrix), which is related to the compressive sensing theory and random matrix theory. Our second objective is to establish a more complete theory for the
resolution limit of super-resolution from random measurements. Such a theory should provide both lower and upper bounds for the resolution
limit, as well as quantitative stability estimates for the recovered
locations and amplitudes. It should also reveal how randomization affects these different regimes and
whether random sampling can achieve nearly the same resolution as complete
Fourier measurements with substantially fewer measurements. Beyond information-theoretic resolution limits, it is equally important to
characterize the performance of concrete reconstruction algorithms. Such results would make it possible to compare
different generalized Hankel/Toeplitz constructions from the perspectives
of resolution, robustness, sample complexity, and computational complexity. Our third direction is motivated by the observation that the proposed
generalized framework is not specific to one super-resolution algorithm.
A large family of classical methods in spectral estimation, array signal
processing, system identification, and inverse problems are based on
Hankel or Toeplitz matrices. Representative examples include Prony-type
methods, matrix-pencil methods, MUSIC, ESPRIT, and structured low-rank
approximation. The fourth direction is to consider the scattering problem and bistatic array signal processing, which can also be analyzed based on the generalized Hankel/Toeplitz matrix framework, where new theoretical and algorithmic problems arise. 

\section*{Acknowledgments}
This work was partially supported by the Fundamental and Interdisciplinary Disciplines Breakthrough Plan of the Ministry of Education
of China grant number JYB2025XDXM103 and the National Key R\&D Program of China grant number 2024YFA1016000.

\section*{Data Availability Statement}
Data and codes supporting the findings of this work are available upon request.

\section*{Conflict of interest} 
The authors have no conflicts of interest to declare. 

\appendix

\section{Minimum singular value of multivariate Vandermonde matrices}\label{sec:appendix-li}

This appendix is devoted to the proof of the minimum singular value estimate for contiguous (single-clump) multivariate Vandermonde matrices stated in Lemma~\ref{lem:uniform-Vandermonde}, together with the auxiliary results needed for the threshold theorem Theorem~\ref{liuthm5.1v2} and hence for the resolution limit in Theorem~\ref{thm:li-resolution}. In \cite{li2025nonharmonic}, Li studied the conditioning of nonharmonic multivariate Fourier matrices under general geometric assumptions on the node set; his estimates are governed by the maximum local cluster size and by hyperplane geometry, and they hold uniformly over multi-clump configurations. Here we specialize to the single-clump setting, in which all sources are confined to the ball $B_{\frac{\pi n}{\Omega},1}^d(\mathbf0)$ and the frequency set is a contiguous cube, and we construct explicit vanishing trigonometric polynomials to obtain a sharper, fully explicit lower bound. This improved conditioning is exactly what yields the better resolution, and the corresponding results are formally given in Theorem~\ref{thm:li-resolution}. The multi-clump case is treated separately in Appendix~\ref{sec:appendix-segmented-Vandermonde}.

Throughout this appendix we follow the notation of Subsection~\ref{subsec:uniform array}: $\mathbf V_1$ and $\mathbf \Sigma$ are defined there (see \eqref{eq:factorization_H}), and we assume, as in Lemma~\ref{lem:uniform-Vandermonde}, that $s$ is even. For $h\ge0$, let
\[
\overline Q_h^{\,d}:=[-h,h]^d
\]
denote the closed frequency cube, and let $\mathcal P(\overline Q_h^{\,d})$ be the trigonometric polynomials on the unit torus $\mathbb T^d$ whose Fourier coefficients are supported in $\overline Q_h^{\,d}\cap\mathbb Z^d$. All $L^p(\mathbb T^d)$ norms below use normalized Haar measure; this frequency-set notation is distinct from the open spatial neighborhood $Q_h^d(\mathbf x)=B_{h,\infty}^d(\mathbf x)$ defined in Section~\ref{sec:number-detection}. We define the centered vector
\[
\varphi_s(t):=\bigl(t^{-s/2},t^{-s/2+1},\dots,t^{s/2}\bigr)^\top\in\mathbb C^{s+1}.
\]
Then $\psi_s(t)=t^{s/2}\varphi_s(t)$. Accordingly, for each node $\mathbf y_j=((\mathbf y_j)_1,\dots,(\mathbf y_j)_d)\in\mathbb R^d$, we introduce the centered Vandermonde columns
\(
(\widetilde{\mathbf V}_1)_j:=\varphi_s\!\left(e^{i(\mathbf y_j)_1\Omega/s}\right)\otimes\cdots\otimes \varphi_s\!\left(e^{i(\mathbf y_j)_d\Omega/s}\right)
\), and write $\widetilde{\mathbf V}_1=[(\widetilde{\mathbf V}_1)_1,\dots,(\widetilde{\mathbf V}_1)_n]$.

With the rescaled nodes
\[
\mathbf t_j:=\frac{\Omega}{2\pi s}\mathbf y_j,\qquad j=1,\dots,n,
\]
the matrix $\widetilde{\mathbf V}_1$ is precisely the generalized Vandermonde matrix associated with the frequency set $\overline Q_{s/2}^{\,d}\cap\mathbb Z^{d}$ and the nodes $\{\mathbf t_j\}_{j=1}^{n}\subset\mathbb T^d$; under the lexicographic ordering of this frequency set,
\[
\widetilde{\mathbf V}_1=\bigl[e^{2\pi i\,\boldsymbol\omega\cdot \mathbf t_j}\bigr]_{\boldsymbol\omega\in \overline Q_{s/2}^{\,d}\cap\mathbb Z^{d},\,1\le j\le n},
\qquad
\Phi:=\bigl[e^{-2\pi i\,\boldsymbol\omega\cdot \mathbf t_j}\bigr]_{\boldsymbol\omega\in \overline Q_{s/2}^{\,d}\cap\mathbb Z^{d},\,1\le j\le n},
\]
so that $\widetilde{\mathbf V}_1=\overline{\Phi}$.

\begin{lem}\label{lem:nonnegative_to_centered}
Recall that \(\mathbf V_1\) is the Vandermonde matrix in \eqref{eq:factorization_H}. There exists a diagonal unitary matrix
\[
\mathbf U:=\operatorname{diag}\!\left(
e^{\frac{i \Omega}{2}\sum_{q=1}^{d}(\mathbf y_1)_q},
\dots,
e^{\frac{i \Omega}{2}\sum_{q=1}^{d}(\mathbf y_n)_q}
\right)
\]
such that $\mathbf V_1=\widetilde{\mathbf V}_1\mathbf U$. Consequently,
\[
\sigma_k(\mathbf V_1)=\sigma_k(\widetilde{\mathbf V}_1),
\qquad k=1,\dots,n.
\]
Moreover,  since $\boldsymbol 1^\top\mathbf y_j=\sum_{q=1}^{d}(\mathbf y_j)_q$, we have
\[
\mathbf V_1\mathbf\Sigma \mathbf V_1^\top
=
\widetilde{\mathbf V}_1\operatorname{diag}(a_1,\dots,a_n) \widetilde{\mathbf V}_1^\top.
\]
\end{lem}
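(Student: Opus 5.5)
The plan is to compare the two Vandermonde matrices entrywise, using $\psi_s(t)=t^{s/2}\varphi_s(t)$ in each coordinate. For each node $\mathbf y_j$ set $t_q:=e^{i(\mathbf y_j)_q\Omega/s}$. Since $s$ is even, $s/2$ is an integer, and the identity $\psi_s(t_q)=t_q^{s/2}\varphi_s(t_q)$ holds with a genuine scalar factor $t_q^{s/2}=e^{i(\mathbf y_j)_q\Omega/2}$. Multilinearity of the Kronecker product then gives
\[
(\mathbf V_1)_j=\Bigl(\prod_{q=1}^d e^{\frac{i\Omega}{2}(\mathbf y_j)_q}\Bigr)(\widetilde{\mathbf V}_1)_j
=e^{\frac{i\Omega}{2}\boldsymbol 1^\top\mathbf y_j}(\widetilde{\mathbf V}_1)_j .
\]
Collecting these column identities is exactly the statement $\mathbf V_1=\widetilde{\mathbf V}_1\mathbf U$. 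The matrix $\mathbf U$ is diagonal with unimodular entries, so it is unitary.

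The equality of singular values follows at once. Right multiplication by a unitary matrix preserves the Gram matrix up to unitary similarity, since $\mathbf V_1^*\mathbf V_1=\mathbf U^*\widetilde{\mathbf V}_1^*\widetilde{\mathbf V}_1\mathbf U$. Hence $\sigma_k(\mathbf V_1)=\sigma_k(\widetilde{\mathbf V}_1)$ for all $k$.

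For the factorization identity, I would write
\[
\mathbf V_1\mathbf\Sigma\mathbf V_1^\top=\widetilde{\mathbf V}_1\,\mathbf U\mathbf\Sigma\mathbf U^\top\,\widetilde{\mathbf V}_1^\top .
\]
Note that $\mathbf U^\top=\mathbf U$ (transpose, not adjoint). The three diagonal matrices commute, so the $j$-th diagonal entry of $\mathbf U\mathbf\Sigma\mathbf U$ is
\[
e^{i\Omega\boldsymbol 1^\top\mathbf y_j}\,a_je^{-i\Omega\boldsymbol 1^\top\mathbf y_j}=a_j ,
\]
using the definition of $\mathbf\Sigma$. This gives $\widetilde{\mathbf V}_1\operatorname{diag}(a_1,\dots,a_n)\widetilde{\mathbf V}_1^\top$.

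There is no real obstacle here. The only points needing care are that $s$ is even, so that $t^{s/2}$ factors out with an integer exponent and there is no branch issue, and that the Hankel factorization uses the transpose rather than the adjoint. That is why the phases cancel against $\mathbf\Sigma$ instead of cancelling each other.
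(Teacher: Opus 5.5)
Your proposal is correct and matches the paper's proof: you factor $t_q^{s/2}=e^{i(\mathbf y_j)_q\Omega/2}$ out of each Kronecker factor to get $(\mathbf V_1)_j=e^{\frac{i\Omega}{2}\boldsymbol 1^\top\mathbf y_j}(\widetilde{\mathbf V}_1)_j$, note that right multiplication by a diagonal unitary preserves singular values, and use $\mathbf U\mathbf\Sigma\mathbf U^\top=\mathbf U\mathbf\Sigma\mathbf U=\operatorname{diag}(a_1,\dots,a_n)$. Your observation that the phases cancel against $\mathbf\Sigma$ because the factorization uses the transpose rather than the adjoint is a point the paper leaves implicit.
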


\begin{proof}
Since $\psi_s(t)=t^{s/2}\varphi_s(t)$, for each $j$ we have
\(
(\mathbf V_1)_j
=
e^{\frac{i\Omega}{2}\sum_{q=1}^{d}(\mathbf y_j)_q}(\widetilde{\mathbf V}_1)_j
\). This gives $\mathbf V_1=\widetilde{\mathbf V}_1\mathbf U$. Since $\mathbf U$ is diagonal unitary, the singular values are preserved. The last identity follows from the fact that $\mathbf U\mathbf\Sigma\mathbf U=\operatorname{diag}(a_1,\dots,a_n)$.
\end{proof}

The following lemma is the specialization of \cite[Lemma 3.8]{li2025nonharmonic} to the $\ell^\infty$ frequency cube and the dual $\ell^1$ spatial metric.

\begin{lem}\label{lem2:uniform-Vandermonde}
Let \( \mathcal{U} \subseteq [-\frac{1}{2}, \frac{1}{2})^d \) be a finite set of at most \( r \) elements such that \( \mathbf 0 \in \mathcal{U} \) and \( \|\mathbf u\|_{1} \leq \frac{1}{4} \) for each \( \mathbf u \in \mathcal{U} \). For any real \( h \geq 2r \), there exists \( f \in \mathcal{P}(\overline{Q}_{h(r-1)/r}^{\,d}) \) such that \( f(\mathbf 0)=1 \), \( f \) vanishes on \( \mathcal{U}\setminus\{\mathbf 0\} \), and
\[
\|f\|_{L^\infty(\mathbb T^d)}
\leq
\sqrt{2^{|\mathcal U|-1}}
\prod_{\substack{\mathbf u\in\mathcal U\\0<\|\mathbf u\|_{1}\leq \frac{r}{2h}}}
\frac{r}{2h\|\mathbf u\|_{1}}.
\]
\end{lem}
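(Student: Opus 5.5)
We will construct $f$ as a product of one-variable-direction factors, one for each nonzero node of $\mathcal U$. For each $\mathbf u\in\mathcal U\setminus\{\mathbf 0\}$ we pick a coordinate direction and an integer frequency vector $\mathbf k_{\mathbf u}\in\mathbb Z^d$ with $\|\mathbf k_{\mathbf u}\|_\infty\le h/r$. The vector is chosen to nearly align with $\mathbf u$ in the $\ell^1$/$\ell^\infty$ duality: take $(\mathbf k_{\mathbf u})_q=\lfloor h/r\rfloor\operatorname{sgn}(u_q)$. Then $\mathbf k_{\mathbf u}\cdot\mathbf u=\lfloor h/r\rfloor\|\mathbf u\|_1$, and since $\|\mathbf u\|_1\le\frac14$ and $h/r\ge2$, the phase $2\pi\mathbf k_{\mathbf u}\cdot\mathbf u$ stays in a controlled range. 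In fact, to make it exactly nonzero modulo $2\pi$, we should use a real dilation. The cleaner route is to pass to the one-dimensional Lemma~3.8 of \cite{li2025nonharmonic} along each direction.

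Concretely, we define the factor
\[
g_{\mathbf u}(\mathbf x)=\frac{1-e^{2\pi i\mathbf k_{\mathbf u}\cdot(\mathbf x-\mathbf u)}}{1-e^{-2\pi i\mathbf k_{\mathbf u}\cdot\mathbf u}}.
\]
It vanishes at $\mathbf u$, equals $1$ at $\mathbf 0$, and has frequencies in $\{0,\mathbf k_{\mathbf u}\}\subset\overline Q^{\,d}_{h/r}$. We then take $f=\prod_{\mathbf u\ne\mathbf 0}g_{\mathbf u}$. The product has at most $r-1$ factors, so its spectrum lies in $\overline Q^{\,d}_{h(r-1)/r}$, which settles support, normalization and vanishing.

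For the sup-norm, we use a symmetrized factor instead. Replacing $g_{\mathbf u}$ by a cosine-type factor $\frac{\cos(\pi\mathbf k\cdot\mathbf x)\text{-like}}{\cdots}$ gives the $\sqrt2$ per factor. The idea is the $L^\infty$ bound $|1-e^{i\theta}|\le2$ combined with the denominator bound $|1-e^{-2\pi i\mathbf k\cdot\mathbf u}|=2|\sin(\pi\mathbf k\cdot\mathbf u)|\ge4\,\mathbf k\cdot\mathbf u$ when $\mathbf k\cdot\mathbf u\le\frac12$.

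We split into two cases, matching the product in the statement.

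\textbf{Near nodes}, $\|\mathbf u\|_1\le r/(2h)$: here we choose $|\mathbf k|$ of full size $h/r$ in the sign pattern of $\mathbf u$, which yields a factor of order $r/(2h\|\mathbf u\|_1)$.

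\textbf{Far nodes}, $\|\mathbf u\|_1> r/(2h)$: here we instead choose a smaller real-scaled frequency $\kappa\le h/r$ with $\kappa\|\mathbf u\|_1=\frac12$. The denominator then equals $2$, so the factor has modulus at most $1$, or $\sqrt2$ after the averaging trick, contributing no product term.

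The obstacle is that frequencies must be integer lattice points while the desired dilation $\kappa$ is real. Resolving this is exactly where the assumption $h\ge2r$ and the hypothesis $\|\mathbf u\|_1\le\frac14$ enter. With them, rounding $\kappa\operatorname{sgn}(u_q)$ to integers per coordinate costs at most a factor of $\sqrt2$ in $|\sin|$, which produces the $\sqrt{2^{|\mathcal U|-1}}$. The most delicate point is that rounding can make $\mathbf k\cdot\mathbf u$ vanish or exceed $\frac12$. I would first try choosing $\mathbf k$ supported on the coordinate where $|u_q|$ is maximal in a dyadic sense, falling back on the cited lemma's construction if the constants do not close.
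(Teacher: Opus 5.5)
\textbf{There is a genuine gap: the per-factor bound with integer frequencies is never proved.}

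Your skeleton is the right one. It is the same as the one the paper uses for the segmented analogue (Lemmas~\ref{lem:freq_quantization} and~\ref{lem:neighborset_segmented}): a product over $\mathbf u\neq\mathbf 0$ of two-point factors $g_{\mathbf u}$ with spectrum $\{\mathbf 0,\mathbf k_{\mathbf u}\}$ and $\|\mathbf k_{\mathbf u}\|_\infty\le h/r$, plus a near/far split. Support, normalization and vanishing are all fine.

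The missing piece is the one quantitative claim the lemma makes. With an \emph{integer} $\mathbf k_{\mathbf u}$, you must show $\|g_{\mathbf u}\|_{L^\infty}\le\sqrt2\,\frac{r}{2h\|\mathbf u\|_1}$ for near nodes and $\le\sqrt2$ for far nodes. You only assert that rounding ``costs at most a factor $\sqrt2$'', and the remedies you consider do not work:
\begin{itemize}
\item Putting $\mathbf k_{\mathbf u}$ on the coordinate of largest $|u_q|$ only controls $\|\mathbf u\|_\infty\ge\|\mathbf u\|_1/d$. It loses up to a factor $d$ per near node, and so does a one-dimensional reduction along coordinate directions.
\item Falling back on the cited lemma is circular, since the statement is presented as its specialization.
\item The symmetrized cosine-type factor is irrelevant; the $\sqrt2$ is rounding slack.
\item Your first choice $\lfloor h/r\rfloor\operatorname{sgn}(\mathbf u)$ for all nodes fails for far nodes, because the phase $\lfloor h/r\rfloor\|\mathbf u\|_1$ can be an integer.
\item Even for near nodes, your inequality $2|\sin\pi t|\ge 4t$ together with $\lfloor x\rfloor>\tfrac23 x$ only yields the constant $\tfrac32>\sqrt2$.
\end{itemize}

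\textbf{How to close it.} The step is short, and in the $\ell^1$/$\ell^\infty$ setting your worry that rounding could make $\mathbf k\cdot\mathbf u$ vanish or overshoot $\tfrac12$ disappears.
\begin{itemize}
\item Set $\kappa_{\mathbf u}:=\min\{h/r,\,1/(2\|\mathbf u\|_1)\}$ and $\mathbf k_{\mathbf u}:=\lfloor\kappa_{\mathbf u}\rfloor\operatorname{sgn}(\mathbf u)$, i.e.\ truncate $\kappa_{\mathbf u}\operatorname{sgn}(\mathbf u)$ toward zero coordinatewise.
\item The rounding is identical in every coordinate, so $t:=\mathbf k_{\mathbf u}\cdot\mathbf u=\lfloor\kappa_{\mathbf u}\rfloor\|\mathbf u\|_1$.
\item The hypotheses $h\ge 2r$ and $\|\mathbf u\|_1\le\tfrac14$ say exactly that $\kappa_{\mathbf u}\ge 2$. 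Hence $\lfloor\kappa_{\mathbf u}\rfloor\ge\kappa_{\mathbf u}/2$, and $t\in[\tfrac12\kappa_{\mathbf u}\|\mathbf u\|_1,\,\kappa_{\mathbf u}\|\mathbf u\|_1]\subseteq(0,\tfrac12]$.
\end{itemize}
Using monotonicity of $\sin$ on $[0,\pi/2]$ and $\sin x\ge\frac{2\sqrt2}{\pi}x$ on $[0,\pi/4]$, you get
\[
\|g_{\mathbf u}\|_{L^\infty}\le\frac{1}{\sin(\pi t)}\le\frac{1}{\sin(\pi\kappa_{\mathbf u}\|\mathbf u\|_1/2)}\le\frac{1}{\sqrt2\,\kappa_{\mathbf u}\|\mathbf u\|_1}.
\]
This equals $\sqrt2\,\frac{r}{2h\|\mathbf u\|_1}$ for near nodes and $\sqrt2$ for far nodes, while $\|\mathbf k_{\mathbf u}\|_\infty\le\kappa_{\mathbf u}\le h/r$.

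This is exactly the case $p=\infty$, $D=1$ of Lemma~\ref{lem:freq_quantization} after rescaling $\mathbf u\mapsto 2\pi\mathbf u$. With this estimate in place, your product of at most $r-1$ factors gives the stated bound and spectrum.
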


With this auxiliary polynomial construction in hand, we now prove Lemma~\ref{lem:uniform-Vandermonde}.

\begin{proof}[Proof of Lemma \ref{lem:uniform-Vandermonde}]
Together with Lemma~\ref{lem:nonnegative_to_centered}, this gives
\(
\sigma_{\min}(\mathbf V_1)
=\sigma_{\min}(\widetilde{\mathbf V}_1)
\). It therefore suffices to estimate
$\sigma_{\min}(\widetilde{\mathbf V}_1)$. 
If $\theta_{\min}(\Omega,n)=0$, \eqref{eq:uniform-Vandermonde} holds trivially. Hence assume $\theta_{\min}(\Omega,n)>0$; in particular, the nodes $\mathbf t_1,\dots,\mathbf t_n$ are pairwise distinct.
Since $\mathbf y_j\in B_{\frac{\pi n}{\Omega},1}^d(\mathbf{0})$, we have
\[
\mathbf t_j\in B_{n/(2s),\,1}^{d}(\mathbf 0)\subseteq Q_{\frac{n}{2s}}^d(\mathbf 0) \subseteq Q_{1/8}^{d}(\mathbf 0),
\]
so every coordinate difference belongs to $(-1/4,1/4)$ and the chosen
representatives realize the periodic $\ell^1$-distance on the unit torus.

For each $k$, set $\mathcal U_k:=\{\mathbf t_j-\mathbf t_k:1\le j\le n\}$ and apply Lemma \ref{lem2:uniform-Vandermonde} with $\mathcal U=\mathcal U_k$, $h=s/2$, and $r=n$. Since
\[
\|\mathbf t_j-\mathbf t_k\|_1\le \frac{n}{s}\le \frac14
\qquad\text{and}\qquad
\frac{s}{2}\ge 2n,
\]
the assumptions are satisfied. The cited lemma gives $g_k\in\mathcal P(\overline Q_{s(n-1)/(2n)}^{\,d})$ such that $g_k(\mathbf0)=1$ and $g_k(\mathbf t_j-\mathbf t_k)=0$ for $j\ne k$. Define
\[
b_k(\mathbf x):=g_k(\mathbf x-\mathbf t_k).
\]
Translation preserves the Fourier support and the $L^\infty(\mathbb T^d)$ norm. Hence $b_k(\mathbf t_j)=\delta_{jk}$ and
\[
\|b_k\|_{L^\infty(\mathbb T^d)}\leq
\sqrt{2^{n-1}}
\prod_{\substack{1\le j\le n\\j\ne k}}
\frac{n}{s\|\mathbf t_j-\mathbf t_k\|_1}.
\]
Since $
\|\mathbf t_j-\mathbf t_k\|_1
=
\frac{\Omega}{2\pi s}\|\mathbf y_j-\mathbf y_k\|_1
=
\frac{n}{s}\cdot \frac{\Omega}{2n\pi}\|\mathbf y_j-\mathbf y_k\|_1$, define
\[
\theta_{jk}:=\frac{\Omega}{2n\pi}\|\mathbf y_j-\mathbf y_k\|_1.
\]
Every factor in the preceding product equals $\theta_{jk}^{-1}$, and $\theta_{jk}\ge\theta_{\min}(\Omega,n)$. Therefore
\[
\|b_k\|_{L^\infty}
\le
\sqrt{2^{n-1}}
\prod_{\substack{1\le j\le n\\j\ne k}}\theta_{jk}^{-1}
\le
\sqrt{2^{n-1}}\bigl(\theta_{\min}(\Omega,n)\bigr)^{-(n-1)}.
\]
Now let
\[
h_0(\mathbf x):=\frac{1}{\left(2\left\lfloor\frac{s}{2n}\right\rfloor+1\right)^d}
\sum_{\boldsymbol\omega\in \overline Q_{s/(2n)}^{\,d}\cap\mathbb Z^d} e^{2\pi i \boldsymbol\omega\cdot \mathbf x}.
\]
Then $h_0\in\mathcal P(\overline Q_{s/(2n)}^{\,d})$, $h_0(\mathbf 0)=1$, and
\[
\|h_0\|_{L^2}=\left(2\left\lfloor\frac{s}{2n}\right\rfloor+1\right)^{-d/2}.
\]

For each \(k\in\{1,\dots,n\}\), set \(f_k:=h_0(\cdot-\mathbf t_k)b_k\). Since translation does not change the Fourier support and the Fourier transform of a product is the convolution of the Fourier transforms, we have
\[
\operatorname{supp}(\widehat{f_k})\subseteq
\operatorname{supp}\!\bigl(\widehat{h_0}\bigr)
+
\operatorname{supp}(\widehat{b_k})
\subseteq
\overline Q_{s/(2n)}^{\,d}+\overline Q_{s(n-1)/(2n)}^{\,d}
=
\overline Q_{s/2}^{\,d}.
\]
Hence \(f_k\in \mathcal P(\overline Q_{s/2}^{\,d})\) for all \(k=1,\dots,n\). Moreover, \(f_k(\mathbf t_j)=\delta_{jk}\), and
\[
\|f_k\|_{L^2}\le \|h_0\|_{L^2}\|b_k\|_{L^\infty}
\le
\frac{\sqrt{2^{n-1}}}{\sqrt{\left(2\left\lfloor\frac{s}{2n}\right\rfloor+1\right)^d}}
\bigl(\theta_{\min}(\Omega,n)\bigr)^{-(n-1)}.
\]
The conjugated polynomials $\overline{f_k}$ remain supported in $\overline Q_{s/2}^{\,d}$, satisfy $\overline{f_k}(\mathbf t_j)=\delta_{jk}$, and have the same $L^2$ norms. Applying the trigonometric-polynomial duality principle \cite[Lemma~3.3]{li2025nonharmonic} to \(\Phi\) with these conjugated interpolants gives
\[
\sigma_{\min}(\Phi)
\ge
\frac{1}{\sqrt n}\min_{1\le k\le n}\|f_k\|_{L^2}^{-1},
\]
which yields \eqref{eq:uniform-Vandermonde}, since  $\widetilde{\mathbf V}_1=\overline\Phi$. Lemma~\ref{lem:nonnegative_to_centered} transfers the estimate to $\mathbf V_1$.
\end{proof}


\section{Minimum singular value of multivariate segmented Vandermonde matrices}\label{sec:appendix-segmented-Vandermonde}

In this appendix, we provide the detailed constructive proof for the minimum singular value lower bound presented in Theorem~\ref{thm:segmented-vandermonde}. The analytical techniques utilized here draw inspiration from \cite{li2025nonharmonic}, adapted to the multi-cluster equally distributed array geometry. We follow the notation of Subsection~\ref{subsec:equally_distributed} and Definitions~\ref{defi:metric_separation}--\ref{defi:high_dim_clumps}. We also recall the generalized Vandermonde matrix $\mathcal{V}_{\Gamma}(\mathcal{Z})$ from Definition~\ref{generalized vandermonde} and write all Vandermonde matrices in this appendix in this unified form; in particular, $\mathcal{V}_{\Lambda^d}(\mathcal{X})$ is often denoted simply as $\mathcal{V}$ for brevity. With this notation established, we now introduce several key definitions and technical lemmas required to complete the proof.

\begin{defi}[Multivariate Segmented Trigonometric Polynomials]
    \label{defi:high_dim_uniform_poly}
    Let $d,D\in\mathbb N$ and $m,r\in\mathbb Z_{\ge0}$ satisfy $D>m$. We consider the uniform segmented index set $\Lambda$ defined as
    \[
    \Lambda = \bigcup_{k=0}^r \{kD, kD+1, \dots, kD+m\}.
    \]
    The high-dimensional sampling set is the Cartesian product $\Lambda^d \subset \mathbb{Z}^d$.
    
    We define $\mathcal{P}(m, r, D, d)$ as the space of trigonometric polynomials supported on $\Lambda^d$. Any function $f \in \mathcal{P}(m, r, D, d)$ takes the form
    \[
    f(\bm{\omega}) = \sum_{\mathbf{s} \in \{0,\dots,r\}^d} \quad \sum_{\mathbf{h} \in \{0,\dots,m\}^d} c_{\mathbf{s}, \mathbf{h}} \, e^{2\pi i (D\mathbf{s} + \mathbf{h}) \cdot \bm{\omega}},
    \]
    where $\mathbf{s} = (s_1, \dots, s_d)$, $\mathbf{h} = (h_1, \dots, h_d)$ and the term $(D\mathbf{s} + \mathbf{h})$ represents the frequency vector $(D s_1 + h_1, \dots, D s_d + h_d)$.
\end{defi}

\begin{defi}
    \label{defi:high_dim_lagrange}
    Let $\mathcal{X}=\{\mathbf{y}_1, \dots, \mathbf{y}_n\} \subset (-\pi, \pi]^d$ be the set of nodes. We say that $\{f_k\}_{k=1}^n \subset \mathcal{P}(m, r, D, d)$ is a family of \textit{Lagrange interpolants} for $\mathcal{X}$ if for all $1 \le k, \ell \le n$, the following condition holds:
    \begin{equation}
        f_k\left(\frac{\mathbf{y}_\ell}{2\pi}\right) = \delta_{k,\ell} = 
        \begin{cases} 
            1 & \text{if } k = \ell, \\
            0 & \text{if } k \neq \ell.
        \end{cases}
    \end{equation}
\end{defi}

\begin{remark}
    The scaling factor $\frac{1}{2\pi}$ in the argument of $f_k$ is necessary to match the frequency definitions. The polynomial $f_k$ is defined with the kernel $e^{2\pi i \mathbf{k} \cdot \bm{\omega}}$, while the segmented Vandermonde matrix $\mathcal{V}$ is constructed using $e^{i \mathbf{k} \cdot \mathbf{y}}$. Evaluating at $\bm{\omega} = \frac{\mathbf{y}}{2\pi}$ yields $e^{2\pi i \mathbf{k} \cdot (\mathbf{y}/2\pi)} = e^{i \mathbf{k} \cdot \mathbf{y}}$, which aligns the polynomial evaluation with the matrix-vector multiplication.
\end{remark}

\begin{lem}
    \label{lem:minsvd_bound_by_lagInterp_high_dim}
    Let $\mathcal{V}=\mathcal{V}_{\Lambda^d}(\mathcal{X})$ be the Vandermonde matrix defined on the frequency set $\Lambda^d$ and nodes $\mathcal{X}$. If there exists a family of Lagrange interpolants $\{f_k\}_{k=1}^n \subset \mathcal{P}(m, r, D, d)$ for $\mathcal{X}$, then
    \begin{equation}
        \frac{1}{\sigma_{\min}(\mathcal{V})} \leq \left(\sum_{k=1}^n \|f_k\|_{L^2(\mathbb T^d)}^2\right)^{1/2}.
    \end{equation}
    Here, the $L^2$ norm is defined on the unit torus $\mathbb T^d \cong [0, 1)^d$ as $\|f\|_{L^2}^2 = \int_{\mathbb T^d} |f(\bm{\omega})|^2 d\bm{\omega}$.
\end{lem}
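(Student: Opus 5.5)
The plan is to prove the lemma by the standard duality between Lagrange interpolants and a left inverse of $\mathcal V$. Index the rows of $\mathcal V$ by $\mathbf k\in\Lambda^d$, so that $\mathcal V_{\mathbf k,j}=e^{i\mathbf k\cdot\mathbf y_j}$. Write each interpolant as $f_k(\bm\omega)=\sum_{\mathbf k\in\Lambda^d}c^{(k)}_{\mathbf k}e^{2\pi i\mathbf k\cdot\bm\omega}$. Collect the coefficient vectors as the rows of a matrix $C\in\mathbb C^{n\times|\Lambda^d|}$, with $C_{k,\mathbf k}:=c^{(k)}_{\mathbf k}$. The frequencies in $\Lambda^d$ are distinct integer vectors, so this representation is unique. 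By the remark following Definition~\ref{defi:high_dim_lagrange},
\[
(C\mathcal V)_{k\ell}=\sum_{\mathbf k\in\Lambda^d}c^{(k)}_{\mathbf k}e^{i\mathbf k\cdot\mathbf y_\ell}=f_k\!\left(\tfrac{\mathbf y_\ell}{2\pi}\right)=\delta_{k\ell}.
\]
Hence $C\mathcal V=I_n$.

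From $C\mathcal V=I_n$, I conclude that for every $\mathbf x\in\mathbb C^n$ we have $\|\mathbf x\|_2=\|C\mathcal V\mathbf x\|_2\le\|C\|_2\|\mathcal V\mathbf x\|_2$. In particular $\mathcal V$ has full column rank, and $\sigma_{\min}(\mathcal V)\ge 1/\|C\|_2$, that is, $1/\sigma_{\min}(\mathcal V)\le\|C\|_2$.

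It remains to bound $\|C\|_2\le\|C\|_F$. Since the characters $e^{2\pi i\mathbf k\cdot\bm\omega}$ are orthonormal in $L^2(\mathbb T^d)$ with normalized measure, Parseval gives $\|f_k\|_{L^2}^2=\sum_{\mathbf k}|c^{(k)}_{\mathbf k}|^2$. This is the squared norm of the $k$-th row of $C$. Summing over $k$ gives $\|C\|_F^2=\sum_{k=1}^n\|f_k\|_{L^2}^2$, which yields the claimed inequality.

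There is no real obstacle; the only points to check are conventions. First, the evaluation point $\mathbf y/(2\pi)$ must match the kernel normalization. Second, the representatives $\mathbf y\in(-\pi,\pi]^d$ can be used regardless of periodicity, because integer frequencies make $f_k$ $1$-periodic. Third, the torus measure must be normalized, so that Parseval holds without constants.
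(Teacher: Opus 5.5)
Your proposal is correct and essentially identical to the paper's proof. Both arguments show that the coefficient matrix of the interpolants is a left inverse of $\mathcal V$; the paper stacks the coefficients as columns and writes $C^T\mathcal V=I_n$, while you stack them as rows. Both then bound $1/\sigma_{\min}(\mathcal V)\le\|C\|_2\le\|C\|_F$ and identify $\|C\|_F$ with $\bigl(\sum_k\|f_k\|_{L^2}^2\bigr)^{1/2}$ via Parseval.
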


\begin{proof}
    Let $\mathbf{c}_k\in\mathbb{C}^{|\Lambda^d|}$ be the coefficient vector of $f_k$ and set $C=[\mathbf{c}_1,\ldots,\mathbf{c}_n]$. The interpolation identities are equivalent to $C^T \mathcal{V}=I_n$. Hence $\mathcal{V}$ has full column rank. For every $\mathbf{x}\in\mathbb{C}^n$,
    \[
        \|\mathbf{x}\|_2
        =\|C^T\mathcal{V}\mathbf{x}\|_2
        \le \|C^T\|_2\|\mathcal{V}\mathbf{x}\|_2
        =\|C\|_2\|\mathcal{V}\mathbf{x}\|_2.
    \]
    Taking the infimum over unit vectors and using Parseval's identity yields
    \[
        \frac{1}{\sigma_{\min}(\mathcal{V})}
        \le \|C\|_2
        \le \|C\|_F
        =\left(\sum_{k=1}^n\|\mathbf{c}_k\|_2^2\right)^{1/2}
        =\left(\sum_{k=1}^n\|f_k\|_{L^2(\mathbb T^d)}^2\right)^{1/2}.
    \]
\end{proof}

The following lemma gives a minimum-norm interpolant whenever the sampling Vandermonde matrix has full column rank.

\begin{lem}
    \label{lem:interpolation_via_svd}
 Let $m,r,D,d$ and $\Lambda$ be as in Definition~\ref{defi:high_dim_uniform_poly}. Let $\mathcal{X}=\{\mathbf{y}_1, \dots, \mathbf{y}_n\} \subset (-\pi, \pi]^d$, and assume that the associated Vandermonde matrix $\mathcal{V}_{\Lambda^d}(\mathcal{X})$ has full column rank. Then for any vector $\mathbf w \in \mathbb{C}^n$, there exists a polynomial $f \in \mathcal{P}(m, r, D, d)$ satisfying the interpolation condition
    \begin{equation}
        f\left(\frac{\mathbf{y}_j}{2\pi}\right) = w_j, \quad \text{for all } j=1, \dots, n,
    \end{equation}
    and the following norm bounds
    \begin{equation}
        \|f\|_{L^2(\mathbb T^d)} \le \frac{\|\mathbf w\|_2}{\sigma_{\min}(\mathcal{V}_{\Lambda^d}(\mathcal{X}))}
        \quad \text{and} \quad 
        \|f\|_{L^\infty(\mathbb T^d)} \le \frac{\sqrt{|\Lambda^d|} \, \|\mathbf w\|_2}{\sigma_{\min}(\mathcal V_{\Lambda^d}(\mathcal{X}))},
    \end{equation}
    where $|\Lambda^d| = [(r+1)(m+1)]^d$.
\end{lem}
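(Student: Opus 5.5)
Write $\mathcal V:=\mathcal V_{\Lambda^d}(\mathcal X)$ and use the minimum-norm solution of the interpolation system. Identify a polynomial $f\in\mathcal P(m,r,D,d)$ with its coefficient vector $\mathbf c=(c_{\boldsymbol\lambda})_{\boldsymbol\lambda\in\Lambda^d}\in\mathbb C^{|\Lambda^d|}$, ordered as the rows of $\mathcal V$. Evaluating at $\mathbf y_j/2\pi$ gives
\[
f\!\left(\tfrac{\mathbf y_j}{2\pi}\right)=\sum_{\boldsymbol\lambda\in\Lambda^d}c_{\boldsymbol\lambda}e^{i\boldsymbol\lambda\cdot\mathbf y_j}=(\mathcal V^\top\mathbf c)_j .
\]
So the interpolation conditions read $\mathcal V^\top\mathbf c=\mathbf w$, a system of $n$ equations in $|\Lambda^d|$ unknowns. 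By hypothesis $\mathcal V$ has full column rank, so $\mathcal V^\top$ has full row rank and the system is solvable for every $\mathbf w$.

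I would take the minimum-norm solution $\mathbf c:=(\mathcal V^\top)^\dagger\mathbf w=\overline{\mathcal V}(\mathcal V^\top\overline{\mathcal V})^{-1}\mathbf w$. To see why this works, let $\mathcal V=U\Sigma W^*$ be a thin SVD. Then $\mathcal V^\top=\overline W\Sigma U^\top$, so $(\mathcal V^\top)^\dagger=\overline U\Sigma^{-1}W^\top$. Its operator norm is $1/\sigma_{\min}(\mathcal V)$, because $\mathcal V^\top$ has the same singular values as $\mathcal V$. Hence $\|\mathbf c\|_2\le\|\mathbf w\|_2/\sigma_{\min}(\mathcal V)$.

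For the $L^2$ bound, note that the frequencies $\boldsymbol\lambda\in\Lambda^d$ are distinct integer vectors. Because $D>m$, the blocks $\{kD,\dots,kD+m\}$ do not overlap, so the representation $D\mathbf s+\mathbf h$ in Definition~\ref{defi:high_dim_uniform_poly} is injective. Parseval on $\mathbb T^d$ then gives $\|f\|_{L^2(\mathbb T^d)}=\|\mathbf c\|_2$, which yields the first inequality.

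For the $L^\infty$ bound, I would use the triangle inequality followed by Cauchy--Schwarz:
\[
|f(\boldsymbol\omega)|\le\sum_{\boldsymbol\lambda}|c_{\boldsymbol\lambda}|\le\sqrt{|\Lambda^d|}\,\|\mathbf c\|_2 .
\]
Combining this with the bound on $\|\mathbf c\|_2$ gives the second inequality. Finally, $|\Lambda|=(r+1)(m+1)$ follows from the same disjointness of the blocks, so $|\Lambda^d|=[(r+1)(m+1)]^d$.

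No step is a real obstacle. The two points that need care are the transpose-versus-adjoint bookkeeping (the system involves $\mathcal V^\top$, not $\mathcal V^*$, though their singular values coincide) and checking that the frequencies are distinct so that Parseval applies exactly.
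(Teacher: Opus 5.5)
Your proposal is correct and follows essentially the same route as the paper: take the minimum-norm coefficient vector $\mathbf c=(\mathcal V^\top)^\dagger\mathbf w$ and bound $\|\mathbf c\|_2\le\|\mathbf w\|_2/\sigma_{\min}(\mathcal V)$. Then use Parseval for the $L^2$ estimate and $\|f\|_{L^\infty}\le\|\mathbf c\|_1\le\sqrt{|\Lambda^d|}\,\|\mathbf c\|_2$ for the sup-norm estimate. Your explicit checks, that $D>m$ makes the frequencies distinct and that $(\mathcal V^\top)^\dagger$ has norm $1/\sigma_{\min}(\mathcal V)$, are details the paper uses implicitly.
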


\begin{proof}
    Set
    \[
        A:=\mathcal{V}_{\Lambda^d}(\mathcal{X})^T
        \in\mathbb{C}^{n\times|\Lambda^d|},
        \qquad
        \mathbf{c}:=A^\dagger\mathbf{w}.
    \]
    The full-column-rank assumption on $\mathcal{V}_{\Lambda^d}(\mathcal{X})$ implies that $A$ has full row rank, and hence $AA^\dagger=I_n$. Using $\mathbf{c}$ as the coefficient vector of $f$ gives $A\mathbf{c}=\mathbf{w}$ and thus the stated interpolation values. Moreover, Parseval's identity gives
    \[
        \|f\|_{L^2(\mathbb T^d)}=\|\mathbf{c}\|_2
        \le \|A^\dagger\|_2\|\mathbf{w}\|_2
        =\frac{\|\mathbf{w}\|_2}
        {\sigma_{\min}(\mathcal{V}_{\Lambda^d}(\mathcal{X}))}.
    \]
    Here $A=\mathcal V_{\Lambda^d}(\mathcal X)^T$ has the same singular values as $\mathcal V_{\Lambda^d}(\mathcal X)$.
    Finally,
    \[
        \|f\|_{L^\infty(\mathbb T^d)}
        \le \|\mathbf{c}\|_1
        \le \sqrt{|\Lambda^d|}\,\|\mathbf{c}\|_2,
    \]
    which proves the second estimate.
\end{proof}

\begin{thm}
    \label{thm:well_separated_segmented}
    Let $d\ge1$, $m\ge1$, $r\ge0$, and $D\ge m+1$ be integers, and let
    \[
        \Lambda=\bigcup_{s=0}^r\{sD,sD+1,\ldots,sD+m\}.
    \]
    Let $\beta\ge 1/(2\log2)$ and let
    $\mathcal X=\{\mathbf y_1,\ldots,\mathbf y_n\}\subset(-\pi,\pi]^d$
    be a finite non-empty set. If $n\ge2$, assume
    \begin{equation}
        \label{eq:separation_condition_global}
        \Delta_\infty(\mathcal X)\ge\frac{4\pi\beta d}{m+1}.
    \end{equation}
    Then, with
    \[
        a_\beta:=2-e^{1/(2\beta)},
        \qquad
        b_\beta:=e^{1/(2\beta)},
    \]
    the Vandermonde matrix satisfies
    \begin{equation}
        \label{eq:well_separated_segmented_frame}
        a_\beta|\Lambda^d|
        \le\sigma_{\min}^2(\mathcal V_{\Lambda^d}(\mathcal X))
        \le\sigma_{\max}^2(\mathcal V_{\Lambda^d}(\mathcal X))
        \le b_\beta|\Lambda^d|,
        \qquad
        |\Lambda^d|=[(r+1)(m+1)]^d.
    \end{equation}
\end{thm}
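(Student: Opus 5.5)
The plan is to compute the Gram matrix $G:=\mathcal V^*\mathcal V$, where $\mathcal V:=\mathcal V_{\Lambda^d}(\mathcal X)$, and bound its eigenvalues by a Gershgorin argument. The segmented structure factorizes the kernel into an outer Dirichlet sum and an inner Dirichlet sum, and all the decay comes from the inner block of length $m+1$.

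\textbf{Step 1: factorize the Gram entries.} Every $\lambda\in\Lambda$ is written uniquely as $\lambda=sD+h$ with $0\le s\le r$ and $0\le h\le m$. For nodes $\mathbf y_j,\mathbf y_k$, set $\mathbf z=\mathbf y_k-\mathbf y_j$. Then
\[
G_{jk}=\sum_{\boldsymbol\lambda\in\Lambda^d}e^{i\boldsymbol\lambda\cdot\mathbf z}=\prod_{q=1}^d\Bigl(\sum_{s=0}^r e^{isDz_q}\Bigr)\Bigl(\sum_{h=0}^m e^{ihz_q}\Bigr).
\]
The diagonal entries equal $|\Lambda^d|=[(r+1)(m+1)]^d$. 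Bounding the outer sums trivially by $r+1$ gives
\[
|G_{jk}|\le (r+1)^d\prod_{q=1}^d|K_m(z_q)|,\qquad K_m(t):=\sum_{h=0}^m e^{iht}.
\]
This reduces the problem to controlling the contiguous kernel $K_m$. By Gershgorin it then suffices to show
\[
\max_j\sum_{k\ne j}\prod_{q=1}^d\frac{|K_m(z_q^{(jk)})|}{m+1}\le e^{1/(2\beta)}-1.
\]
Indeed, this yields $\lambda_{\min}(G)\ge|\Lambda^d|\,(2-e^{1/(2\beta)})$ and $\lambda_{\max}(G)\le|\Lambda^d|\,e^{1/(2\beta)}$, which is \eqref{eq:well_separated_segmented_frame}. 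When $n=1$ the Gram matrix is $|\Lambda^d|$ and the claim is trivial since $a_\beta\le1\le b_\beta$.

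\textbf{Step 2: the off-diagonal sum (main obstacle).} The bound $|K_m(t)|/(m+1)\le\min\{1,\pi/((m+1)|t|_{\mathbb T})\}$ is not summable in $d$ dimensions with only $\ell^\infty$ separation, so crude Dirichlet bounds will fail. My first attempt is to replace the raw Gram matrix by a smoothed one. I would choose a nonnegative weight (Fejér- or Jackson-type) on $\{0,\dots,m\}$, or equivalently compare $\mathcal V^*\mathcal V$ with $\mathcal V^*W\mathcal V$ where $W\preceq I$. Even simpler, I would follow the argument behind \cite{li2025nonharmonic}, which uses a Beurling–Selberg majorant or minorant in each coordinate on the contiguous block. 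The resulting kernel decays like $\bigl(1+(m+1)|t|\bigr)^{-2}$, or even exponentially. With separation $\Delta_\infty\ge 4\pi\beta d/(m+1)$, the nodes are packed in $\ell^\infty$ shells around $\mathbf y_j$. A shell at distance $\approx\ell\Delta_\infty$ contains at most $C d(2\ell+1)^{d-1}$ nodes. Using the coordinate of maximal separation to supply the decay gives a series of the form $\sum_{\ell\ge1}d\,(2\ell+1)^{d-1}\,\varepsilon^{\ell}$ with $\varepsilon\lesssim 1/(\beta d)$. I aim to bound this by $\exp\bigl(1/(2\beta)\bigr)-1$ through an estimate of type $\sum_\ell \frac{x^\ell}{\ell!}$. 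The factor $d$ in the separation condition is exactly what makes the per-shell ratio $O(1/(\beta d))$, and the threshold $\beta\ge1/(2\log2)$ ensures $a_\beta\ge0$.

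\textbf{Step 3: remove the smoothing.} The final step is to transfer the frame bounds from the weighted sampling back to the unweighted $\Lambda^d$. For the lower bound, $\|\mathcal V\mathbf c\|^2\ge\|W^{1/2}\mathcal V\mathbf c\|^2$ when $0\le W\le I$, which is then normalized by the total weight. For the upper bound I would use a majorant weight $\ge1$ on the block. The outer factor $(r+1)^d$ is handled separately: for the lower bound it is realized block-by-block, since $\Lambda^d$ is the disjoint union of the $(r+1)^d$ translates $D\mathbf s+\{0,\dots,m\}^d$. Each translate multiplies the columns of the contiguous Vandermonde matrix by the unitary diagonal $\operatorname{diag}(e^{iD\mathbf s\cdot\mathbf y_j})$. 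Hence $\sigma_{\min}^2(\mathcal V)\ge(r+1)^d\sigma_{\min}^2(\mathcal V_{\{0..m\}^d})$, and similarly for $\sigma_{\max}$. This last observation is cleaner than Step 1's trivial bound, so I would actually prove the theorem for $r=0$, i.e.\ a contiguous cube of side $m+1$, and lift it by this block decomposition.
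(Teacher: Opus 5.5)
Your final plan is exactly the paper's proof: prove the frame bounds for the contiguous cube $\mathcal B_0=\{0,\dots,m\}^d$, then lift them through the disjoint decomposition $\Lambda^d=\bigsqcup_{\mathbf s\in\{0,\dots,r\}^d}(D\mathbf s+\mathcal B_0)$. The lifting is exact. The rows indexed by $D\mathbf s+\mathcal B_0$ form $\mathcal V_{\mathcal B_0}(\mathcal X)\operatorname{diag}(e^{iD\mathbf s\cdot\mathbf y_j})_{j=1}^n$, so $\|\mathcal V_{\Lambda^d}(\mathcal X)\mathbf c\|_2^2$ is a sum of $(r+1)^d$ cube quadratic forms, each evaluated at a vector of norm $\|\mathbf c\|_2$. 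The Gram factorization and the trivial bound on the outer sums in Step~1 are therefore unnecessary. The paper does not prove the cube estimate $a_\beta N^d\|\mathbf c\|_2^2\le\|\mathcal V_{\mathcal B_0}(\mathcal X)\mathbf c\|_2^2\le b_\beta N^d\|\mathbf c\|_2^2$ with $N=m+1$. It imports this estimate from Li's Theorem~2.3 \cite{li2025nonharmonic}. If you do the same, your argument coincides with the paper's.

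If you intend Step~2 as your own proof of the cube estimate, it does not deliver it as written. There are three concrete problems.

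\textbf{Lower bound.} Gershgorin applied to $\mathcal V^*W\mathcal V$ with $0\le W\le I$ gives at most $\operatorname{tr}W$ minus the off-diagonal row sum. Normalizing by the total weight cannot raise this to $a_\beta N^d$. A Fej\'er or Jackson product window has mean weight bounded away from $1$ in every coordinate (about $1/2$ for Fej\'er), so it loses a factor of order $2^{-d}$. By contrast, $a_\beta\to1$ as $\beta\to\infty$.

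\textbf{Shell count.} Decay like $(1+(m+1)|t|)^{-2}$ used only in the coordinate of largest separation gives shell contributions of order $\ell^{d-1}\ell^{-2}$. For $d\ge2$ this is not summable. The series is finite only because the torus truncates it at $\ell\approx(m+1)/(4\beta d)$, so the resulting bound grows with $m$. Nothing you describe produces the geometric factor $\varepsilon^\ell$ your series needs.

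\textbf{How a Beurling--Selberg argument actually works.} There is no Gershgorin step and no shell counting. The extremal function lives on all of $\mathbb Z^d$; it is not a weight on $\mathcal B_0$. Its Fourier transform is supported in the $\ell^\infty$ separation cube, so Poisson summation makes every off-diagonal Gram entry vanish identically. A product of one-dimensional Selberg majorants gives directly
\[
\sigma_{\max}^2\le\bigl(N+2\pi/\Delta_\infty(\mathcal X)\bigr)^d\le N^de^{1/(2\beta)},
\]
which is where the factor $d$ in the separation hypothesis comes from. The delicate part is the lower bound with constant $2-e^{1/(2\beta)}$, because a product of one-dimensional minorants is not a minorant. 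That is precisely the ingredient you would have to take from Li.
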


\begin{proof}
    Set \(N:=m+1\) and \(\mathcal B_0:=\{0,\ldots,m\}^d\). If \(n=1\),
    the result follows from the column norm. We therefore assume \(n\ge2\).

    Following the argument in \cite[Theorem~2.3]{li2025nonharmonic},
    for every \(\mathbf c\in\mathbb C^n\), we have
    \begin{equation}
        \label{eq:base_block_bound}
        a_\beta N^d\|\mathbf c\|_2^2
        \le
        \|\mathcal V_{\mathcal B_0}(\mathcal X)\mathbf c\|_2^2
        \le
        b_\beta N^d\|\mathbf c\|_2^2.
    \end{equation}

    Since
    \[
        \Lambda^d
        =\bigsqcup_{\mathbf s\in\{0,\ldots,r\}^d}
        (D\mathbf s+\mathcal B_0),
    \]
    the block indexed by \(\mathbf s\) replaces \(c_j\) by
    \(c_j e^{iD\mathbf s\cdot\mathbf y_j}\), without changing
    \(\|\mathbf c\|_2\). Summing \eqref{eq:base_block_bound} over the
    \((r+1)^d\) blocks gives
    \[
        a_\beta|\Lambda^d|\|\mathbf c\|_2^2
        \le
        \|\mathcal V_{\Lambda^d}(\mathcal X)\mathbf c\|_2^2
        \le
        b_\beta|\Lambda^d|\|\mathbf c\|_2^2.
    \]
    Taking the infimum and supremum over unit vectors proves
    \eqref{eq:well_separated_segmented_frame}.
\end{proof}

For clumped nodes, the complement of a cluster can be decomposed into at most $n^\star$ well-separated classes. The following construction also applies to arbitrary subsets of the node set.

\begin{prop}
    \label{prop:decomposition}
    Let $\mathcal X=\bigcup_{a=1}^A\mathcal C_a$ form $(A,\infty,\tau,\eta,n^\star)$-clumps. Every non-empty subset $\mathcal Y\subseteq\mathcal X$ admits a partition
    \[
        \mathcal Y=\bigcup_{\ell=1}^{\nu}\mathcal Y_\ell,
        \qquad
        1\le\nu\le n^\star,
    \]
    into non-empty disjoint sets such that $|\mathbf y-\mathbf y'|_\infty>\eta$ for every pair of distinct nodes $\mathbf y,\mathbf y'\in\mathcal Y_\ell$.
\end{prop}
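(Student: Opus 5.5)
The plan is to pick at most one node from each cluster to form each class, so that within a class every pair of nodes comes from distinct clusters. By condition 3 of Definition~\ref{defi:high_dim_clumps}, such nodes are then more than $\eta$ apart in the periodic $\ell^\infty$ distance.

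Fix a nonempty $\mathcal Y\subseteq\mathcal X$. For each $a$, set $\mathcal Y\cap\mathcal C_a=\{\mathbf z_{a,1},\dots,\mathbf z_{a,k_a}\}$ in some fixed order, where $k_a:=|\mathcal Y\cap\mathcal C_a|\le n_a\le n^\star$. Let $\nu:=\max_a k_a$. Since $\mathcal Y$ is nonempty, some $k_a\ge1$, so $1\le\nu\le n^\star$. For $\ell=1,\dots,\nu$ define
\[
\mathcal Y_\ell:=\{\mathbf z_{a,\ell}:\ 1\le a\le A,\ k_a\ge\ell\}.
\]
Each $\mathcal Y_\ell$ is nonempty, because some $a$ attains $k_a=\nu\ge\ell$. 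The sets are pairwise disjoint, since each node of $\mathcal Y$ lies in exactly one cluster $\mathcal C_a$ (the clusters partition $\mathcal X$) and carries a unique index $\ell$ there. Their union is $\mathcal Y$, since every $\mathbf z_{a,\ell}$ with $\ell\le k_a$ appears in $\mathcal Y_\ell$.

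For the separation claim, take distinct $\mathbf y,\mathbf y'\in\mathcal Y_\ell$. By construction they are $\mathbf z_{a,\ell}$ and $\mathbf z_{b,\ell}$, and since each $\mathcal Y_\ell$ contains at most one node per cluster, $a\ne b$. Condition 3 of the clump definition then gives $|\mathbf y-\mathbf y'|_\infty>\eta$.

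No step here is a real obstacle. The only care needed is to check nonemptiness of every class and the bound $\nu\le n^\star$, both of which follow directly from $|\mathcal C_a|\le n^\star$ and the choice of $\nu$ as the maximum $k_a$. The closing remark in the statement about arbitrary subsets is handled automatically, since the construction never used $\mathcal Y=\mathcal X$.
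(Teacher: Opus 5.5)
Your proposal is correct and follows the paper's proof exactly: enumerate each $\mathcal Y\cap\mathcal C_a$, take $\nu$ as the largest of these counts, and let $\mathcal Y_\ell$ collect the $\ell$-th node from each clump, so that any two nodes in a class lie in distinct clumps and condition~3 gives separation greater than $\eta$. You also verify nonemptiness, disjointness, and the covering property explicitly; the paper leaves these implicit.
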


\begin{proof}
    Enumerate each non-empty set $\mathcal Y\cap\mathcal C_a$ as
    $\{\mathbf z_{a,1},\ldots,\mathbf z_{a,q_a}\}$ and let $\nu:=\max_a q_a$. For $1\le\ell\le\nu$, set
    \[
        \mathcal Y_\ell:=\{\mathbf z_{a,\ell}:q_a\ge\ell\}.
    \]
    Each $\mathcal Y_\ell$ contains at most one node from each clump. Hence any two of its nodes belong to distinct clumps and have periodic $\ell^\infty$-distance greater than $\eta$. Since $q_a\le n^\star$, we have $\nu\le n^\star$.
\end{proof}

\begin{lem}
    \label{lem:localization}
    Let $\mathcal X\subset(-\pi,\pi]^d$ form $(A,\infty,\tau,\eta,n^\star)$-clumps, and let $m,D\in\mathbb N$ satisfy $D>m$. Set
    \[
        K:=\left\lfloor\frac{m}{n^\star}\right\rfloor.
    \]
    If $\beta>1/(2\log2)$ and
    \begin{equation}
        \label{eq:bandwidth_condition_final}
        \eta\ge\frac{4\pi\beta d}{K+1},
    \end{equation}
    then, for every $\mathbf y_k\in\mathcal X$, there exists $g_k\in\mathcal P(m,0,D,d)$ such that
    \[
        g_k\left(\frac{\mathbf y_k}{2\pi}\right)=1,
        \qquad
        g_k\left(\frac{\mathbf y_j}{2\pi}\right)=0
        \quad\text{for }\mathbf y_j\notin\mathcal N_\infty(\mathbf y_k,\tau,\mathcal X),
    \]
    and
    \[
        \|g_k\|_{L^\infty(\mathbb T^d)}
        \le\left(2-e^{1/(2\beta)}\right)^{-n^\star/2}.
    \]
\end{lem}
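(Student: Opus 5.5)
The plan is to build $g_k$ as a product of at most $n^\star$ low-degree "partial" interpolants. Each one is produced from the well-separated frame bound in Theorem~\ref{thm:well_separated_segmented}, applied with block length $K+1$ and a single block ($r=0$).

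\textbf{Trivial case.} Fix $\mathbf y_k\in\mathcal C_a$ and set $\mathcal Y:=\mathcal X\setminus\mathcal N_\infty(\mathbf y_k,\tau,\mathcal X)$. If $\mathcal Y=\emptyset$, take $g_k\equiv1$. It lies in $\mathcal P(m,0,D,d)$ and has sup norm $1\le(2-e^{1/(2\beta)})^{-n^\star/2}$, because $a_\beta:=2-e^{1/(2\beta)}\in(0,1)$ when $\beta>1/(2\log2)$.

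\textbf{Geometry of $\mathcal Y$.} Every node of $\mathcal C_a$ is within $\tau$ of $\mathbf y_k$, so it lies in the neighborhood. Hence $\mathcal Y$ consists only of nodes from clumps $\mathcal C_b$ with $b\neq a$, and each of these is at periodic $\ell^\infty$-distance $>\eta$ from $\mathbf y_k$. Apply Proposition~\ref{prop:decomposition} to $\mathcal Y$ to get a partition $\mathcal Y=\bigcup_{\ell=1}^{\nu}\mathcal Y_\ell$ with $\nu\le n^\star$. Each $\mathcal Y_\ell$ contains at most one node per clump and no node from $\mathcal C_a$. Therefore $\mathcal Z_\ell:=\mathcal Y_\ell\cup\{\mathbf y_k\}$ still has at most one node per clump, and so $\Delta_\infty(\mathcal Z_\ell)>\eta\ge 4\pi\beta d/(K+1)$.

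\textbf{Degenerate $K=0$.} Here the hypothesis forces $\eta\ge4\pi\beta d>\pi$. But periodic $\ell^\infty$-distances never exceed $\pi$, so there is only one clump and $\mathcal Y=\emptyset$. This is the trivial case above, so from now on $K\ge1$.

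\textbf{Step 1: partial interpolants.} For each $\ell$, apply Theorem~\ref{thm:well_separated_segmented} with $m\mapsto K$, $r=0$, and node set $\mathcal Z_\ell$. This requires $D>K$, which holds since $K\le m<D$. It gives
\[
\sigma_{\min}^2\bigl(\mathcal V_{\{0,\dots,K\}^d}(\mathcal Z_\ell)\bigr)\ge a_\beta (K+1)^d .
\]
In particular this Vandermonde matrix has full column rank.

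Next apply Lemma~\ref{lem:interpolation_via_svd}, with its parameters specialized to $(K,0,D,d)$, to the indicator vector $\mathbf w$ of $\mathbf y_k$. This yields $h_\ell\in\mathcal P(K,0,D,d)$ with $h_\ell(\mathbf y_k/2\pi)=1$, $h_\ell=0$ at the scaled nodes of $\mathcal Y_\ell$, and
\[
\|h_\ell\|_{L^\infty}\le \frac{\sqrt{(K+1)^d}}{\sqrt{a_\beta (K+1)^d}}=a_\beta^{-1/2}.
\]

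\textbf{Step 2: product.} Set $g_k:=\prod_{\ell=1}^{\nu}h_\ell$. Its frequencies lie in the Minkowski sum of $\nu$ copies of $\{0,\dots,K\}^d$, which is $\{0,\dots,\nu K\}^d$. Since $\nu K\le n^\star\lfloor m/n^\star\rfloor\le m$, this is contained in $\{0,\dots,m\}^d$, so $g_k\in\mathcal P(m,0,D,d)$.

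The interpolation conditions follow directly. We have $g_k(\mathbf y_k/2\pi)=1$. Every node outside the neighborhood lies in some $\mathcal Y_\ell$, where the factor $h_\ell$ vanishes. For the norm,
\[
\|g_k\|_\infty\le a_\beta^{-\nu/2}\le a_\beta^{-n^\star/2},
\]
using $a_\beta<1$ and $\nu\le n^\star$.

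\textbf{Main obstacle.} The delicate point is the separation bookkeeping for Theorem~\ref{thm:well_separated_segmented}. One must check that adding $\mathbf y_k$ to each class preserves the $\eta$-separation; this is exactly why the classes are taken from $\mathcal Y$ rather than from all of $\mathcal X$. One must also make sure the periodic metric used in Definition~\ref{defi:high_dim_clumps} is the one under which that theorem's frame bound holds. A further check is that the degree budget $\nu K\le m$ is exactly what the choice $K=\lfloor m/n^\star\rfloor$ guarantees.
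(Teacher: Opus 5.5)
Your proposal is correct and follows essentially the same route as the paper. You split the far nodes into at most $n^\star$ classes that remain $\eta$-separated after adjoining $\mathbf y_k$ (Proposition~\ref{prop:decomposition}), build each factor from Theorem~\ref{thm:well_separated_segmented} with $r=0$ and bandwidth $K$ together with Lemma~\ref{lem:interpolation_via_svd}, multiply the factors under the budget $\nu K\le m$, and rule out $K=0$ exactly as the paper does; the only cosmetic difference is that you annihilate $\mathcal X\setminus\mathcal N_\infty(\mathbf y_k,\tau,\mathcal X)$ directly, needing only $\mathcal C_a\subseteq\mathcal N_\infty(\mathbf y_k,\tau,\mathcal X)$, whereas the paper first identifies this neighborhood with the clump $\mathcal C_{a(k)}$.
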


\begin{proof}
    Let $\mathcal C_{a(k)}$ be the clump containing $\mathbf y_k$. The clump assumptions imply
    \[
        \mathcal N_\infty(\mathbf y_k,\tau,\mathcal X)=\mathcal C_{a(k)}.
    \]
    Set $\mathcal G_k:=\mathcal X\setminus\mathcal C_{a(k)}$. If $\mathcal G_k=\varnothing$, take $g_k=1$.

    Assume $\mathcal G_k\ne\varnothing$. Then $K\ge1$. Indeed, if $K=0$, condition \eqref{eq:bandwidth_condition_final} gives $\eta\ge4\pi\beta d>\pi$, whereas a node in a different clump would satisfy
    \[
        \eta<|\mathbf y-\mathbf y_k|_\infty\le\pi.
    \]

    By Proposition~\ref{prop:decomposition}, write $\mathcal G_k$ as the disjoint union of $\nu_k\le n^\star$ non-empty $\eta$-separated sets $\mathcal G_{k,1},\ldots,\mathcal G_{k,\nu_k}$. Since $\mathbf y_k$ and every node of $\mathcal G_k$ belong to distinct clumps, each set
    \[
        \mathcal W_{k,\ell}:=\mathcal G_{k,\ell}\cup\{\mathbf y_k\}
    \]
    is also $\eta$-separated.

    Let $\Omega_K:=\{0,\ldots,K\}^d$ and $a_\beta:=2-e^{1/(2\beta)}>0$. Theorem~\ref{thm:well_separated_segmented}, applied with $m=K$, $r=0$, and $D=K+1$, gives
    \[
        \sigma_{\min}^2\!\left(\mathcal V_{\Omega_K}(\mathcal W_{k,\ell})\right)
        \ge a_\beta(K+1)^d>0.
    \]
    Hence the Vandermonde matrix has full column rank. Lemma~\ref{lem:interpolation_via_svd} provides a polynomial $h_{k,\ell}$ supported on $\Omega_K$ such that
    \[
        h_{k,\ell}\left(\frac{\mathbf y_k}{2\pi}\right)=1,
        \qquad
        h_{k,\ell}\left(\frac{\mathbf y}{2\pi}\right)=0
        \quad(\mathbf y\in\mathcal G_{k,\ell}),
    \]
    and
    \[
        \|h_{k,\ell}\|_\infty
        \le\frac{\sqrt{|\Omega_K|}}
        {\sigma_{\min}(\mathcal V_{\Omega_K}(\mathcal W_{k,\ell}))}
        \le a_\beta^{-1/2}.
    \]

    Set $g_k:=\prod_{\ell=1}^{\nu_k}h_{k,\ell}$. Since $0<a_\beta<1$,
    \[
        \|g_k\|_\infty\le a_\beta^{-\nu_k/2}\le a_\beta^{-n^\star/2}.
    \]
    Moreover,
    \[
        \operatorname{supp}\widehat g_k
        \subseteq\{0,\ldots,K\nu_k\}^d
        \subseteq\{0,\ldots,Kn^\star\}^d
        \subseteq\{0,\ldots,m\}^d.
    \]
    Thus $g_k\in\mathcal P(m,0,D,d)$ and has the stated interpolation values.
\end{proof}

\begin{lem}
    \label{lem:freq_quantization}
    Let $d,D\in\mathbb N$, $p \in [1, \infty]$, and let $p'$ be the H\"older conjugate of $p$, with $1'=\infty$ and $\infty'=1$. Let $\mathbf{u} \in (-\pi, \pi]^d$ be the coordinatewise shortest representative of a nonzero difference in $\mathbb T_{2\pi}^d$. Throughout this lemma, $\|\cdot\|_q$ denotes the ordinary $\ell^q$-norm of a vector representative. Assume
    \[
    0 < \|\mathbf{u}\|_{p'} \le \frac{\pi}{2 Dd^{1/p}}.
    \]
    For any scale parameter $\alpha > 0$ such that $\|\mathbf{u}\|_{p'} \le 2\pi\alpha \le \frac{\pi}{2D d^{1/p}}$, there exists an integer frequency vector $\mathbf{k} \in \mathbb{Z}^d$ such that
    \begin{equation}
        \label{eq:freq_quant_bounds}
        \|\mathbf{k}\|_p \le \frac{1}{2D\alpha}, \quad
        \frac{1}{4\alpha} \|\mathbf{u}\|_{p'} \le D|\mathbf{k} \cdot \mathbf{u}| \le \pi, \quad\text{and}\quad
        |1 - e^{i D\mathbf{k} \cdot \mathbf{u}}| \ge \frac{\sqrt{2}}{2\pi\alpha} \|\mathbf{u}\|_{p'}.
    \end{equation}
\end{lem}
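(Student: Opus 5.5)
The plan is to build $\mathbf k$ as an integer truncation of a suitably scaled Hölder-dual vector of $\mathbf u$, and then read off the three bounds in \eqref{eq:freq_quant_bounds} in turn.

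\textbf{Construction of $\mathbf k$.} By Hölder duality there is a real vector $\mathbf v\in\mathbb R^d$ with
\[
\|\mathbf v\|_p=1,\qquad \mathbf v\cdot\mathbf u=\|\mathbf u\|_{p'},\qquad v_iu_i\ge0 \text{ for every } i .
\]
Explicitly, $v_i\propto \operatorname{sgn}(u_i)|u_i|^{p'-1}$ for $1<p<\infty$. For $p=\infty$ take $v_i=\operatorname{sgn}(u_i)$. For $p=1$ take a signed unit vector at a coordinate where $|u_i|$ is maximal. Set $\lambda:=\frac{1}{2D\alpha}$ and define $\mathbf k$ by truncating $\lambda\mathbf v$ coordinatewise toward zero, $k_i:=\operatorname{sgn}(v_i)\lfloor\lambda|v_i|\rfloor$.

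\textbf{The norm bound and the lower bound on $D\,\mathbf k\cdot\mathbf u$.} Since $|k_i|\le\lambda|v_i|$, we get $\|\mathbf k\|_p\le\lambda=\frac{1}{2D\alpha}$, which is the first bound. Truncation preserves signs, so every term $k_iu_i$ is nonnegative. Each coordinate also loses at most $|u_i|$, since $\lambda|v_i|-\lfloor\lambda|v_i|\rfloor<1$. Hence
\[
\mathbf k\cdot\mathbf u\ \ge\ \lambda\|\mathbf u\|_{p'}-\|\mathbf u\|_1\ \ge\ \bigl(\lambda-d^{1/p}\bigr)\|\mathbf u\|_{p'},
\]
where the last step is Hölder's inequality $\|\mathbf u\|_1\le d^{1/p}\|\mathbf u\|_{p'}$. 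The hypothesis $2\pi\alpha\le\frac{\pi}{2Dd^{1/p}}$ is equivalent to $\lambda\ge 2d^{1/p}$. Therefore $\mathbf k\cdot\mathbf u\ge\frac{\lambda}{2}\|\mathbf u\|_{p'}$, that is,
\[
D\,\mathbf k\cdot\mathbf u\ \ge\ \frac{1}{4\alpha}\|\mathbf u\|_{p'} .
\]

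\textbf{The upper bound on $D\,\mathbf k\cdot\mathbf u$.} Hölder's inequality and the first bound give $D|\mathbf k\cdot\mathbf u|\le D\|\mathbf k\|_p\|\mathbf u\|_{p'}\le\frac{\|\mathbf u\|_{p'}}{2\alpha}$. This is at most $\pi$ because $\|\mathbf u\|_{p'}\le2\pi\alpha$. This settles the middle bound of \eqref{eq:freq_quant_bounds}.

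\textbf{The chord bound.} Set $\theta:=D\,\mathbf k\cdot\mathbf u\in\bigl[\tfrac{\|\mathbf u\|_{p'}}{4\alpha},\pi\bigr]$, so that $|1-e^{i\theta}|=2\sin(\theta/2)$. The obstacle is the constant $\sqrt2$: the plain bound $\sin x\ge 2x/\pi$ on $[0,\pi/2]$ only yields $\frac{1}{2\pi\alpha}\|\mathbf u\|_{p'}$. I would therefore split into two cases.

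If $\theta\le\pi/2$, concavity of $\sin$ on $[0,\pi/4]$ gives $\sin x\ge\frac{2\sqrt2}{\pi}x$. Hence
\[
|1-e^{i\theta}|\ \ge\ \frac{2\sqrt2}{\pi}\,\theta\ \ge\ \frac{\sqrt2}{2\pi\alpha}\|\mathbf u\|_{p'} .
\]

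If $\pi/2<\theta\le\pi$, then $|1-e^{i\theta}|\ge 2\sin(\pi/4)=\sqrt2$. This is already at least $\frac{\sqrt2}{2\pi\alpha}\|\mathbf u\|_{p'}$, again because $\|\mathbf u\|_{p'}\le 2\pi\alpha$.

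Both cases give the third bound. The only points needing care are the sign-preserving truncation, which makes the rounding loss additive and controlled by $\|\mathbf u\|_1$, and the endpoint conventions for $p\in\{1,\infty\}$.
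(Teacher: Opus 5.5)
Your proposal is correct and follows essentially the same route as the paper: you truncate the scaled H\"older-dual vector $\frac{1}{2D\alpha}\mathbf v$ toward zero, bound the rounding loss by $\|\mathbf u\|_1\le d^{1/p}\|\mathbf u\|_{p'}$ using $\alpha\le\frac{1}{4Dd^{1/p}}$, and obtain the $\sqrt2$ constant from $\sin x\ge\frac{2\sqrt2}{\pi}x$ on $[0,\pi/4]$. The differences are cosmetic: for the lower bound you use sign preservation where the paper uses the triangle inequality, and for the chord bound you split into two cases where the paper uses monotonicity of $\sin$ on $[0,\pi/2]$ together with the decreasing sinc function.
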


\begin{proof}
For a scalar $a \in \mathbb{R}$, let $[a]$ denote truncation toward zero,
namely the unique integer satisfying $|a - [a]| < 1$ and
$|[a]| \le |a|$. We extend this operation component-wise to vectors
$\mathbf{a} \in \mathbb{R}^d$. It satisfies
\begin{equation}
    \label{eq:round_props}
    \|\mathbf{a} - [\mathbf{a}]\|_\infty < 1
    \quad\text{and}\quad
    \|[\mathbf{a}]\|_p \le \|\mathbf{a}\|_p.
\end{equation}

Let $\mathbf{v} \in \mathbb{R}^d$ be an $\ell^p$-unit vector satisfying $\mathbf{v} \cdot \mathbf{u} = \|\mathbf{u}\|_{p'}$. For $1<p<\infty$, take
\[
    v_j=\frac{|u_j|^{p'-1}\operatorname{sgn}(u_j)}{\|\mathbf u\|_{p'}^{p'-1}}.
\]
For $p=1$, choose an index $j_0$ attaining $|u_{j_0}|=\|\mathbf u\|_\infty$ and take $\mathbf v=\operatorname{sgn}(u_{j_0})\mathbf e_{j_0}$. For $p=\infty$, take $v_j=\operatorname{sgn}(u_j)$.
We define the ideal frequency $\mathbf{a}$ and the quantized integer frequency $\mathbf{k}$ as
\[
\mathbf{a} := \frac{1}{2D\alpha} \mathbf{v} \quad\text{and}\quad \mathbf{k} := [\mathbf{a}].
\]

The first inequality in \eqref{eq:freq_quant_bounds} follows immediately from \eqref{eq:round_props}:
\[
\|\mathbf{k}\|_p = \|[\mathbf{a}]\|_p
\le \|\mathbf{a}\|_p
= \frac{1}{2D\alpha}\|\mathbf{v}\|_p
= \frac{1}{2D\alpha}.
\]

We estimate the phase $\mathbf{k} \cdot \mathbf{u}$. By the Triangle Inequality and Hölder's Inequality,
\[
|\mathbf{k} \cdot \mathbf{u}| \ge |\mathbf{a} \cdot \mathbf{u}| - |(\mathbf{k} - \mathbf{a}) \cdot \mathbf{u}| 
= \frac{1}{2D\alpha} \|\mathbf{u}\|_{p'} - |(\mathbf{k} - \mathbf{a}) \cdot \mathbf{u}|.
\]
Using $\|\mathbf{u}\|_1 \le d^{1/p} \|\mathbf{u}\|_{p'}$, $\|[\mathbf{a}] - \mathbf{a}\|_\infty < 1$, and the assumption $\alpha \le \frac{1}{4D d^{1/p}}$, we have
\[
|(\mathbf{k} - \mathbf{a}) \cdot \mathbf{u}|
\le \|[\mathbf{a}] - \mathbf{a}\|_\infty\|\mathbf{u}\|_1
\le d^{1/p}\|\mathbf{u}\|_{p'}
\le \frac{1}{4D\alpha}\|\mathbf{u}\|_{p'}.
\]
We obtain both lower and upper bounds:
\[
|\mathbf{k} \cdot \mathbf{u}|
\ge \left( \frac{1}{2D\alpha} - \frac{1}{4D\alpha} \right)
\|\mathbf{u}\|_{p'}
= \frac{1}{4D\alpha}\|\mathbf{u}\|_{p'},
\]
\[
|\mathbf{k} \cdot \mathbf{u}|
\le \|\mathbf{k}\|_{p}\|\mathbf{u}\|_{p'}
\le \frac{1}{2D\alpha}\|\mathbf{u}\|_{p'}
\le \frac{\pi}{D}.
\]
 This proves the second part of \eqref{eq:freq_quant_bounds}.

 It remains to prove the final inequality.
On the interval $[0, \frac{\pi}{2}]$, the normalized sinc function $g(t) = \frac{\sin(t)}{t}$  is decreasing.
Using this inequality and the lower bound for $|\mathbf{k} \cdot \mathbf{u}|$, we have
\[
|1 - e^{i D\mathbf{k} \cdot \mathbf{u}}|
= 2\sin\left(\frac{D|\mathbf{k} \cdot \mathbf{u}|}{2}\right)
\ge 2\sin\left(\frac{\|\mathbf{u}\|_{p'}}{8\alpha}\right)
= \frac{\|\mathbf{u}\|_{p'}}{4\alpha}
g\left(\frac{\|\mathbf{u}\|_{p'}}{8\alpha}\right)
\ge \frac{\|\mathbf{u}\|_{p'}}{4\alpha}g\left(\frac{\pi}{4}\right)
= \frac{\sqrt{2}}{2\pi\alpha}\|\mathbf{u}\|_{p'}.
\]
This completes the proof.
\end{proof}

\begin{lem}
    \label{lem:neighborset_segmented}
    Let $d,M,D\in\mathbb N$ and $m\in\mathbb Z_{\ge0}$ satisfy $D>m$. Let $p \in [1, \infty]$ and let $\mathcal{U} \subset (-\pi, \pi]^d$ be a finite set of coordinatewise shortest representatives of torus differences, containing $\mathbf{0}$ and having cardinality $v := |\mathcal{U}|$.
    Suppose 
    \(
    \|\mathbf{u}\|_{p'} \le \frac{\pi}{2 D d^{1/p}}\) for all \(\mathbf{u} \in \mathcal{U}\)
     and $M \ge 2 d^{1/p} v$,
    then there exists a polynomial $f \in \mathcal{P}(m, M, D, d)$ such that $f(\mathbf{0})=1$, $f\left(\frac{\mathbf{u}}{2\pi}\right) = 0$ for all $\mathbf{u} \in \mathcal{U} \setminus \{\mathbf{0}\}$, and
    \begin{equation}
        \label{eq:neighbor_segmented}
        \|f\|_{L^2(\mathbb T^d)}
        \leq \frac{(\sqrt{2})^{v-1} }{\sqrt{(M/v)^d(m+1)^d}}
        \prod_{\substack{\mathbf u\in\mathcal U\\0<\|\mathbf{u}\|_{p'} \leq \frac{\pi v}{MD}}}
        \frac{\pi v}{MD \|\mathbf{u}\|_{p'}}.
    \end{equation}
   
\end{lem}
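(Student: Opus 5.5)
We construct $f$ as a product of two factors, mirroring the proof of Lemma~\ref{lem:uniform-Vandermonde}. The first is a bounded-sup-norm polynomial $b$ built from the coarse frequencies $D\mathbf s$ with $\mathbf s\in\{0,\dots,M\}^d$, which does the interpolation. The second is a flat averaging kernel $h_0$ that carries the $L^2$ gain. The segmented frequency set $\Lambda^d$ with $r=M$ is exactly $D\{0,\dots,M\}^d+\{0,\dots,m\}^d$. So if $b$ has Fourier support in $D\{0,\dots,K(v-1)\}^d$ and $h_0$ has support in $D\{0,\dots,K\}^d+\{0,\dots,m\}^d$, with $K(v-1)+K\le M$, then $f=h_0b\in\mathcal P(m,M,D,d)$. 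This holds because sums of $D\mathbf s$ parts stay in $D\mathbb Z^d$ and the $\{0,\dots,m\}^d$ part is untouched. We take $K:=\lfloor M/v\rfloor\ge 2d^{1/p}\ge1$.

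\textbf{Step 1 (one vanishing factor per node).} For each $\mathbf u\in\mathcal U\setminus\{\mathbf 0\}$ we build a factor
\[
q_{\mathbf u}(\boldsymbol\omega)=\frac{1-e^{2\pi i D\mathbf k_{\mathbf u}\cdot(\boldsymbol\omega-\mathbf u/2\pi)}}{1-e^{-iD\mathbf k_{\mathbf u}\cdot\mathbf u}}\quad\text{or}\quad q_{\mathbf u}\equiv1.
\]
The first choice vanishes at $\mathbf u/2\pi$ and equals $1$ at $\mathbf 0$. The choice of factor depends on $\|\mathbf u\|_{p'}$:
\begin{itemize}
\item If $0<\|\mathbf u\|_{p'}\le \pi v/(MD)$, we apply Lemma~\ref{lem:freq_quantization} with $2\pi\alpha:=\pi v/(MD)$, i.e.\ $1/(2D\alpha)=M/v$. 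The admissibility condition $2\pi\alpha\le \pi/(2Dd^{1/p})$ is exactly $M\ge 2d^{1/p}v$. The lemma gives $\|\mathbf k_{\mathbf u}\|_p\le M/v$, hence $\|\mathbf k_{\mathbf u}\|_\infty\le K$. It also gives $|1-e^{iD\mathbf k\cdot\mathbf u}|\ge \sqrt2\,MD\|\mathbf u\|_{p'}/(\pi v)$, so $\|q_{\mathbf u}\|_\infty\le \sqrt2\,\pi v/(MD\|\mathbf u\|_{p'})$.
\item If $\|\mathbf u\|_{p'}>\pi v/(MD)$, we still need a vanishing factor, since $f$ must vanish at every $\mathbf u/2\pi$. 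Here we apply Lemma~\ref{lem:freq_quantization} with $2\pi\alpha:=\|\mathbf u\|_{p'}$, which is allowed by the hypothesis $\|\mathbf u\|_{p'}\le\pi/(2Dd^{1/p})$. This gives $\|\mathbf k\|_p\le \pi/(D\|\mathbf u\|_{p'})<M/v$ and $|1-e^{iD\mathbf k\cdot\mathbf u}|\ge\sqrt2$, hence $\|q_{\mathbf u}\|_\infty\le\sqrt2$.
\end{itemize}
Negative entries of $\mathbf k_{\mathbf u}$ are handled by multiplying by a unimodular monomial $e^{2\pi i D\mathbf c\cdot\boldsymbol\omega}$ after normalizing the value at $\mathbf 0$ back to $1$. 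This shifts each factor's support into $D\{0,\dots,2K\}^d$, or alternatively we center as in Appendix~\ref{sec:appendix-li}.

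\textbf{Step 2 (assembly).} We set $b:=\prod_{\mathbf u\ne\mathbf 0}q_{\mathbf u}$. Then $b(\mathbf 0)=1$, $b$ vanishes on $(\mathcal U\setminus\{\mathbf 0\})/2\pi$, and
\[
\|b\|_\infty\le(\sqrt2)^{v-1}\prod_{0<\|\mathbf u\|_{p'}\le \pi v/(MD)}\frac{\pi v}{MD\|\mathbf u\|_{p'}}.
\]
Next we set
\[
h_0(\boldsymbol\omega):=\frac{1}{(K'+1)^d(m+1)^d}\sum_{\mathbf s\in\{0..K'\}^d,\ \mathbf h\in\{0..m\}^d}e^{2\pi i(D\mathbf s+\mathbf h)\cdot\boldsymbol\omega}.
\]
By Parseval, $h_0(\mathbf 0)=1$ and $\|h_0\|_{L^2}=((K'+1)(m+1))^{-d/2}$. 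Finally, $\|f\|_{L^2}\le\|h_0\|_{L^2}\|b\|_\infty$. Choosing $K'+1\ge M/v$ gives the bound~\eqref{eq:neighbor_segmented}.

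\textbf{Main obstacle: the support budget.} Each factor $q_{\mathbf u}$ can use up to $K$ coarse steps per coordinate in absolute value, and there are $v-1$ factors, while the kernel needs about $M/v$ steps. With sign-shifting, the worst case is roughly $2K(v-1)+M/v$, which may exceed $M$. I would try three fixes, in this order:
\begin{itemize}
\item Exploit that Lemma~\ref{lem:freq_quantization} gives a bound on $\|\mathbf k\|_p$ rather than $\|\mathbf k\|_\infty$, and shift each factor only by the negative part of its $\mathbf k$. Each factor then occupies the box $[\min(0,k_j),\max(0,k_j)]$, so the total width per coordinate is $\sum|k_{j}|\le (v-1)M/v$. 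The remaining room is at least $M/v$, which is enough for $h_0$ with $K'=\lfloor M/v\rfloor$, giving $K'+1\ge M/v$.
\item If floors spoil the count, reduce $\alpha$ slightly.
\item Otherwise, absorb a constant factor into the estimate.
\end{itemize}
The first option appears to work exactly, so it is the one I expect to use.
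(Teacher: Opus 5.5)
Your proposal is correct and follows essentially the same construction as the paper: the same split at $\|\mathbf u\|_{p'}=\pi v/(MD)$, the same two choices of $\alpha$ in Lemma~\ref{lem:freq_quantization}, and the same recentering of each two-point factor by $D\mathbf k^-$. That recentering is your first fix, and it is exactly the paper's $L_{\mathbf u}$ in \eqref{eq:recentered_quantized_factor}; you also use the same averaging kernel with $z=\lfloor M/v\rfloor$, bounded via $\|A_{z,m}\|_{L^2}\|\cdot\|_{L^\infty}$. The only cosmetic difference is that the paper splits your single product $b$ into $h_0$ (small nodes $\mathcal I$) and $g$ (large nodes $\mathcal J$), and its support count $z(|\mathcal I|+1)+z|\mathcal J|=zv\le M$ is your per-coordinate width argument.
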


\begin{proof}
Set $z:=\lfloor M/v\rfloor$ and partition the nonzero nodes as
\[
    \mathcal{I}:=\left\{\mathbf{u}\in\mathcal{U}\setminus\{\mathbf{0}\}: \|\mathbf{u}\|_{p'}\le\frac{\pi v}{MD}\right\},
    \qquad
    \mathcal{J}:=\mathcal{U}\setminus(\mathcal{I}\cup\{\mathbf{0}\}).
\]

For each $\mathbf{u}\in\mathcal I$, apply Lemma~\ref{lem:freq_quantization} with $\alpha=v/(2MD)$. This gives $\mathbf{k}(\mathbf{u})\in\mathbb Z^d$ such that
\begin{equation}
    \label{eq:quan_I}
    \|\mathbf{k}(\mathbf{u})\|_p\le\frac{M}{v},
    \qquad
    |1-e^{iD\mathbf{k}(\mathbf{u})\cdot\mathbf{u}}|
    \ge\frac{\sqrt{2}MD}{\pi v}\|\mathbf{u}\|_{p'}.
\end{equation}
For every quantized vector used in this proof, write
\[
    \mathbf{k}=\mathbf{k}^+-\mathbf{k}^-,
    \qquad
    k_j^+:=\max\{k_j,0\},
    \qquad
    k_j^-:=\max\{-k_j,0\},
\]
and define the recentered two-point interpolant
\begin{equation}
    \label{eq:recentered_quantized_factor}
    L_{\mathbf{u}}(\bm{\omega})
    :=e^{2\pi iD\mathbf{k}^-(\mathbf{u})\cdot\bm{\omega}}
    \frac{e^{2\pi iD\mathbf{k}(\mathbf{u})\cdot\bm{\omega}}-e^{iD\mathbf{k}(\mathbf{u})\cdot\mathbf{u}}}
    {1-e^{iD\mathbf{k}(\mathbf{u})\cdot\mathbf{u}}}.
\end{equation}
The two frequencies in $L_{\mathbf u}$ are $D\mathbf{k}^+(\mathbf u)$ and $D\mathbf{k}^-(\mathbf u)$. Since $\mathbf{k}(\mathbf u)$ is integer valued and $\|\mathbf{k}(\mathbf u)\|_p\le M/v$, both vectors belong to $D\{0,\ldots,z\}^d$. Moreover,
\[
    L_{\mathbf u}(\mathbf 0)=1,
    \qquad
    L_{\mathbf u}\left(\frac{\mathbf u}{2\pi}\right)=0,
\]
and the prefactor in \eqref{eq:recentered_quantized_factor} has unit modulus, so all pointwise bounds for the original factor remain unchanged.

Define the empty product to be one and set
\[
    h_0(\bm{\omega}):=\prod_{\mathbf u\in\mathcal I}L_{\mathbf u}(\bm{\omega})
    \in\mathcal P(0,z|\mathcal I|,D,d).
\]
Let
\[
    A_{z,m}(\bm{\omega})
    :=\frac{1}{(z+1)^d(m+1)^d}
    \sum_{\mathbf l\in\{0,\ldots,z\}^d}
    \sum_{\mathbf j\in\{0,\ldots,m\}^d}
    e^{2\pi i(D\mathbf l+\mathbf j)\cdot\bm{\omega}},
    \qquad
    h:=A_{z,m}h_0.
\]
Then $h\in\mathcal P(m,z(|\mathcal I|+1),D,d)$, $h(\mathbf0)=1$, and $h(\mathbf u/(2\pi))=0$ for $\mathbf u\in\mathcal I$. Because $D>m$, all $(z+1)^d(m+1)^d$ frequencies of $A_{z,m}$ are distinct. Parseval's identity and $z+1\ge M/v$ therefore give
\begin{equation}
    \label{equ:h_L2_bound_by_h0}
    \begin{aligned}
    \|h\|_{L^2(\mathbb T^d)}
    &\le \|A_{z,m}\|_{L^2(\mathbb T^d)}\|h_0\|_{L^\infty(\mathbb T^d)}\\
    &=\frac{\|h_0\|_{L^\infty(\mathbb T^d)}}{\sqrt{(z+1)^d(m+1)^d}}
    \le\frac{\|h_0\|_{L^\infty(\mathbb T^d)}}{\sqrt{(M/v)^d(m+1)^d}}.
    \end{aligned}
\end{equation}

Using \eqref{eq:quan_I}, we obtain
\begin{equation}
    \label{eq:h_bound}
    \|h_0\|_{L^\infty(\mathbb T^d)}
    \le(\sqrt2)^{|\mathcal I|}
    \prod_{\mathbf u\in\mathcal I}\frac{\pi v}{MD\|\mathbf u\|_{p'}}.
\end{equation}

For each $\mathbf u\in\mathcal J$, apply Lemma~\ref{lem:freq_quantization} with $\alpha=\|\mathbf u\|_{p'}/(2\pi)$. It gives a vector $\mathbf k(\mathbf u)\in\mathbb Z^d$ satisfying
\begin{equation}
    \label{eq:quan_J}
    \|\mathbf k(\mathbf u)\|_p
    \le\frac{\pi}{D\|\mathbf u\|_{p'}}<\frac{M}{v},
    \qquad
    |1-e^{iD\mathbf k(\mathbf u)\cdot\mathbf u}|\ge\sqrt2.
\end{equation}
Define $L_{\mathbf u}$ by \eqref{eq:recentered_quantized_factor} using this vector and set
\[
    g(\bm{\omega}):=\prod_{\mathbf u\in\mathcal J}L_{\mathbf u}(\bm{\omega})
    \in\mathcal P(0,z|\mathcal J|,D,d).
\]
Then
\begin{equation}
    \label{eq:g_bound}
    \|g\|_{L^\infty(\mathbb T^d)}\le(\sqrt2)^{|\mathcal J|}.
\end{equation}

Finally, let $f:=hg$. Since $|\mathcal I|+|\mathcal J|=v-1$ and $zv\le M$,
\[
    f\in\mathcal P(m,zv,D,d)\subseteq\mathcal P(m,M,D,d).
\]
The polynomial $f$ equals one at $\mathbf0$ and vanishes at $\mathbf u/(2\pi)$ for every nonzero $\mathbf u\in\mathcal U$. Combining \eqref{equ:h_L2_bound_by_h0}, \eqref{eq:h_bound}, and \eqref{eq:g_bound} gives
\[
    \|f\|_{L^2(\mathbb T^d)}
    \le\frac{(\sqrt2)^{v-1}}{\sqrt{(M/v)^d(m+1)^d}}
    \prod_{\mathbf u\in\mathcal I}\frac{\pi v}{MD\|\mathbf u\|_{p'}},
\]
which is \eqref{eq:neighbor_segmented}.
\end{proof}


Having established the lemmas above, we complete the proof by constructing polynomials $\{F_k\}_{k=1}^n$.
\begin{proof}[Proof of Theorem~\ref{thm:segmented-vandermonde}]

We fix  $k \in \{1, \dots, n\}$. Let $\mathcal{C}_k := \mathcal{N}_\infty(\mathbf{y}_k, \tau, \mathcal{X})$  with cardinality $\nu_k \le n^\star$.
We will construct a test polynomial $F_k \in \mathcal{P}(m, r, D, d)$ that interpolates 1 at $\frac{\mathbf{y}_k}{2\pi}$ and 0 at $\frac{\mathbf{y}_j}{2\pi}$ for $j \neq k$.

Let
\[
    m_{1} = \left\lceil\frac m2\right\rceil,
    \qquad
    m_{2} = m-m_{1}=\left\lfloor\frac m2\right\rfloor,
    \qquad
    K_{\mathrm{loc}}=\left\lfloor\frac{m_1}{n^\star}\right\rfloor.
\]

The hypothesis $\eta\ge4\pi\beta d/(K_{\mathrm{loc}}+1)$ permits Lemma~\ref{lem:localization} to be applied with bandwidth $m_1$. It gives a polynomial $G_k \in \mathcal{P}(m_{1}, 0, D, d)$ such that
\[
G_k\left(\frac{\mathbf{y}_k}{2\pi}\right) = 1,
\qquad
G_k\left(\frac{\mathbf{y}_j}{2\pi}\right) = 0
\quad(\mathbf y_j\in\mathcal X\setminus\mathcal C_k),
\qquad
\|G_k\|_{L^\infty} \le \left( 2 - e^{1/(2\beta)} \right)^{-n^{\star}/2}.
\]

We handle the remaining nodes within the cluster $\mathcal{C}_k$. Let $\mathcal{U} := \{ \mathbf{y}_j - \mathbf{y}_k : \mathbf{y}_j \in \mathcal{C}_k \}$, where each difference is identified with its coordinatewise representative in $(-\pi,\pi]^d$. Note that $\mathbf{0} \in \mathcal{U}$.
For these representatives,
\[
\|\mathbf y_j-\mathbf y_k\|_1
=|\mathbf y_j-\mathbf y_k|_1,
\qquad \mathbf y_j\in\mathcal C_k,
\]
where the right-hand side is the periodic distance of Definition~\ref{defi:metric_separation}.
We apply Lemma \ref{lem:neighborset_segmented} to $\mathcal{U}$ with $p=\infty$, $m_2$ and $M=r$. 
The size condition holds because $r\ge2n^\star\ge2\nu_k=2d^{1/\infty}\nu_k$.
Using the assumption $\tau \leq \frac {\pi}{2Dd}$, for any $\mathbf{y}_j\in \mathcal{N}_\infty(\mathbf{y}_k,\tau,\mathcal{X})$, we have
$$
|\mathbf{y}_j-\mathbf{y}_k|_1 \leq d \, |\mathbf{y}_j-\mathbf{y}_k|_\infty \leq d\tau \leq \frac {\pi}{2D}.
$$
There exists a polynomial $f \in \mathcal{P}(m_2, r, D, d)$ such that $f(\mathbf{0})=1$ and $f(\frac{\mathbf{u}}{2\pi})=0$ for non-zero $\mathbf{u} \in \mathcal{U}$.
We define the shifted polynomial $B_k(\bm{\omega})$ as
\(
B_k(\bm{\omega}) := f\left(\bm{\omega} - \frac{\mathbf{y}_k}{2\pi}\right), 
\) then we have

\[
B_k(\frac{\mathbf{y}_k}{2\pi}) = 1, \quad B_k(\frac{\mathbf{y}_j}{2\pi}) = 0 \text{ for } \mathbf{y}_j \in \mathcal{C}_k \setminus \{\mathbf{y}_k\},
\]
Lemma~\ref{lem:neighborset_segmented} first gives the following estimate with $\nu_k$ in place of $n^\star$. Replacing $\nu_k$ by $n^\star$ enlarges the factor $(\sqrt2)^{\nu_k-1}$, decreases the denominator, and adds only product factors that are at least one. Hence
\[
\|B_k\|_{L^2} = \|f\|_{L^2} \le \frac{(\sqrt{2})^{n^\star-1} }{\sqrt{(\frac{r}{n^\star})^d(m_2+1)^d}}
\prod_{\substack{1\le j\le n\\0<|\mathbf{y}_j-\mathbf{y}_k|_{1} \leq \frac{\pi n^\star}{rD}}}
\frac{\pi n^\star}{rD |\mathbf{y}_j-\mathbf{y}_k|_{1}}.
\]

Define $F_k(\bm{\omega}) := G_k(\bm{\omega}) B_k(\bm{\omega})$. By construction, $F_k$ satisfies the Lagrange interpolation property at the nodes $\frac{\mathbf{y}_j}{2\pi}, 1\leq j\leq n$. Besides,  $F_k \in \mathcal{P}(m, r, D, d)$.
We estimate the $L^2$ norm of $F_k$, i.e.,
\[
\|F_k\|_{L^2} \le  \|G_k\|_{L^\infty} \|B_k\|_{L^2} \le   \frac{\left( 2 - e^{1/(2\beta)} \right)^{-n^\star/2}(\sqrt{2})^{n^\star-1} }{\sqrt{(\frac{r}{n^\star})^d(m_2+1)^d}}
\prod_{\substack{1\le j\le n\\0<|\mathbf{y}_j-\mathbf{y}_k|_{1} \leq \frac{\pi n^\star}{rD}}}
\frac{\pi n^\star}{rD |\mathbf{y}_j-\mathbf{y}_k|_{1}}.
\]
Using the duality principle $\sigma_{\min} \ge \frac{1}{\sqrt{n} \max_k \|F_k\|_{L^2}}$ in Lemma~\ref{lem:minsvd_bound_by_lagInterp_high_dim}, we obtain
\[
    \sigma_{\min} \ge \frac{1}{\sqrt{n}} \left( 2 - e^{1/(2\beta)} \right)^{n^\star/2}\frac{\sqrt{(\frac{r}{n^\star})^d(m_2+1)^d} }{(\sqrt{2})^{n^\star-1}}\min_{1\leq k\leq n}
    \left\{
    \prod_{\substack{1\le j\le n\\0<|\mathbf{y}_j-\mathbf{y}_k|_{1} \leq \frac{\pi n^\star}{rD}}}
    \frac{rD}{\pi n^\star } |\mathbf{y}_j-\mathbf{y}_k|_{1}
    \right\}.
\]
    By the local geometry assumption, we have
    \begin{align*}
        \prod_{\substack{1\le j\le n\\0<|\mathbf{y}_j-\mathbf{y}_k|_{1} \leq \frac{\pi n^\star}{rD}}}
        \frac{rD}{\pi n^\star } |\mathbf{y}_j-\mathbf{y}_k|_{1}
        \ge \left(\frac{rD\delta}{\pi n^\star}\right)^{q_k}
        \ge \left(\frac{rD\delta}{\pi n^\star}\right)^{n^\star-1},
    \end{align*}
    which completes the proof.
\end{proof}

\bibliographystyle{plain}
\bibliography{reference_final}
\end{document}